\documentclass[submission,copyright,creativecommons]{eptcs}
\providecommand{\event}{AFL 2026} 

\usepackage{iftex, amsthm, amsmath, amsfonts, enumerate, circuitikz, adjustbox, todonotes, breakurl, amssymb, tikz-cd, caption}

\DeclareMathOperator{\symb}{symb}
\DeclareMathOperator{\pref}{pref}
\DeclareMathOperator{\suff}{suff}
\DeclareMathOperator{\rank}{rank}
\newtheorem{theorem}{Theorem}
\newtheorem{proposition}{Proposition}

\newtheorem{corollary}{Corollary}
\newtheorem{definition}{Definition}
\newtheorem{example}{Example}

\title{Irreducibility of Endomorphisms of Finitely Generated Free Semigroups}
\author{Paul C. Bell
\institute{School of Computer Science and Mathematics \\ Liverpool John Moores University \\ Liverpool, UK}
\email{p.c.bell@ljmu.ac.uk}
\and
Eva Foster
\institute{School of Computing and Mathematical Sciences \\ Birkbeck, University of London\\ London, UK}
\email{\quad efoste07@student.bbk.ac.uk}
\and
Daniel Reidenbach
\institute{School of Computing and Mathematical Sciences \\ Birkbeck, University of London\\ London, UK}
\email{\quad d.reidenbach@bbk.ac.uk}
}
\def\titlerunning{Irreducibility of Endomorphisms of Finitely Generated Free Semigroups}
\def\authorrunning{P.~C.~Bell, E.~Foster \& D.~Reidenbach}
\begin{document}
\maketitle
\begin{abstract}
   We introduce and investigate the irreducibility of endomorphisms of finitely generated free semigroups, i.e., we investigate when an endomorphism $\varphi: \Sigma^+ \to \Sigma^+$, where $\Sigma$ is any alphabet, can be nontrivially expressed as a composition $\varphi = \psi_2 \circ \psi_1$ of endomorphisms $\psi_1, \psi_2: \Sigma^+ \to \Sigma^+$. We, hence, study a notion of primality in the endomorphism monoid of the free semigroup---a natural and fundamental concept in this algebraic structure. We establish that irreducibility is a nontrivial property for the class of so-called rank-preserving endomorphisms, and we provide a characteristic condition separating the reducible and irreducible endomorphisms. We also characterise when an endomorphism is a factor of another endomorphism, analyse the non-uniqueness of factorisations of a rank-preserving endomorphism into its irreducible components, and investigate the use of incidence matrices to give insights into the (ir-)reducibility of rank-preserving endomorphisms. 
\end{abstract}

\section{Introduction}
This paper investigates the composition $\psi_2\circ \psi_1$ of two endomorphisms $\psi_1, \psi_2:\Sigma^+\rightarrow \Sigma^+$, defined as $(\psi_2\circ \psi_1)(w)=\psi_2(\psi_1(w))$ for all words $w\in\Sigma^+$. Our primary research question is whether a given endomorphism $\varphi$ can be nontrivially \emph{factorised}, i.\,e., whether there exist endomorphisms $\psi_1, \psi_2$ satisfying $\varphi:\Sigma^+\rightarrow\Sigma^+$. As can be easily demonstrated, this question is particularly interesting for what we refer to as the \emph{rank-preserving} endomorphisms, that is, endomorphisms $\varphi:\Sigma^+\rightarrow\Sigma^+$ for which there exists no set $V\subset\Sigma^+$ with $|V|<|\Sigma|$ such that $\varphi(w)\in V^+$ for all $w\in\Sigma^+$. A rank-preserving endomorphism $\varphi$ is \emph{irreducible} if the equality $\varphi=\psi_2\circ \psi_1$, for rank-preserving endomorphisms $\psi_1, \psi_2$, implies that $\psi_1$ or $\psi_2$ is an automorphism, i.\,e., a renaming. Irreducibility provides a natural notion of primality in the endomorphism monoid of the free semigroup, similar to irreducibility in other algebraic structures.\par
Composition and decomposition of morphisms have been studied in various forms. Ehrenfeucht and Rozenberg~\cite{Ehr78} explore the notion of \emph{simplifiable} and \emph{elementary} homomorphisms, where a morphism $\varphi:\Sigma_1^+\rightarrow\Sigma_3^+$ is simplifiable if it can be expressed as $\varphi=\psi_2\circ \psi_1$ where $\psi_1:\Sigma_1^+\rightarrow\Sigma_2^+$, $\psi_2:\Sigma_2^+\rightarrow\Sigma_3^+$ and $|\Sigma_2|<|\Sigma_1|$. Their work proves that it is decidable to determine if a given morphism is elementary, but does not investigate decompositions without an alphabet reduction (in particular, we note that the rank-reducing endomorphisms are exactly the simplifiable morphisms), nor does it investigate the non-uniqueness of factorisations. Similarly, Berstel, Perrin and Reutenauer~\cite{Reu09} investigate the notion of \emph{indecomposable codes}. In contrast to Ehrenfeucht and Rozenberg, their theory admits an equivalent to prime factorisation under composition, though similarly they do not specify that all alphabets considered must coincide.\par
In the present paper, our restriction to rank-preserving endomorphisms means that the domain, range and all intermediate alphabets in a chain of compositions of endomorphisms coincide, allowing us to consider these chains of compositions more naturally. We answer key questions in this setting and provide, among other things, characterisations of both irreducible rank-preserving endomorphisms under this constraint and when an endomorphism is a factor of another endomorphism, and an exploration of when complete factorisations of endomorphisms into irreducible components are unique.\par
The paper is organised as follows. Section \ref{sec:preliminaries} establishes notation and preliminary results. Section \ref{sec:observations} characterises irreducible endomorphisms and when an endomorphism is a factor of another. Section \ref{sec:uniqueness} explores the (non-)uniqueness of factorisations of endomorphisms, and Section \ref{sec:Incidence matrices} uses incidence matrices to explore the (ir-)reducibility of endomorphisms further.\par
\section{Preliminaries}\label{sec:preliminaries}
By $\mathbb{N}$, we denote the set of all positive integers $\mathbb{N}=\{1, 2, \dots\}$. We define $\mathbb{N}_0=\mathbb{N}\cup \{0\}$. The symbol $\subset$ denotes the \emph{proper subset} relation and $\subseteq$ denotes the \emph{subset} relation. An \emph{alphabet} $\Sigma$ is a non-empty set of symbols. A word (over $\Sigma$) is a finite sequence of symbols from $\Sigma$. The \emph{concatenation} of two words $u$ and $v$, denoted $u\cdot v$, or $uv$ for short, is the word formed by writing the symbols of $u$ followed by the symbols of $v$. The size of a set $A$ is denoted by $|A|$ and the length of a word $w$ by $|w|$. We denote the \emph{empty word}, i.\,e., the unique word of length 0, by $\varepsilon$. The notation $\Sigma^+$ refers to the set of all non-empty words over $\Sigma$, or more precisely, the \emph{free semigroup} generated by $\Sigma$, and $\Sigma^*=\Sigma^+\cup\{\varepsilon\}$, or the \emph{free monoid} generated by $\Sigma$. The result of an $n$-fold concatenation of a word $w$ is denoted by $w^n$, and we call $w^n$ \emph{periodic}, with a period $|w|$. We call a word $v\in\Sigma^*$ a \emph{factor} of a word $w\in\Sigma^*$ if, for some $u_1, u_2 \in \Sigma^*$, we have $w=u_1vu_2$. For a word $w\in\Sigma^*$, a \emph{prefix} $u\in\Sigma^*$ is a word satisfying $w=uv$, for some $v\in\Sigma^*$. Similarly, we call $v\in\Sigma^*$ a \emph{suffix} of $w$ if there exists a $u\in\Sigma^*$ such that $w=uv$. The notations $\pref_n(w)$ and $\suff_m(w)$ refer to the prefix of length $n$ and the suffix of length $m$ of a word $w$ respectively. For any words $v, w\in\Sigma^*$, the notation $|w|_v$ stands for the number of (possibly overlapping) occurrences of $v$ in $w$. For an alphabet $\Sigma=\{a_1, a_2, \dots, a_n\}$, the \emph{Parikh vector} of a word $w\in\Sigma^*$, denoted $p(w)$, is the vector $p(w)=(|w|_{a_1}, |w|_{a_2}, \dots, |w|_{a_n})\in\mathbb{N}_0^n$. The notation $\symb(w)$ represents the set of symbols in the word $w$, with $\symb(w)\subseteq\Sigma$. For a set $W=\{w_1, w_2, \dots, w_n\}\subset\Sigma^+$, we define $\symb(W)=\bigcup_{1\leq i \leq n}\symb(w_i)$. Given a word $w\in\Sigma^*$ that satisfies $w=u_1u_2\cdots u_{n-1}u_n$ with all $u_i\in\Sigma$, the \emph{reversal} of $w$, denoted $w^r$, is defined as $w^r=u_nu_{n-1}\cdots u_2u_1$. Given a non-empty set of words $W\subseteq \Sigma^*$, we call $W$ a \emph{code} if, for any two words $x_1x_2\cdots x_n, y_1y_2\cdots y_m\in W^+$, the equality $x_1x_2\cdots x_n=y_1y_2\cdots y_n$ implies that $n=m$ and $x_i=y_i$ for all $i, 1\leq i \leq n$.\par
A \emph{(semigroup) (homo-)morphism} is a mapping compatible with concatenation, i.\,e., for alphabets $\Sigma_1, \Sigma_2$, we define $\varphi:\Sigma_1^+\rightarrow\Sigma_2^+$ to be a morphism if it satisfies $\varphi(u \cdot v)=\varphi(u)\cdot\varphi(v)$ for all words $u,v\in\Sigma_1^+$---hence a morphism is fully defined as soon as it is declared for all symbols in $\Sigma_1$. A morphism $\varphi:\Sigma_1^+\rightarrow\Sigma_2^+$ is an \emph{endomorphism} if $\Sigma_1=\Sigma_2$, and we write $\varphi:\Sigma^+\rightarrow\Sigma^+$.
The \emph{identity} morphism $\text{id}:\Sigma^+\rightarrow\Sigma^+$ satisfies $\text{id}(a_i)=a_i$ for all $a_i\in\Sigma$. A morphism $\varphi:\Sigma_1^+\rightarrow\Sigma_2^+$ is \emph{injective} if, for all $u, v\in\Sigma_1^+$, the equality $\varphi(u)=\varphi(v)$ implies $u=v$; equivalently, the set $W=\{\varphi(a_i)\mid 1\leq i \leq n\}$ is a code. A morphism $\varphi:\Sigma_1^+\rightarrow\Sigma_2^+$ is a \emph{renaming} if $\varphi$ is injective, and $|\varphi(a_i)|=1$ for every $a_i\in\Sigma_1$. An endomorphism $\varphi:\Sigma^+\rightarrow\Sigma^+$ is an \emph{automorphism} if it is a renaming. For the \emph{composition} of two endomorphisms $\psi_1, \psi_2:\Sigma^+\rightarrow\Sigma^+$, we write $\psi_2\circ \psi_1$, i.\,e., for every $w\in\Sigma^+$, we have $(\psi_2\circ \psi_1)(w)=\psi_2(\psi_1(w))$. If $\varphi=\psi_2\circ \psi_1$, for endomorphisms $\varphi, \psi_1, \psi_2:\Sigma^+\rightarrow \Sigma^+$, we call $\psi_2\circ \psi_1$ a \emph{factorisation} or \emph{decomposition} of $\varphi$. For endomorphisms $\varphi, \mu:\Sigma^+\rightarrow\Sigma^+$, we say that $\mu$ is a \emph{factor} of $\varphi$ if there exist endomorphisms $\psi_1, \psi_2:\Sigma^+\rightarrow\Sigma^+$ satisfying $\varphi=\psi_2 \circ \mu \circ \psi_1$. We say that $\mu$ is a \emph{left-} or \emph{right-factor} of $\varphi$ if there exist endomorphisms $\psi_1, \psi_2:\Sigma^+\rightarrow\Sigma^+$ satisfying $\varphi=\mu\circ \psi_1$ or $\varphi=\psi_2 \circ \mu$ respectively.  For an endomorphism $\varphi:\Sigma^+\rightarrow\Sigma^+$, we define the \emph{reversal} of $\varphi$, denoted $\varphi^r$, as $\varphi^r(a_i)=(\varphi(a_i))^r$ for all $a_i\in\Sigma$.
\section{Observations and Fundamental Characterisations}\label{sec:observations}
In this section, we aim to define formally and characterise reducibility for a rank-preserving endomorphism.
Unlike the factorisation of integers into prime factors, the decomposition of a rank-preserving endomorphism into irreducible factors is not unique. We establish this and other fundamental properties, and provide illustrative examples. \par
It is immediate that all morphisms $\varphi$ can be expressed as  $\varphi=\psi_2\circ \psi_1$ if either $\psi_1$ or $\psi_2$ are automorphisms, and we call such decompositions \emph{trivial}. We begin by showing that morphisms that are not endomorphisms yield no interesting decomposition phenomena.
\begin{proposition}\label{prop: alphabet increase reducible}
    Let $\varphi:\Sigma_1^+\rightarrow\Sigma_2^+$ be a morphism for alphabets $\Sigma_1, \Sigma_2$ satisfying $\Sigma_1\subset\Sigma_2$, and $\Sigma_2=\symb(\{\varphi(a_i) \mid a_i\in\Sigma_1\})$. If $\varphi$ does not satisfy $|\varphi(a_j)|=2$ for exactly one $a_j\in\Sigma_1$ and $|\varphi(a_i)|=1$ for all $a_i\in\Sigma_1\setminus\{a_j\}$, then, there exist an alphabet $\Sigma_3$ satisfying $\Sigma_1\subset\Sigma_3\subseteq\Sigma_2$, a morphism $\psi_1:\Sigma_1^+\rightarrow\Sigma_3^+$ and a non-renaming $\psi_2:\Sigma_3^+\rightarrow\Sigma_2^+$ such that $\varphi=\psi_2\circ\psi_1.$
\end{proposition}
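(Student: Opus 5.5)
The plan is an explicit construction that splits one image $\varphi(a_j)$ into two pieces using a fresh letter. First I use the hypotheses to show that $\varphi$ has enough total image length. Since $\Sigma_2=\symb(\{\varphi(a_i)\mid a_i\in\Sigma_1\})$, the total length $L=\sum_{a_i\in\Sigma_1}|\varphi(a_i)|$ satisfies $L\geq|\Sigma_2|>|\Sigma_1|$. So at least one $a_j\in\Sigma_1$ has $|\varphi(a_j)|\geq 2$.

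If $L=|\Sigma_1|+1$, then exactly one image has length $2$ and all others have length $1$. This is precisely the excluded configuration. Hence under the hypothesis $L\geq|\Sigma_1|+2$. This splits into two cases:
\begin{itemize}
\item[(A)] some $a_j$ has $|\varphi(a_j)|\geq 3$;
\item[(B)] two distinct letters $a_j\neq a_k$ satisfy $|\varphi(a_j)|,|\varphi(a_k)|\geq 2$.
\end{itemize}

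Next I fix a letter $b\in\Sigma_2\setminus\Sigma_1$, which exists because $\Sigma_1\subset\Sigma_2$, and set $\Sigma_3=\Sigma_1\cup\{b\}$. Then $\Sigma_1\subset\Sigma_3\subseteq\Sigma_2$. I define $\psi_1:\Sigma_1^+\rightarrow\Sigma_3^+$ by $\psi_1(a_j)=a_jb$ and $\psi_1(a_i)=a_i$ for all $a_i\neq a_j$. Writing $\varphi(a_j)=xw$ with $x\in\Sigma_2$ and $w\in\Sigma_2^+$, I define $\psi_2$ by
\[
\psi_2(a_j)=x,\qquad \psi_2(b)=w,\qquad \psi_2(a_i)=\varphi(a_i)\ \text{for all } a_i\neq a_j .
\]
A routine check on generators gives $\psi_2(\psi_1(a_j))=xw=\varphi(a_j)$ and $\psi_2(\psi_1(a_i))=\varphi(a_i)$ for $a_i\neq a_j$. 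Therefore $\varphi=\psi_2\circ\psi_1$.

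It remains to see that $\psi_2$ is not a renaming. In case (A), $|\psi_2(b)|=|w|\geq 2$. In case (B), $|\psi_2(a_k)|=|\varphi(a_k)|\geq 2$. Either way $\psi_2$ is not length-preserving on letters. Note also that $\psi_1$ is itself not a renaming, since $|\psi_1(a_j)|=2$. So the decomposition is nontrivial on both sides, which is the natural reason the excluded case must be excluded.

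I expect the main obstacle to be interpretive rather than technical. Taken literally, the statement only asks for $\psi_2$ to be a non-renaming. That already follows from the cheaper map $\psi_1(a_j)=b$, $\psi_2(b)=\varphi(a_j)$, which works even in the excluded case. So I would present the two-sided construction above and remark that it also makes $\psi_1$ non-renaming. This is the only place where the exclusion of "exactly one image of length $2$" is genuinely needed: there $L=|\Sigma_1|+1$ leaves no room to split a letter while keeping both factors non-renaming.
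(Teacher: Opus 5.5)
Your proof is correct and is essentially the paper's construction: one fresh letter $b\in\Sigma_2\setminus\Sigma_1$, $\psi_1(a_j)=a_jb$, and $\psi_2$ splitting $\varphi(a_j)$ into two pieces. The paper splits at the midpoint $\lfloor|\varphi(a_j)|/2\rfloor$ rather than after the first letter, and your length count $L\geq|\Sigma_1|+2$ makes explicit the step the paper dismisses with ``otherwise, $\psi_2$ is not a renaming''. Your remark that the literal statement only constrains $\psi_2$, so that the exclusion matters only if $\psi_1$ must also be a non-renaming (as the paper's proof implicitly intends), is accurate.
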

\begin{proof}
    Let $\Sigma_1=\{a_1, a_2, \dots, a_n\}$ and $\Sigma_3=\Sigma_1\cup \{a_{n+1}\}$ be alphabets. Since $\Sigma_1\subset \Sigma_3$, we have $|\Sigma_3|\geq n+1$ and let $a_{n+1}\in\Sigma_3\setminus\Sigma_1$. Since $\Sigma_3\subseteq \Sigma_2$, then $a_{n+1}\in\Sigma_2$. Suppose $|\varphi(a_i)|=1$ for all $a_i\in\Sigma_1$. Then $|\symb(\{\varphi(a_i)\mid a_i\in \Sigma_1\})|\leq n$, contradicting that every symbol of $\Sigma_2$ is used. Therefore, there exists a symbol $a_j\in\Sigma_1$ satisfying $|\varphi(a_j)|\geq 2$. Let $\psi_1:\Sigma_1^+\rightarrow\Sigma_3^+$ be the morphism satisfying $\psi_1(a_j)=a_ja_{n+1}$ and $\psi_1(a_i)=a_i$ for all $a_i\in\Sigma_1\setminus\{a_j\}$. Note that $\psi_1$ is not a renaming. Let $\psi_2:\Sigma_3^+\rightarrow\Sigma_2^+$ be the morphism satisfying $\psi_2(a_i)=\varphi(a_i)$ for all $i\neq j$ and $\psi_2(a_j)=\pref_{m}(\varphi(a_j))$ for $m=\Big\lfloor \frac{|\varphi(a_j)|}{2}\Big\rfloor$ and $\psi_2(a_{n+1})=\suff_{|\varphi(a_j)|-m}(\varphi(a_j))$. As $|\varphi(a_j)|\geq 2$, we note that $\psi_2(a_j)$ and $\psi_2(a_{n+1})$ are well defined. If $|\varphi(a_j)|=2$ and $|\varphi(a_i)|=1$ for all $a_i\in\Sigma_1\setminus\{a_j\}$, then $\psi_2$ is a renaming. Otherwise, the morphism $\psi_2$ is not a renaming and we have found a nontrivial decomposition $\varphi=\psi_2\circ \psi_1$, as required.
\end{proof}
Therefore, we consider only endomorphisms from this point onwards. Our next restriction is to consider \emph{rank-preserving} endomorphisms. A rank-preserving endomorphism cannot be factored through a smaller alphabet than the one it is defined over. 
\begin{definition}\label{def:rank-preserving}
    Let $\varphi:\Sigma^+\rightarrow\Sigma^+$ be an endomorphism for an alphabet $\Sigma$. We call $\varphi$ \emph{rank-preserving} if there is no set $V\subset \Sigma^+$ with $|V|<|\Sigma|$ such that $\varphi(w)\in V^+$ for all $w\in \Sigma^+$. Otherwise, we call $\varphi$ \emph{rank-reducing}.
\end{definition}
By the Defect Theorem \cite{Har04}, non-injective endomorphisms are rank-reducing, since if the set of morphic images of all $a_i\in \Sigma$ is not a code, they can be generated by a set of fewer than $|\Sigma|$ words. The following examples show that injective endomorphisms may still be rank-reducing.
\begin{example}
    Let $\Sigma=\{a, b, c\}$ be an alphabet and $\varphi:\Sigma^+\rightarrow\Sigma^+$ be the endomorphism satisfying $\varphi(a)=aabb, \varphi(b)=baab, \varphi(c)=a$. Note that no word $u\in\{aabb, baab, a\}$ is a suffix of another, so the set $\{aabb, baab, a\}$ is a (suffix-free) code, and $\varphi$ is injective. For all $w\in \Sigma^+$, the set $V=\{a,b\}$ satisfies $\varphi(w)\in V^+$, and so $\varphi$ is rank-reducing. \hfill $\Diamond$
\end{example}
\begin{example}
    Let $\Sigma=\{a, b, c\}$ be an alphabet and $\varphi:\Sigma^+\rightarrow\Sigma^+$ be the endomorphism satisfying $\varphi(a)=ababc, \varphi(b)=abcab, \varphi(c)=ccc$. We may see that $\{ababc, abcab, ccc\}$ is a prefix-free code, therefore $\varphi$ is injective. For all $w\in\Sigma^+$, the set $V=\{ab, c\}$ satisfies $\varphi(w)\in V^+$, and so $\varphi$ is rank-reducing. \hfill $\Diamond$
\end{example}
Many of the examples given in this paper are endomorphisms over a binary alphabet. To verify that such an endomorphism $\varphi:\{a,b\}^+\rightarrow\{a,b\}^+$ is rank-preserving, it is sufficient to  confirm that $\varphi(a)^n\neq \varphi(b)^m$ for all $n,m\in\mathbb{N}$.\par
We can see that rank-reducing endomorphisms are exactly the \emph{simplifiable} morphisms given in \cite{Ehr78}. From this, we can conclude that rank-reducing endomorphisms have a nontrivial factorisation. The notion of simplifiability is shown to be NP-complete in \cite{Ner90}, and therefore the notion of elementariness is shown to be co-NP-complete. Our theory differs from that of simplifiable and elementary morphisms, as we consider the decomposition of \emph{elementary} endomorphisms. We have defined ``rank-preserving'' as an equivalent notion to ``elementary'' to avoid confusion. We now define the notion of reducibility of rank-preserving endomorphisms.
\begin{definition}
    Let $\varphi:\Sigma^+\rightarrow\Sigma^+$ be a rank-preserving endomorphism for an alphabet $\Sigma$. Then $\varphi$ is \emph{reducible} if there exist rank-preserving endomorphisms $\psi_1, \psi_2:\Sigma^+\rightarrow\Sigma^+$ that are not automorphisms such that $\varphi=\psi_2\circ \psi_1$. A rank-preserving endomorphism $\varphi$ is \emph{irreducible} if it is not reducible.
\end{definition}
The following example shows instances of reducible and irreducible endomorphisms.
\begin{example}\label{ex:first irreducible}
    Let $\varphi_1, \varphi_2:\Sigma^+\rightarrow\Sigma^+$ be rank-preserving endomorphisms for $\Sigma=\{a,b\}$, where $\varphi_1(a)=a, \varphi_1(b)=aabb$ and $\varphi_2(a)=a, \varphi_2(b)=babb$. Then, for endomorphisms $\psi_1,\psi_2:\Sigma^+\rightarrow\Sigma^+$ where $\psi_1(a)=a, \psi_1(b)=ab, \psi_2(a)=a$, and $ \psi_2(b)=abb$, it follows that $\varphi_1=\psi_2\circ \psi_1$. It can be easily verified that $\psi_1$ and $\psi_2$ are both rank-preserving and not automorphisms, so $\varphi_1$ is reducible. Using an exhaustive search seen in Appendix \ref{Appendix: sieve}, we can establish that $\varphi_2$ is an irreducible endomorphism. \hfill $\Diamond$
\end{example}
Clearly, verifying irreducibility by exhaustive search is inefficient. We therefore seek a more direct characterisation. Our approach is motivated by the observation that the  structure of an endomorphism $\varphi:\Sigma^+\rightarrow\Sigma^+$ is reflected in $\varphi(w)$, for any word $w\in\Sigma^+$. As an endomorphism is defined by a set of words, each corresponding to the morphic image of an $a_i\in\Sigma$, it is apparent that, for endomorphisms $\psi_1, \psi_2:\Sigma^+\rightarrow\Sigma^+$, satisfying $\varphi=\psi_2\circ \psi_1$, each of the images $\varphi(a_i)$ must reflect the structure of both $\psi_1$ and $\psi_2$. To do this, we introduce the notion of a \emph{factor basis}.
\begin{definition}\label{def:factor basis}
    Let $\Sigma=\{a_1, a_2, \dots, a_n\}$ be an alphabet and let $W\subset \Sigma^+$. Let the set $V\subset\Sigma^+$ satisfy  $|V|\leq |\Sigma|$ and $W\subseteq V^+$. We call such a $V$ a \emph{factor basis of $W$}.\par
    Similarly, given an endomorphism $\varphi:\Sigma^+\rightarrow\Sigma^+$ with $W=\{\varphi(a_i) \mid 1 \leq i \leq n \}$, we call any factor basis $V$ of $W$ a \emph{factor basis of $\varphi$}. We call a factor basis $V$ of $W$ trivial if $V=W$ or $V=\Sigma$. Otherwise, a factor basis is nontrivial.
\end{definition}
It is immediate that every endomorphism $\varphi:\Sigma^+\rightarrow\Sigma^+$ admits the factor bases $V=W$ and $V=\Sigma$, though these may coincide. The factor bases are exactly the sets $V$ referred to in Definition \ref{def:rank-preserving}, and so we may restate this definition in terms of factor bases.
\addtocounter{definition}{-3}
\begin{definition}{(Restated)}
    Let $\varphi:\Sigma^+\rightarrow\Sigma^+$ be an endomorphism for an alphabet $\Sigma$. We call $\varphi$ \emph{rank-preserving} if and only if every factor basis $V$ of $\varphi$ satisfies $|V|=|\Sigma|$.
\end{definition}
\addtocounter{definition}{2}
We can also see that rank-preserving endomorphisms only admit factor bases that are codes, otherwise, by the Defect Theorem \cite{Har04}, there exists a factor basis $V'$ with $|V'|<|V|=|\Sigma|$.\par
As a corollary of the considerations made in \cite{Ner90}, we may restate the decision problem of rank-preserving and rank-reducing in terms of factor bases.
\begin{corollary}{\cite{Ner90}}
    Let $\varphi:\Sigma^+\rightarrow\Sigma^+$ be an endomorphism over an alphabet $\Sigma$.
    \begin{itemize}
        \item Determining if there exists a factor basis of $\varphi$ with cardinality no more than $|\Sigma|-1$ is NP-complete.
        \item Determining whether every factor basis of $\varphi$ has cardinality $|\Sigma|$ is co-NP-complete.
    \end{itemize}\par
\end{corollary}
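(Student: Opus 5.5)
The corollary follows from two facts. First, the property ``there exists a factor basis of $\varphi$ with cardinality at most $|\Sigma|-1$'' is exactly the negation of Definition~\ref{def:rank-preserving}, i.e.\ it says that $\varphi$ is rank-reducing. Second, \cite{Ner90} shows that deciding this property is NP-complete. The second bullet is then the complement of the first. I will first verify the translation between the two formulations, and then transfer the complexity results.

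\textbf{Step 1: the two notions coincide.} Let $W=\{\varphi(a_i)\mid a_i\in\Sigma\}$. Every word $\varphi(w)$ with $w\in\Sigma^+$ is a concatenation of elements of $W$, so for any $V$ we have
\[
W\subseteq V^+ \quad\Longleftrightarrow\quad \varphi(w)\in V^+ \text{ for all } w\in\Sigma^+ .
\]
Hence a set $V$ with $|V|\le|\Sigma|-1$ and $W\subseteq V^+$ is precisely a set witnessing rank-reduction in Definition~\ref{def:rank-preserving}. So $\varphi$ has a factor basis of cardinality at most $|\Sigma|-1$ if and only if $\varphi$ is rank-reducing. As noted before the statement, rank-reducing endomorphisms are exactly the simplifiable morphisms of \cite{Ehr78}, read with $\Sigma_1=\Sigma_3=\Sigma$. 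The link between factorisations and factor bases is as follows.
\begin{itemize}
\item Given $\varphi=\psi_2\circ\psi_1$ through an alphabet $\Sigma_2$ with $|\Sigma_2|<|\Sigma|$, the set $V=\{\psi_2(b)\mid b\in\Sigma_2\}$ is a factor basis of size at most $|\Sigma|-1$.
\item Conversely, given such a factor basis $V$, choose a fresh alphabet $\Sigma_2$ in bijection with $V$. Let $\psi_2$ send each letter of $\Sigma_2$ to its element of $V$. Let $\psi_1$ send each $a_i$ to some preimage of $\varphi(a_i)$ under $\psi_2$; a preimage exists because $\varphi(a_i)\in V^+$.
\end{itemize}
So the first bullet is the decision problem ``is $\varphi$ simplifiable?'', restricted to endomorphisms.

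\textbf{Step 2: transfer NP-completeness.} Membership in NP is direct. Any element of a minimal factor basis can be taken to be a factor of some $\varphi(a_i)$, so a certificate has polynomial size. Checking $W\subseteq V^+$ is a polynomial-time parsing problem, done by dynamic programming over the prefixes of each $\varphi(a_i)$.

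For hardness, I would check that the NP-hardness construction of \cite{Ner90} produces, or can be padded into, a morphism whose source alphabet equals its target alphabet. If the reduction yields $\varphi:\Sigma_1^+\to\Sigma_2^+$ with $|\Sigma_1|<|\Sigma_2|$, I would add $|\Sigma_2|-|\Sigma_1|$ fresh source letters. Each fresh letter is mapped to a new target letter that does not occur anywhere else, so the two alphabets coincide. Any factor basis must then contain each of these new single letters. Consequently the achievable size of a factor basis grows by exactly the number of added letters, and the answer to the decision problem is preserved.

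If instead the construction yields $|\Sigma_1|>|\Sigma_2|$, I would add fresh target letters that occur in no image. This does not change $W$. Some care is needed because $\Sigma$ must contain the target alphabet, but since the images use only the original target letters this is harmless.

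\textbf{Step 3: the complement.} The second bullet is the complement of the first. By Step~1, ``every factor basis has cardinality $|\Sigma|$'' is equivalent to ``no factor basis has cardinality at most $|\Sigma|-1$''. Here I also use that $\Sigma$ itself is always a factor basis and that every factor basis has cardinality at most $|\Sigma|$ by definition. The complement of an NP-complete problem is co-NP-complete, which gives the second bullet.

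The main obstacle is Step~2. One has to confirm that the reduction in \cite{Ner90} either already works with equal source and target alphabets, or survives the padding described above. If padding were problematic, I would fall back on the paper's stated identification of rank-reducing endomorphisms with simplifiable morphisms, and cite that identification directly.
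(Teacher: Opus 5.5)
Your Steps 1 and 3 are essentially the paper's whole argument. The paper gives no proof beyond noting that factor bases are exactly the sets $V$ in the definition of rank-preserving, that rank-reducing endomorphisms are the simplifiable morphisms, and then citing N\'eraud. Your NP-membership argument (restrict to factors of the images, then parse by dynamic programming) is also fine. The problem is Step 2, the transfer of hardness, which fails as written.

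\textbf{The padding does not equalise the alphabets.} If each fresh source letter is sent to a fresh target letter, both alphabets grow by one. So $|\Sigma_2|-|\Sigma_1|$ never changes and the alphabets never coincide. If instead you meant letters of $\Sigma_2$ that are unused by the images, that only discards letters that do not occur, which is not the real difficulty.

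\textbf{The real constraint.} For an endomorphism with image set $W$, the only nontrivial instances have $|W|=|\symb(W)|=|\Sigma|$. Otherwise $W$ or $\symb(W)$ is already a factor basis of size less than $|\Sigma|$. So you need N\'eraud's problem to remain NP-hard on sets $X$ with $|X|=|\symb(X)|$. Starting from an instance with $|X|<|\symb(X)|$, you must add more words than letters while preserving elementariness, and that is delicate:
\begin{itemize}
\item Adding old letters can change the answer. The set $X=\{abc,bca\}$ is elementary, since the two words do not commute. Yet adding the missing third image $a$ gives $\{abc,bca,a\}\subseteq\{a,bc\}^+$, so $a\mapsto abc$, $b\mapsto bca$, $c\mapsto a$ is rank-reducing.
\item A gadget on disjoint fresh letters cannot work either. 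Rank is additive over disjoint alphabets, and any set with more words than letters is itself simplifiable. Such a gadget would therefore make every instance a yes-instance.
\end{itemize}
A correct transfer needs either a direct inspection of N\'eraud's construction or a genuinely new gadget.

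\textbf{The fallback does not supply hardness.} Citing the identification of rank-reducing endomorphisms with simplifiable morphisms only shows that your problem is a special case of simplifiability. That yields membership in NP, but a restriction of an NP-hard problem need not be NP-hard. The paper leaves exactly this step implicit. As it stands, the hardness half of both bullets rests, in your proof as in the paper's, on the unverified claim that N\'eraud's hard instances, or some modification of them, can be realised as endomorphisms with $|W|=|\symb(W)|=|\Sigma|$.
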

We now state the first main result of this section, which characterises the reducibility of rank-preserving endomorphisms.
\begin{theorem}\label{thm:d-p reducible}
    Let $\varphi:\Sigma^+\rightarrow\Sigma^+$ be a rank-preserving endomorphism where $\Sigma=$ is an alphabet with $|\Sigma|>1$. Then $\varphi$ is reducible if and only if there exists a nontrivial factor basis $V$ of $\varphi$.
\end{theorem}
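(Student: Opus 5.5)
We prove the two directions separately. Throughout, let $\Sigma=\{a_1,\dots,a_n\}$ and $W=\{\varphi(a_i)\mid 1\leq i\leq n\}$. We use two facts repeatedly. First, by the restated Definition~\ref{def:rank-preserving}, every factor basis of a rank-preserving endomorphism has exactly $|\Sigma|$ elements and is a code. Second, by the Defect Theorem~\cite{Har04}, every rank-preserving endomorphism is injective.

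\emph{($\Rightarrow$)} Suppose $\varphi=\psi_2\circ\psi_1$, where $\psi_1,\psi_2$ are rank-preserving and neither is an automorphism. Set $V=\{\psi_2(a_i)\mid 1\leq i\leq n\}$. Then $|V|\leq|\Sigma|$. Since $\psi_1(a_i)\in\Sigma^+$, we get $\varphi(a_i)=\psi_2(\psi_1(a_i))\in V^+$, so $V$ is a factor basis of $\varphi$. It remains to show that $V$ is nontrivial.

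If $V=\Sigma$, then every $\psi_2(a_i)$ is a single letter. Since $\psi_2$ is injective, $\psi_2$ is then a renaming, i.e.\ an automorphism, which is a contradiction.

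If $V=W$, then $\psi_2(\Sigma)=\psi_2(\psi_1(\Sigma))$ as sets of words. Injectivity of $\psi_2$ on $\Sigma^+$ turns this into the set equality $\Sigma=\psi_1(\Sigma)$. Hence each $\psi_1(a_i)$ is a letter and $\psi_1$ permutes $\Sigma$, so $\psi_1$ is an automorphism, again a contradiction. I expect this $V=W$ case to be the only step needing care: one must pass from equality of image sets to a statement about $\psi_1$, and this is exactly where injectivity of $\psi_2$ (via the Defect Theorem) is used.

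\emph{($\Leftarrow$)} Let $V$ be a nontrivial factor basis of $\varphi$. By rank-preservation, $|V|=n$ and $V$ is a code. Write $V=\{v_1,\dots,v_n\}$ and define $\psi_2(a_i)=v_i$. Since $W\subseteq V^+$ and $V$ is a code, each $\varphi(a_i)$ factors uniquely as $v_{j_1}\cdots v_{j_k}$. Define $\psi_1(a_i)=a_{j_1}\cdots a_{j_k}$. Then $\psi_2\circ\psi_1=\varphi$ on letters, hence everywhere.

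Neither factor is an automorphism. If $\psi_2$ were one, then $V=\Sigma$. If $\psi_1$ were one, then $\psi_1(\Sigma)=\Sigma$, and so $W=\psi_2(\Sigma)=V$. Both are excluded by nontriviality of $V$.

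Both factors are rank-preserving. Suppose $U$ is a factor basis of $\psi_2$ with $|U|<n$. Then $W\subseteq V^+\subseteq U^+$, so $U$ is a factor basis of $\varphi$ of size less than $n$, a contradiction. Suppose $U$ is a factor basis of $\psi_1$ with $|U|<n$. Then $W=\psi_2(\psi_1(\Sigma))\subseteq \psi_2(U^+)\subseteq(\psi_2(U))^+$ and $|\psi_2(U)|\leq|U|<n$, again contradicting rank-preservation of $\varphi$.

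This gives the required nontrivial decomposition of $\varphi$, so $\varphi$ is reducible.
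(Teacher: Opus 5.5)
Your proof is correct and follows essentially the same route as the paper: $V=\psi_2(\Sigma)$ with the two trivial cases excluded via injectivity (the paper phrases this as unique decoding over the code $V$) for the forward direction, and the same decoding construction of $\psi_1,\psi_2$ for the converse. Your justification that $\psi_1$ is rank-preserving, namely pushing a small factor basis $U$ of $\psi_1$ forward to the small factor basis $\psi_2(U)$ of $\varphi$, is tighter than the paper's appeal to $\bigcup_i\symb(\psi_1(a_i))=\Sigma$; that condition alone does not imply rank-preservation, as the paper's own rank-reducing example with $\varphi(c)=ccc$ shows.
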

\begin{proof}
    Note that any factor basis satisfies $|V|=n$, as $\varphi$ is rank-preserving, and let $\Sigma=\{a_1, a_2, \dots, a_n\}$.\par
    We prove the ``if'' direction first. Assume that such a nontrivial factor basis $V=\{v_1, v_2, \dots, v_n\}$ exists. For each $a_i\in\Sigma$, we have $\varphi(a_i)=v_{i_1}v_{i_2}\cdots v_{i_{j_i}}$ for all $i_{\ell}, 1\leq i_{\ell}\leq n$ and all $j_k\geq 1$. As $V$ is a code, there exists one such factorisation. Then, define $\psi_1(a_i)=a_{i_1}a_{i_2}\cdots a_{i_{j_i}}$, and $\psi_2(a_i)=v_i$ for $1\leq i \leq n$. Thus, $(\psi_2\circ \psi_1)(a_i)=\psi_2(a_{i_1}a_{i_2}\cdots a_{i_{j_i}})=v_{i_1}v_{i_2}\cdots v_{i_{j_i}}=\varphi(a_i)$ as required, and both $\psi_1, \psi_2:\Sigma^+\rightarrow\Sigma^+$ are rank-preserving, and neither $\psi_1$ nor $\psi_2$ is an automorphism. To see this, note that if $\psi_1$ is an automorphism, then $\psi_2(a_i)\in W$ for all $i, 1\leq i \leq n$, and thus $V=W$. Similarly, if $\psi_2$ is an automorphism, then $V=\Sigma$. As $\varphi$ is rank-preserving, there does not exist a factor basis $V'$ of $\varphi$ or $V$ such that $|V'|<|V|$, and so $\psi_2(a_i)=v_i$ is rank-preserving. Similarly, as $\psi_2$ is rank-preserving, there does not exist a factor basis $V''$ of $\psi_2$ such that $|V''|< |\Sigma|$, therefore $\bigcup_{1\leq i \leq n}\symb(\psi_1(a_i))= \Sigma$, and so $\psi_1$ is rank-preserving. Therefore, $\varphi$ is reducible.\par
    We now prove the ``only if'' part of the theorem. Assume that $\varphi$ is reducible, and let $\psi_1, \psi_2:\Sigma^+\rightarrow\Sigma^+$ be rank-preserving endomorphisms satisfying $\varphi=\psi_2\circ \psi_1$, with neither $\psi_1$ nor $\psi_2$ an automorphism. Let $v_i=\psi_2(a_i)$, with $V=\{v_i \mid 1 \leq i \leq n\}$, which by Definition \ref{def:factor basis} is a factor basis. As $\psi_2$ is rank-preserving, we have that $|V|=|\Sigma|$ and $V$ is a code. If $V=\Sigma,$ then $\psi_2$ is an automorphism, contradicting our assumption. If $V=W$, then since $V$ is a code, then we can uniquely decode each $\varphi(a_i)$ with respect to $V$, which must correspond to $\psi_1(a_i)$, since $\varphi=\psi_2\circ \psi_1$. Therefore, and as $\psi_1$ is rank-preserving, each $\psi_1(a_i)$ is a distinct word of length 1 over $\Sigma$, and so $\psi_1$ is an automorphism, contradicting our assumption. Hence, as $V\neq \Sigma$ and $V\neq W$, the factor basis $V$ is nontrivial.
\end{proof}
We now apply Theorem \ref{thm:d-p reducible} to an example morphism.
\begin{example}\label{ex:non unique sec 3}
    Let $\varphi:\{a,b\}^+\rightarrow\{a,b\}^+$ be the rank-preserving endomorphism satisfying $\varphi(a)=ab^3ab^2a$ and $ \varphi(b)=b^2a^2b$. There exists a nontrivial factor basis $V=\{ab, bba\}$ that satisfies the conditions of Theorem \ref{thm:d-p reducible}, and thus $\varphi$ is reducible. We obtain that $\varphi=\psi_2\circ \psi_1$, where $\psi_1(a)=abb, \psi_1(b)=ba, \psi_2(a)=ab$ and $\psi_2(b)=bba$. \hfill $\Diamond$
\end{example}
The following example also illustrates the decision procedure implied by the proof of Theorem \ref{thm:d-p reducible}, and highlights that there may exist multiple nontrivial factor bases for an endomorphism. We can also see that the factor bases given correspond to irreducible rank-preserving endomorphisms.
\begin{example}\label{ex: thm 1 and non-uniquness}
    Let $\varphi:\{a,b\}^+\rightarrow\{a,b\}^+$ be the rank-preserving endomorphism satisfying $\varphi(a)=aba^3ba$ and $ \varphi(b)=a^3(ba)^2$. There exist nontrivial factor bases $V_1=\{a, ab\}$ and $V_2=\{a, ba\}$ that satisfy the conditions of Theorem \ref{thm:d-p reducible}, and so $\varphi$ is reducible. We can also verify that the rank-preserving endomorphisms $\psi_1, \psi_2, \psi_3, \psi_4:\{a,b\}^+\rightarrow\{a,b\}^+$ satisfying $\psi_1(a)=ba^2ba, \psi_1(b)=a^2b^2a, \psi_2(a)=a, \psi_2(b)=ab, \psi_3(a)=aba^2b, \psi_3(b)=a^3b^2, \psi_4(a)=a$ and $\psi_4(b)=ba$, satisfy the equalities $\varphi=\psi_2\circ \psi_1=\psi_4\circ \psi_3$. Note that $\psi_1, \psi_2, \psi_3, \psi_4$ are irreducible, as can be seen using Theorem \ref{thm:d-p reducible}. \hfill $\Diamond$
\end{example}
The next example shows an application of Theorem \ref{thm:d-p reducible} in the case of an irreducible endomorphism.
\begin{example}
    Let $\varphi:\Sigma^+\rightarrow\Sigma^+$ be the rank-preserving endomorphism that, for $\Sigma=\{a,b,c\}$, satisfies $\varphi(a)=aaabcbc, \varphi(b)=acbcc, \varphi(c)=bbabb$. To verify that $\varphi$ is rank preserving, a brute-force search is tedious, so we use a more efficient technique given in Section \ref{sec:Incidence matrices}, and verify that $|P(\varphi)|\neq 0$. We show that $\varphi$ is irreducible by illustrating that there does not exist a nontrivial factor basis $V$ of $\varphi$. Noting that $|\symb(\varphi(c))|=\min\{|\symb(\varphi(a))|, |\symb(\varphi(b))|, |\symb(\varphi(c))|\}$, we choose to begin our considerations with $\varphi(c)$. Informally, this means we consider the image with the `least structure'. We now see that an element $v_1\in V$ must be a non-empty prefix of $\varphi(c)$, so we consider all non-empty prefixes of $\varphi(c)$. If $v_1\in\{bb, bba, bbab, bbabb\}$, then $v_1$ is not a factor of $\varphi(a)$ or $\varphi(b)$, and so it must follow that $\varphi(a), \varphi(b)\in\{v_2, v_3\}^+$. We then consider $v_2\in V$ as a non-empty prefix of $\varphi(b)$. If $v_2\in\{ac, acb, acbc, acbcc\}$, then $v_2$ is not a factor of $\varphi(a)$ or any suffix of $\varphi(c)$.  It then must follow that $\varphi$ admits the trivial factor basis $V_1=\{aaabcbc, acbcc, bbabb\}$. We now consider when $v_1=b$ and $v_2=a$, which means that $v_3=c$. Therefore, the endomorphism $\varphi$ also admits the trivial factor basis $V_2=\{a, b, c\}$. We have shown exhaustively that $\varphi$ only admits trivial factor bases, and by Theorem \ref{thm:d-p reducible}, is irreducible. \hfill $\Diamond$
\end{example}
These examples show us that the factor bases of rank-preserving endomorphisms correspond to a left-factor in its factorisation. In fact, this generalises to all endomorphisms. To prove this, we must first consider the `pre-image' of an endomorphism, given its factor basis. The factor bases of rank-preserving endomorphisms are codes, and thus are uniquely decodable up to a reordering of elements in $V$, so their pre-images consist of exactly the renamings of some word $s$. As the factor bases of rank-reducing endomorphisms can have a smaller cardinality than the alphabet they are defined over, there exists a set of pre-images it may decode to. This motivates the following definition.
\begin{definition}\label{def:tau2}
    Let $\Sigma=\{a_1, a_2, \dots, a_n\}$ be an alphabet, and $V=\{v_1, v_2, \dots, v_m\}\subset \Sigma^+$ be a factor basis of an endomorphism $\varphi:\Sigma^+\rightarrow\Sigma^+$, with $m\leq n$. Choose a surjection $f:\{1, 2, \dots, n\}\rightarrow\{1, 2, \dots, m\}$. For such a surjection, we call the endomorphism $\rho:\Sigma^+\rightarrow\Sigma^+$ given by $\rho(a_i)=v_{f(i)}$ the endomorphism \emph{induced by $V$ and $f$}. For any $w\in\Sigma^+$, we define $\tau_{V}^f(w)$ as the set of words $s\in\Sigma^+$ satisfying $\rho(s)=w$ for $\rho$ induced by $V$ and $f$, noting that this can be empty. We define $\tau_V=\bigcup_{f}\tau_V^f$.
\end{definition}
For a factor basis of a rank-preserving endomorphism, any $f:\{1, 2, \dots, n\}\rightarrow\{1, 2, \dots, n\}$ is a bijection, which allows us to consider a re-ordering of the elements of $V$. The following example illustrates how we can find $\tau_V^f$.
\begin{example}
   Let $\varphi:\{a,b,c\}^+\rightarrow\{a,b,c\}^+$ be the endomorphism satisfying $\varphi(a)=ab, \varphi(b)=ab, \varphi(c)=cbba$. It follows that $\varphi$ admits the factor basis $V=\{ab, cbba\}$. Let $w=abcbbacbbaab$. To consider $\tau_V^f$, we must first choose a surjection $f:\{1, 2, 3\}\rightarrow\{1, 2\}$. We consider three surjections:
    \begin{enumerate}[i)]
        \item If $f_1(1)=1, f_1(2)=1, f_1(3)=2$, then $\tau_V^{f_1}(w)=\{acca, accb, bcca, bccb\}$.
        \item If $f_2(1)=1, f_2(2)=2, f_2(3)=1$, then $\tau_V^{f_2}(w)=\{abba, abbc, cbba, cbbc\}$.
        \item If $f_3(1)=1, f_3(2)=2, f_3(3)=2$, then $\tau_V^{f_3}(w)=\{abba, abca, acba, acca\}$.
    \end{enumerate}
    Note that these surjections are not exhaustive, and the endomorphism $\rho_1:\{a,b,c\}^+\rightarrow\{a,b\}^+$ induced by $f_1$ and $V$ satisfies $\rho_1(a)=ab, \rho_1(b)=ab, \rho_1(c)=cbba$. Furthermore, we can see that $\rho_1=\varphi$. \hfill $\Diamond$
\end{example}
Using this definition, we can show that a factor basis corresponds to a left-factor of any endomorphism.
\begin{proposition}\label{prop:left-factor}
    Let $\varphi:\Sigma^+\rightarrow\Sigma^+$ be an endomorphism for an alphabet $\Sigma=\{a_1, a_2, \dots, a_n\}$, and let $V=\{v_1, v_2, \dots, v_m\}\subset \Sigma^+$ with $m\leq n$. The set $V$ is a factor basis of $\varphi$ if and only if there exist endomorphisms $\psi_1, \psi_2:\Sigma^+\rightarrow\Sigma^+$ with $\varphi=\psi_2\circ \psi_1$ and $V=\{\psi_2(a_i) \mid 1 \leq i \leq n\}$.
\end{proposition}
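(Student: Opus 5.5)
We prove the two directions separately, using the induced endomorphism of Definition \ref{def:tau2} for the ``only if'' direction; the ``if'' direction will be essentially immediate.

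\emph{The ``if'' direction.} Suppose $\varphi=\psi_2\circ\psi_1$ and $V=\{\psi_2(a_i)\mid 1\leq i\leq n\}$. Then $|V|\leq n=|\Sigma|$. For each $a_j\in\Sigma$, write $\psi_1(a_j)=a_{j_1}\cdots a_{j_k}$. This gives $\varphi(a_j)=\psi_2(a_{j_1})\cdots\psi_2(a_{j_k})\in V^+$. Hence $W\subseteq V^+$, and $V$ is a factor basis of $\varphi$ by Definition \ref{def:factor basis}.

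\emph{The ``only if'' direction.} Let $V=\{v_1,\dots,v_m\}$ be a factor basis of $\varphi$, with the $v_k$ pairwise distinct and $m\leq n$.

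1. Choose any surjection $f:\{1,\dots,n\}\to\{1,\dots,m\}$. For instance, take $f(i)=i$ for $i\leq m$ and $f(i)=m$ for $i>m$. This exists because $m\geq 1$: indeed $V$ is nonempty, since $W\subseteq V^+$ and $W$ is nonempty.

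2. Let $\psi_2=\rho$ be the endomorphism induced by $V$ and $f$, so $\psi_2(a_i)=v_{f(i)}$. Since $f$ is surjective, $\{\psi_2(a_i)\mid 1\leq i\leq n\}=V$ exactly.

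3. Define $\psi_1$ on each letter. Since $\varphi(a_i)\in V^+$, there is at least one factorisation $\varphi(a_i)=v_{k_1}\cdots v_{k_\ell}$ with $\ell\geq 1$. Fix one such factorisation; uniqueness is not needed, which is the point of working with $\tau_V$ rather than assuming $V$ is a code. For each $k_t$, pick an index $i_t$ with $f(i_t)=k_t$; this is possible by surjectivity. Set $\psi_1(a_i)=a_{i_1}\cdots a_{i_\ell}$. Equivalently, $\psi_1(a_i)$ is any element of $\tau_V^f(\varphi(a_i))$, and this set is nonempty by this argument.

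4. Verify the equality: $\psi_2(\psi_1(a_i))=v_{f(i_1)}\cdots v_{f(i_\ell)}=\varphi(a_i)$. Both maps are morphisms that agree on every letter, so $\varphi=\psi_2\circ\psi_1$.

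The only point needing care is that $\psi_1(a_i)$ must be a nonempty word, so that $\psi_1$ is a semigroup endomorphism of $\Sigma^+$. This holds because $\varphi(a_i)\in V^+$ requires $\ell\geq 1$.

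No rank-preservation or code property is used anywhere. So the statement holds for all endomorphisms, with the non-uniqueness of decodings absorbed by the arbitrary choices made in step 3.
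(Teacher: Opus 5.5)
Your proof is correct and follows the same route as the paper: in both directions, $\psi_2$ is the endomorphism induced by $V$ and a surjection $f$, and $\psi_1(a_i)$ is chosen from $\tau_V^f(\varphi(a_i))$. You construct an element of $\tau_V^f(\varphi(a_i))$ explicitly by lifting each $v_{k_t}$ to some $a_{i_t}$ with $f(i_t)=k_t$, which supplies the non-emptiness argument that the paper only asserts.
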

\begin{proof}
    We begin by proving the ``only if'' direction. Assume that $V=\{v_1, v_2, \dots, v_m\}$ is a factor basis for $\varphi$. By definition, the morphic images satisfy $\varphi(a_i)\in V^+$ for all $i, 1\leq i \leq n$. To define $\psi_2$, we choose a surjection $f:\{1, 2, \dots, n\}\rightarrow \{1, 2, \dots, m\}$ and define $\psi_2(a_i)=v_{f(i)}$, or equivalently, $\psi_2$ is the endomorphism induced by $V$ and some surjection $f$. Since $f$ is surjective, $\psi_2(a_i)\in V^+$ for all $i, 1 \leq i \leq n$, and as $\varphi(a_i)\in V^+$, the set $\tau_V^f$ is non-empty; choose any $s_i\in \tau_V^f(\varphi(a_i))$ and set $\psi_1(a_i)=s_i$. As all $s_i\in\Sigma^+$, we can conclude that $\psi_1$ is an endomorphism. Thus, by our choice of $s_i$, $\psi_2(\psi_1(a_i))=\psi_2(s_i)=\varphi(a_i)$ for each $a_i\in \Sigma$, and $\varphi(a_i)=(\psi_2\circ \psi_1)(a_i)$ as required.\par
    We now prove the ``if'' part of the proposition. We assume that $\varphi=\psi_2\circ \psi_1$ for some endomorphisms $\psi_1, \psi_2:\Sigma^+\rightarrow\Sigma^+$. Let $V=\{\psi_2(a_i)\mid 1\leq i \leq n\}$ and note that $|V|\leq n$. As $\varphi=\psi_2\circ \psi_1$, it follows that $\varphi(a_i)\in V^+$ for all $i, 1\leq i \leq n$, and by Definition \ref{def:factor basis}, $V$ is a factor basis of $\varphi$.
    \end{proof}
Since factorisations of an endomorphism are not necessarily unique (see Example \ref{ex: thm 1 and non-uniquness}), an endomorphism may admit multiple factor bases. To decompose a rank-preserving endomorphism into irreducible components, we must determine which factor bases give the most `refined' factorisation, as shown below.
\begin{definition}
    Given factor bases $V_1=\{v_1, v_2, \dots, v_n\}$ and $V_2=\{v'_1, v'_2, \dots, v'_n\}$, where $V_1\neq V_2$, we say that $V_1$ is \emph{more derived} than $V_2$ if $V_2\subseteq V_1^+$.
\end{definition}
We can represent the hierarchy of derivations of factor bases as a Hasse diagram, where each vertex represents a factor basis, and the directed edges represent the derivation relationship between the factor bases. The graph branches when there exists more than one derivation of a factor basis, or equivalently when the factorisation corresponding to that factor basis is not unique.
\begin{example}\label{ex:factor tree 2}
    Let $\varphi:\{a,b,c\}^+\rightarrow\{a,b,c\}+$ be the rank-preserving endomorphism satisfying $\varphi(a)=abbb, \varphi(b)=acba, \varphi(c)=bbba$. Note that $\varphi$ admits exactly the following factor bases:
    \begin{align*}
V_{1} &= \{abbb, acba, bbba\}, & V_{2} &= \{a, acba, bbb\}, & V_{3} &= \{a, acba, b\}, \\
V_{4} &= \{a, acb, bbb\}, & V_{5} &= \{a, bbb, cba\}, & V_{6} &= \{a, acb, b\}, \\
V_{7} &= \{a, b, cba\}, & V_{8} &= \{a, bbb, cb\}, & V_{9} &= \{a, ac, b\}, \\
V_{10} &= \{a, b, cb\}, & V_{11} &= \{a, b, c\}. \\
\end{align*}
Since $|V_i|=|\{a,b,c\}|$ for all $i, 1\leq i \leq 11$, we have shown that $\varphi$ is rank-preserving.
The derivation relationship between these factor bases is shown in Figure \ref{fig:factor tree 2}. \hfill $\Diamond$
\end{example}
\begin{figure}[h]
    \centering
    \begin{tikzpicture}[>={Stealth[length=1.8mm]},
                        every node/.style={inner sep=1.5pt},
                        x=1cm, y=1cm]
      \node (v1)  at (4.6,3.60) {$V_1$};
      \node (v2)  at (4.6,2.88) {$V_2$};
      \node (v3)  at (0.0,2.16) {$V_3$};
      \node (v4)  at (4.6,2.16) {$V_4$};
      \node (v5)  at (9.2,2.16) {$V_5$};
      \node (v6)  at (0.0,1.44) {$V_6$};
      \node (v7)  at (4.6,1.44) {$V_7$};
      \node (v8)  at (9.2,1.44) {$V_8$};
      \node (v9)  at (2.3,0.72) {$V_9$};
      \node (v10) at (6.9,0.72) {$V_{10}$};
      \node (v11) at (4.6,0.00) {$V_{11}$};
      \draw[->] (v1)--(v2);
      \draw[->] (v2)--(v3);  \draw[->] (v2)--(v4);  \draw[->] (v2)--(v5);
      \draw[->] (v3)--(v6);  \draw[->] (v3)--(v7);
      \draw[->] (v4)--(v6);  \draw[->] (v4)--(v8);
      \draw[->] (v5)--(v7);  \draw[->] (v5)--(v8);
      \draw[->] (v6)--(v9);  \draw[->] (v6)--(v10);
      \draw[->] (v7)--(v10);
      \draw[->] (v8)--(v10);
      \draw[->] (v9)--(v11);
      \draw[->] (v10)--(v11);
    \end{tikzpicture}
    \caption{The Hasse diagram of $\varphi$, given in Example~\ref{ex:factor tree 2}. An arrow points to a more derived factor basis.}
    \label{fig:factor tree 2}
\end{figure}
The next proposition proves the link between a nontrivial factor basis of a rank-preserving endomorphism that is as derived as possible and a factorisation of the endomorphism into irreducible factors. 
\begin{proposition}\label{prop:derivedVset}
 Let $\varphi:\Sigma^+\rightarrow\Sigma^+$ be an endomorphism for an alphabet $\Sigma=\{a_1, a_2, \dots, a_n\}$,  and let $V=\{v_1, v_2, \dots, v_m\}\subset\Sigma^+$ with $m\leq n$ be a factor basis of $\varphi$. The following statements are equivalent:
        \begin{itemize}
            \item The endomorphism $\psi:\Sigma^+\rightarrow\Sigma^+$ induced by the factor basis $V$ and a surjection $f:\{1, 2, \dots, n\}\rightarrow\{1, 2, \dots m\}$ satisfies the equality $\psi=\iota_2\circ \iota_1$ for non-automorphisms $\iota_1, \iota_2:\Sigma^+\rightarrow\Sigma^+$.
            \item There exists a nontrivial factor basis $V'\subset\Sigma^+$ that is more derived than $V$.
        \end{itemize}
    \end{proposition}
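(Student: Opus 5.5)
This is essentially Proposition \ref{prop:left-factor} applied to the induced endomorphism $\psi$ rather than to $\varphi$. The plan is to translate a factorisation of $\psi$ into a factor basis of $\psi$, and hence into one more derived than $V$. The underlying observation is that $V=\{\psi(a_i)\mid 1\le i\le n\}$, because $f$ is surjective. So a set $V'$ with $|V'|\le n$ satisfies $V\subseteq V'^+$ exactly when $V'$ is a factor basis of $\psi$. Moreover, every factor basis of $\psi$ is a factor basis of $\varphi$, since $\varphi(a_i)\in V^+\subseteq V'^+$. "Nontrivial" will be read relative to $V$: $V'\neq V$, which is already forced by the definition of "more derived", and $V'\neq\Sigma$.

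\emph{First implication.} Suppose $\psi=\iota_2\circ\iota_1$ with $\iota_1,\iota_2$ non-automorphisms. Put $V'=\{\iota_2(a_i)\mid 1\le i\le n\}$. By the ``if'' part of Proposition \ref{prop:left-factor}, $V'$ is a factor basis of $\psi$, so $V\subseteq V'^+$, and by the observation above $V'$ is also a factor basis of $\varphi$.
\begin{itemize}
\item If $V'=\Sigma$, then the images of $\iota_2$ are all the letters, so $\iota_2$ is a permutation of the letters, i.e.\ an automorphism. This is excluded.
\item If $V'=V$, the plan is to decode each $\psi(a_i)=v_{f(i)}$ over $V'$ and conclude that $\iota_1$ is a renaming.
\end{itemize}

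\emph{Main obstacle.} The case $V'=V$ needs uniqueness of decoding, and that requires $V$ to be a code. When $\varphi$ is rank-preserving, the remark after the restated definition gives this: factor bases are codes and $m=n$. This is the setting of the section (Theorem \ref{thm:d-p reducible}), and I would state the argument under that hypothesis.

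Under it, each $v_{f(i)}=\iota_2(\iota_1(a_i))$ factorises over $V'=V$ as a single element. So $\iota_1(a_i)$ must be a single letter $a_j$ with $\iota_2(a_j)=v_{f(i)}$. Because $f$ is a bijection and the $v_i$ are distinct, $\iota_1$ is then injective on letters, hence an automorphism, which is a contradiction. So $V'\neq V$, $V'\neq\Sigma$, and $V'$ is a nontrivial factor basis more derived than $V$.

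For general (rank-reducing) $\varphi$, $V$ need not be a code. I would then either interpret nontriviality so that this case is absorbed, or argue via the Defect Theorem that $V'=V$ again forces $\iota_1$ to be letter-to-letter. I expect this point to need the most care.

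\emph{Converse.} Suppose $V'$ is a nontrivial factor basis of $\varphi$ that is more derived than $V$. Then $V\subseteq V'^+$, so $V'$ is a factor basis of $\psi$. The ``only if'' part of Proposition \ref{prop:left-factor}, applied to $\psi$, gives $\psi=\iota_2\circ\iota_1$ with $\iota_2$ induced by $V'$ and a surjection $f'$, and $\iota_1(a_i)\in\tau_{V'}^{f'}(\psi(a_i))$.
\begin{itemize}
\item Since $V'\neq\Sigma$, some element of $V'$ has length at least $2$, or $V'$ has fewer than $n$ elements. Either way $\iota_2$ is not a renaming.
\item If $\iota_1$ were an automorphism, then every $\psi(a_i)$ would be a single element of $V'$, so $V\subseteq V'$. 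With $|V|=|V'|$ in the rank-preserving case, this gives $V=V'$, contradicting $V'\neq V$.
\end{itemize}
Hence $\iota_1$ and $\iota_2$ are both non-automorphisms, which completes the equivalence.
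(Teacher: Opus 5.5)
Your proof takes the same route as the paper's. Both directions go through Proposition~\ref{prop:left-factor} applied to $\psi$: one direction uses $V'=\{\iota_2(a_i)\mid 1\le i\le n\}$, and the other uses $\iota_2$ induced by $V'$ with $\iota_1(a_i)$ chosen from $\tau_{V'}^{f'}(\psi(a_i))$. Under the hypothesis you adopt it is correct; all you actually use is that $V$ is a code with $m=n$. It is also more careful than the paper. The paper's forward direction simply asserts that $\{\iota_2(a_i)\}$ is nontrivial, and your decoding argument is exactly what rules out $V'=V$ there.

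On the general case you left open, the two directions behave differently.

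In the converse, the extra hypothesis is not needed. If $\iota_1$ were an automorphism, then $\iota_2=\psi\circ\iota_1^{-1}$, so $V'=\iota_2(\Sigma)=\psi(\Sigma)=V$ outright.

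In the forward direction, your second suggested route would fail. Once $V$ is not a code, $V'=V$ does not force $\iota_1$ to be letter-to-letter. Take $\Sigma=\{a,b\}$ and $\varphi=\psi$ with $\psi(a)=a$, $\psi(b)=aa$. Then $\psi=\psi\circ\iota_1$ with $\iota_1(a)=a$, $\iota_1(b)=aa$. Both factors are non-automorphisms, yet the left factor's image set is exactly $V=\{a,aa\}$.

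The fix is to change the witness. Let $U$ be the base of the free hull of $V$. Then $V\subseteq U^+$, and by the Defect Theorem $|U|<|V|\le n$, so $U\neq V$ and $U\neq\Sigma$.

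If $V$ is a code with $m<n$, the relation ``more derived'' is defined only for $n$-element bases and does not really apply. So the proposition is essentially meaningful only in the rank-preserving setting, which is also the setting the paper's own argument tacitly assumes.
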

    \begin{proof}
         By Proposition \ref{prop:left-factor}, a factor basis corresponds to a left-factor of $\varphi$. If we assume that $\psi=\iota_2\circ \iota_1$, for endomorphisms $\iota_1, \iota_2:\Sigma^+\rightarrow\Sigma^+$, and there exists a factor basis $V'=\{v'_1, v'_2, \dots, v'_{m'}\}\subseteq\Sigma^+$ of $\psi$ that is more derived than $V$, then by definition, $V\subset V'^+$. We then consider $\tau_{V'}^f(v_i)$. If $|\iota_1(a_i)|=1$ for all $i, 1\leq i \leq n$, then $\iota_1$ is a renaming and $V'=V$. Since $V'\neq V$, there exists a $j, \ 1\leq j \leq m$, such that there exists some $s_j\in\tau_{V'}^f(v_j)$ satisfying $|s_j|\geq 2$. Therefore, the endomorphism $\iota_1:\Sigma^+\rightarrow\Sigma^+$ satisfying $\iota_1(a_i)=s_i$ for $i, 1\leq i \leq n$ is not a renaming. We can then see that $\psi=\iota_2\circ \iota_1$, where $\iota_2:\Sigma^+\rightarrow\Sigma^+$ satisfying $\iota_2(a_i)=v'_f(i)$. As $V'\neq \Sigma$, then $\iota_2$ is not an automorphism. Therefore $\psi$ can be nontrivially decomposed as $\psi=\iota_2\circ \iota_1$.\par
        We now assume that $\psi$ can be nontrivially decomposed, and aim to show that there exists a factor basis $V'$ that is more derived than $V$. Let  $\iota_1, \iota_2:\Sigma^+\rightarrow\Sigma^+$ be non-automorphisms such that $\psi=\iota_2\circ \iota_1$. Hence, the endomorphism $\psi$ admits a nontrivial factor basis $V'$ which satisfies $V'=\{v_1', v'_2, \dots, v'_{m'}\}\subset\Sigma^+$ and there exists an endomorphism induced by the factor basis and a surjection $f$ such that $\iota_2(a_i)=v'_{f(i)}$ for $i, \ 1\leq i \leq n$. By the definition of a factor basis, it follows that for $i, \ 1 \leq i \leq n$, the morphic images satisfy $\psi(a_i)\in V'^+$, hence $V\subseteq  V'^+$. We can then conclude that $V'$ is more derived than $V$.
    \end{proof}
    As we aim to find a factorisation of a reducible rank-preserving endomorphism into irreducible components, we now show that irreducibility is preserved under certain basic operations on endomorphisms.
\begin{proposition}\label{prop:reversalreducible}
    Let  $\varphi:\Sigma^+\rightarrow\Sigma^+$ be a rank-preserving endomorphism for an alphabet $\Sigma$. Then $\varphi$ is reducible if and only if $\varphi^r$ is reducible.
\end{proposition}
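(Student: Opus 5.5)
The plan is to use the identity $(\psi_2\circ\psi_1)^r=\psi_2^r\circ\psi_1^r$ for arbitrary endomorphisms $\psi_1,\psi_2:\Sigma^+\rightarrow\Sigma^+$. We also need that reversal is an involution ($(\varphi^r)^r=\varphi$) which preserves being rank-preserving and being an automorphism. With these facts, the two directions of the equivalence are symmetric, so it suffices to show that reducibility of $\varphi$ implies reducibility of $\varphi^r$.

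First, we verify that $\varphi^r(w)=(\varphi(w^r))^r$ for all $w\in\Sigma^+$, by induction on $|w|$. The inductive step reads
\[
\varphi^r(ua)=\varphi^r(u)\,\varphi(a)^r=(\varphi(u^r))^r(\varphi(a))^r=(\varphi(a)\varphi(u^r))^r=(\varphi((ua)^r))^r,
\]
and the base case holds by the definition of $\varphi^r$. Applying this twice gives
\[
\psi_2^r(\psi_1^r(a_i))=\psi_2^r((\psi_1(a_i))^r)=(\psi_2(\psi_1(a_i)))^r=(\varphi(a_i))^r=\varphi^r(a_i),
\]
so $\varphi=\psi_2\circ\psi_1$ implies $\varphi^r=\psi_2^r\circ\psi_1^r$.

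Next come the invariance facts. Since $|\psi(a_i)^r|=|\psi(a_i)|$, the endomorphism $\psi^r$ is a renaming if and only if $\psi$ is; when all images have length one, $\psi^r=\psi$. For rank preservation, suppose $V$ is a factor basis of $\psi^r$. Then $V^r=\{v^r\mid v\in V\}$ satisfies $|V^r|=|V|$, and each $\psi(a_i)=(\psi^r(a_i))^r$ lies in $(V^r)^+$, because reversing a product $v_{1}\cdots v_{k}$ gives $v_k^r\cdots v_1^r$. So every factor basis of $\psi^r$ reverses to a factor basis of $\psi$ of the same cardinality. Hence if $\psi$ is rank-preserving, then so is $\psi^r$, and by the involution the converse also holds.

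Finally, assume $\varphi$ is reducible, witnessed by non-automorphic rank-preserving $\psi_1,\psi_2$. Then $\psi_1^r,\psi_2^r$ are rank-preserving non-automorphisms with $\varphi^r=\psi_2^r\circ\psi_1^r$, and $\varphi^r$ is itself rank-preserving, so $\varphi^r$ is reducible. The converse follows by applying this to $\varphi^r$ and using $(\varphi^r)^r=\varphi$.

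An alternative route is through Theorem~\ref{thm:d-p reducible}, using that $V\mapsto V^r$ is a bijection between the factor bases of $\varphi$ and those of $\varphi^r$. This bijection maps $\Sigma$ to $\Sigma$ and $W$ to $W^r$, so it preserves nontriviality. The main obstacle is only bookkeeping: in the composition identity the order of $\psi_1$ and $\psi_2$ is not swapped, because reversal here acts on images symbolwise and not on the composition as a whole. The word-level lemma above makes this precise.
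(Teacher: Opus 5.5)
Your proof is correct, but your main line differs from the paper's. The paper argues entirely through Theorem~\ref{thm:d-p reducible}. A reducible $\varphi$ has a nontrivial factor basis $V$, and $V^r=\{v^r\mid v\in V\}$ is then a nontrivial factor basis of $\varphi^r$; the converse follows from $(\varphi^r)^r=\varphi$. This is exactly your ``alternative route.''

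You instead work directly from the definition of reducibility. The lemma $\psi^r(w)=(\psi(w^r))^r$ shows that $\psi\mapsto\psi^r$ respects composition without swapping the order. You also show it preserves rank-preservation and the property of being an automorphism. So a witnessing factorisation $\varphi=\psi_2\circ\psi_1$ transports to $\varphi^r=\psi_2^r\circ\psi_1^r$.

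The paper's route is shorter once Theorem~\ref{thm:d-p reducible} is available. However, it depends on that theorem and leaves implicit that $\varphi^r$ is rank-preserving. That fact is needed both for the theorem to apply and for ``reducible'' to be defined at all. Your factor-basis reversal argument supplies this step explicitly.

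Your route is self-contained and also yields more. Reversal is an involutive automorphism of the endomorphism monoid that preserves rank-preservation and automorphisms. It therefore carries entire factorisations of $\varphi$, such as factorisations into irreducibles, to factorisations of $\varphi^r$, not merely the existence of one nontrivial decomposition.
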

\begin{proof}
     As $\varphi$ is reducible, it admits a nontrivial factor basis $V=\{v_1, v_2, \dots, v_n\}\subset \Sigma^+$, where $n=|\Sigma|$. We can then see that $\varphi^r$ admits the factor basis $V'=\{v_1^r, v_2^r, \dots, v_n^r\}$, which must also be nontrivial and so $\varphi^r$ is reducible. The converse follows identically, since $(\varphi^r)^r=\varphi$.
\end{proof}
A further operation under which irreducibility is preserved is the equivalence of endomorphisms, which formalises the idea that two endomorphisms are the same, up to composition with an automorphism.
\begin{definition}\label{def: equivalent}
    Let $\varphi, \widetilde{\varphi}:\Sigma^+\rightarrow\Sigma^+$ be endomorphisms for an alphabet $\Sigma$. Then $\varphi$ is \emph{equivalent} to $\widetilde{\varphi}$, denoted $\varphi\cong\widetilde{\varphi}$, if there exist automorphisms $\psi_1, \psi_2:\Sigma^+\rightarrow\Sigma^+$ that satisfy $\varphi=\psi_2\circ \widetilde{\varphi}\circ \psi_1$. 
\end{definition}
\begin{proposition}\label{prop:equivalence relation}
    Equivalence of endomorphisms forms an equivalence relation.
\end{proposition}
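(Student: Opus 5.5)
The plan is to check the three axioms of an equivalence relation directly from Definition \ref{def: equivalent}. Everything reduces to one structural fact: the automorphisms of $\Sigma^+$ form a group under composition. In the paper's sense an automorphism is a renaming, i.e.\ an endomorphism that maps every symbol to a symbol and is injective. Since $\Sigma$ is finite, such a map restricts to a permutation of $\Sigma$. I will establish the following three facts first:
\begin{itemize}
\item $\text{id}$ is an automorphism;
\item the composition of two automorphisms is an automorphism;
\item every automorphism $\psi$ has an inverse endomorphism $\psi^{-1}$, which is again an automorphism and satisfies $\psi\circ\psi^{-1}=\psi^{-1}\circ\psi=\text{id}$.
\end{itemize}

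For the first fact, $\text{id}$ maps each symbol to itself and is clearly injective. For the second, if $\psi,\psi'$ map symbols to symbols injectively, then so does $\psi'\circ\psi$. Since an endomorphism is determined by its values on $\Sigma$, this makes $\psi'\circ\psi$ a renaming. For the third, let $\pi$ be the permutation of $\Sigma$ induced by $\psi$. Define $\psi^{-1}$ on symbols by $\psi^{-1}(a_i)=\pi^{-1}(a_i)$. This is a renaming, and composing it with $\psi$ in either order gives $\text{id}$ on every symbol, hence everywhere.

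With these facts the three axioms follow:
\begin{itemize}
\item \textbf{Reflexivity.} Write $\varphi=\text{id}\circ\varphi\circ\text{id}$.
\item \textbf{Symmetry.} Suppose $\varphi=\psi_2\circ\widetilde{\varphi}\circ\psi_1$. Composing on the left with $\psi_2^{-1}$ and on the right with $\psi_1^{-1}$ gives $\widetilde{\varphi}=\psi_2^{-1}\circ\varphi\circ\psi_1^{-1}$. This uses associativity of composition of maps, and both inverses are automorphisms.
\item \textbf{Transitivity.} Suppose $\varphi=\psi_2\circ\widetilde{\varphi}\circ\psi_1$ and $\widetilde{\varphi}=\chi_2\circ\hat{\varphi}\circ\chi_1$. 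Substituting gives $\varphi=(\psi_2\circ\chi_2)\circ\hat{\varphi}\circ(\chi_1\circ\psi_1)$, and both bracketed maps are automorphisms by the closure fact.
\end{itemize}

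The only step needing any care is the existence of inverses. The key observation is that injectivity on words of length one forces the symbol map to be a bijection of the finite set $\Sigma$. The inverse endomorphism is then defined symbolwise, and the check that the compositions equal $\text{id}$ is made on generators. Everything else is routine.
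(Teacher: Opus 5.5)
Your proof is correct and takes the same route as the paper: reflexivity via $\text{id}$, symmetry by composing with $\psi_2^{-1}$ and $\psi_1^{-1}$, and transitivity via closure of automorphisms under composition. The only difference is that you explicitly verify that automorphisms are invertible and closed under composition, using the induced permutation of the finite alphabet, whereas the paper simply asserts these facts.
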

\begin{proof}
     We must show that equivalence is reflexive, symmetric and transitive. To prove reflexivity, we note that if the automorphisms $\psi_1, \psi_2:\Sigma^+\rightarrow\Sigma^+$ are equal to the identity morphism, which is an automorphism, then $\varphi=\psi_2\circ \varphi \circ \psi_1$, and $\varphi$ is equivalent to itself. To prove symmetry, since automorphisms are invertible, for automorphisms $\psi_1, \psi_2:\Sigma^+\rightarrow\Sigma^+$, there exist $\psi_1^{-1}, \psi_2^{-1}:\Sigma^+\rightarrow\Sigma^+$ where $\psi_1^{-1}\circ\psi_1=\psi_1 \circ \psi_1^{-1}=\psi_2^{-1}\circ \psi_2 = \psi_2\circ \psi_2^{-1}=\text{id}$. Then, if $\varphi=\psi_2 \circ \widetilde{\varphi}\circ \psi_1$, it follows that $\psi_2^{-1}\circ \varphi \circ \psi_1^{-1}= \psi_2^{-1}\circ \psi_2 \circ \widetilde{\varphi}\circ \psi_1 \circ \psi_1^{-1}$, which gives us $\widetilde{\varphi}=\psi_2^{-1}\circ \varphi \circ \psi_1^{-1}$, and therefore $\widetilde{\varphi}\cong \varphi$. To prove transitivity, we have $\varphi_1\cong \varphi_2$ and $\varphi_2\cong \varphi_3$. This gives us $\varphi_1=\psi_2 \circ \varphi_2 \circ \psi_1$ and $\varphi_2=\psi_2'\circ \varphi_3 \circ \psi_1'$. We can then see that $\varphi_1=\psi_2 \circ \psi_2' \circ \varphi_3 \circ \psi_1' \circ \psi_1$. We note that automorphisms are closed under composition, thus $\varphi_1\cong\varphi_3$.
\end{proof}
\begin{proposition}\label{prop:equivalencereducible}
    Let $\varphi, \widetilde{\varphi}:\Sigma^+\rightarrow\Sigma^+$ be rank-preserving endomorphisms for an alphabet $\Sigma$, where $\varphi\cong \widetilde{\varphi}$. Then $\varphi$ is reducible if and only if $\widetilde{\varphi}$ is reducible.
\end{proposition}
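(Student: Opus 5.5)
By the symmetry of $\cong$ (Proposition \ref{prop:equivalence relation}), it suffices to prove one direction: if $\widetilde{\varphi}$ is reducible and $\varphi\cong\widetilde{\varphi}$, then $\varphi$ is reducible. Exchanging the roles of $\varphi$ and $\widetilde{\varphi}$ then gives the converse. The idea is to work with factorisations directly rather than with factor bases, and to absorb the automorphisms into the two outer factors.

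Write $\varphi=\alpha_2\circ\widetilde{\varphi}\circ\alpha_1$ with automorphisms $\alpha_1,\alpha_2$. Suppose $\widetilde{\varphi}=\psi_2\circ\psi_1$ with $\psi_1,\psi_2$ rank-preserving and neither an automorphism. Set $\psi_1'=\psi_1\circ\alpha_1$ and $\psi_2'=\alpha_2\circ\psi_2$. Then $\varphi=\psi_2'\circ\psi_1'$, and it remains to check two things.

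\textbf{First, $\psi_1'$ and $\psi_2'$ are not automorphisms.} If $\psi_1\circ\alpha_1$ were an automorphism, then $\psi_1=(\psi_1\circ\alpha_1)\circ\alpha_1^{-1}$ would be a composition of automorphisms, hence an automorphism. (Automorphisms are invertible and closed under composition, as used in the proof of Proposition \ref{prop:equivalence relation}.) This contradicts the choice of $\psi_1$. The same argument applies to $\psi_2'$.

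\textbf{Second, $\psi_1'$ and $\psi_2'$ are rank-preserving.} This is the main step, handled via factor bases.

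For $\psi_2'$: suppose $V$ is a factor basis of $\alpha_2\circ\psi_2$ with $|V|<|\Sigma|$. Since $\alpha_2^{-1}$ is a renaming, $\alpha_2^{-1}(V)$ has the same cardinality as $V$. Moreover, $\psi_2(a_i)=\alpha_2^{-1}(\alpha_2(\psi_2(a_i)))\in \alpha_2^{-1}(V)^+$ for every $a_i$. So $\alpha_2^{-1}(V)$ would be a factor basis of $\psi_2$ of size less than $|\Sigma|$, contradicting that $\psi_2$ is rank-preserving.

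For $\psi_1'$: the set of images $\{\psi_1(\alpha_1(a_i))\}$ is exactly $\{\psi_1(a_i)\}$, because $\alpha_1$ permutes $\Sigma$. Hence $\psi_1\circ\alpha_1$ has precisely the same factor bases as $\psi_1$, and so it is rank-preserving.

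With both checks done, $\varphi=\psi_2'\circ\psi_1'$ is a nontrivial factorisation into rank-preserving non-automorphisms, so $\varphi$ is reducible.

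An alternative route would use Theorem \ref{thm:d-p reducible}: a nontrivial factor basis $\widetilde V$ of $\widetilde{\varphi}$ transfers to the factor basis $\alpha_2(\widetilde V)$ of $\varphi$. However, confirming that nontriviality survives the transfer needs the same renaming argument, so the direct factorisation argument above is cleaner.

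The only real obstacle is the rank-preservation check, and even that is mild. It rests on the facts that a renaming maps a generating set bijectively onto a generating set of the same size, and that precomposition with a permutation of $\Sigma$ leaves the set of images unchanged.
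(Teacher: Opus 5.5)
Your proof is correct and takes the same approach as the paper. Both absorb the two automorphisms into the outer factors, taking $\psi_1\circ\alpha_1$ and $\alpha_2\circ\psi_2$, and obtain the converse by exchanging $\varphi$ and $\widetilde{\varphi}$. Your explicit checks that these new factors are rank-preserving and are not automorphisms fill in details that the paper only asserts.
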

\begin{proof}
    Let $\varphi=\psi_2\circ \psi_1$ be a reducible endomorphism for some non-automorphism, rank-preserving endomorphisms $\psi_1, \psi_2:\Sigma^+\rightarrow\Sigma^+$. Given $\widetilde{\varphi}=\iota_2\circ \varphi\circ \iota_1=\iota_2\circ \psi_2\circ \psi_1\circ \iota_1$, let $\psi_1', \psi_2':\Sigma^+\rightarrow\Sigma^+$ be the endomorphisms defined by $\psi_2'=\iota_2\circ \psi_2$ and $\psi_1'=\psi_1\circ \iota_1$. Both $\psi_1'$ and $\psi_2'$ are rank-preserving and neither is an automorphism, hence $\widetilde{\varphi}$ is reducible. The converse holds by exchanging $\varphi$ and $\widetilde{\varphi}$ in the previous argument.
\end{proof}
We turn to another natural question for the decomposition of an endomorphism $\varphi=\psi_2\circ \psi_1$: given endomorphisms $\varphi, \mu:\Sigma^+\rightarrow\Sigma^+$, is $\mu$ a factor of $\varphi$? In the second main result of this section, we characterise when an endomorphism is a factor of another endomorphism.
\begin{theorem}\label{thm:char_factor_new}
Let $\mu, \varphi:\Sigma^+\rightarrow\Sigma^+$ be endomorphisms for an alphabet $\Sigma$. Then $\mu$ is a factor of $\varphi$ if and only if there exists a factor basis $V = \{v_1, v_2, \ldots, v_m \} \subset \Sigma^+$ of $\varphi$, with $m\leq |\Sigma|$, such that, for some surjection $f:\{1, 2, \dots, |\Sigma|\}\rightarrow\{1, 2, \dots, m\}$ and every $i, 1 \leq i \leq |\Sigma|$, there exists a word $s_i\in\tau_V^f(\varphi(a_i))$ satisfying $s_i\in\{\mu(a_1), \mu(a_2), \dots, \mu(a_{|\Sigma|})\}^+$.
\end{theorem}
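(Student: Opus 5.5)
The plan is to unwind the definition of ``factor'': $\mu$ is a factor of $\varphi$ exactly when $\varphi=\psi_2\circ\mu\circ\psi_1$ for some endomorphisms $\psi_1,\psi_2$. Both directions will be handled with Proposition~\ref{prop:left-factor} and the sets $\tau_V^f$ of Definition~\ref{def:tau2}. Throughout, write $M=\{\mu(a_1),\dots,\mu(a_{|\Sigma|})\}$ and $n=|\Sigma|$.

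For the ``only if'' direction, assume $\varphi=\psi_2\circ\mu\circ\psi_1$. Let $V=\{\psi_2(a_i)\mid 1\le i\le n\}=\{v_1,\dots,v_m\}$ with $m\le n$. Since $\varphi=\psi_2\circ(\mu\circ\psi_1)$, Proposition~\ref{prop:left-factor} (``if'' part) shows that $V$ is a factor basis of $\varphi$. Define $f:\{1,\dots,n\}\to\{1,\dots,m\}$ by $f(i)=j$ when $\psi_2(a_i)=v_j$. This $f$ is surjective because $V$ is exactly the image set, and the endomorphism induced by $V$ and $f$ is $\psi_2$ itself. Now set $s_i=\mu(\psi_1(a_i))$. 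Then $\psi_2(s_i)=\varphi(a_i)$, so $s_i\in\tau_V^f(\varphi(a_i))$. Also $\psi_1(a_i)\in\Sigma^+$, so $s_i=\mu(\psi_1(a_i))\in M^+$, which gives the required condition.

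For the ``if'' direction, assume we have $V$, $f$ and words $s_i\in\tau_V^f(\varphi(a_i))\cap M^+$. Let $\psi_2$ be the endomorphism induced by $V$ and $f$, so $\psi_2(s_i)=\varphi(a_i)$. Since $s_i\in M^+$, write $s_i=\mu(a_{i_1})\mu(a_{i_2})\cdots\mu(a_{i_{k_i}})$, choosing any such factorisation; $M$ need not be a code, but any choice will do. Define $\psi_1(a_i)=a_{i_1}\cdots a_{i_{k_i}}\in\Sigma^+$. Then $\mu(\psi_1(a_i))=s_i$, and hence $\psi_2(\mu(\psi_1(a_i)))=\varphi(a_i)$ for every $i$. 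Because endomorphisms are determined by their values on symbols, $\varphi=\psi_2\circ\mu\circ\psi_1$, so $\mu$ is a factor of $\varphi$.

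There is no real obstacle; the only point needing care is bookkeeping. In the ``only if'' direction, the surjection $f$ must be chosen so that the induced endomorphism is exactly $\psi_2$, including the case where $\psi_2$ is non-injective on symbols and $m<n$. In the ``if'' direction, each $\psi_1(a_i)$ is nonempty because $s_i\in\Sigma^+$, so $\psi_1$ really is an endomorphism of $\Sigma^+$. No rank-preservation hypothesis is needed, which matches the statement being for arbitrary endomorphisms.
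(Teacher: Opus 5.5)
Your proposal is correct and follows the paper's proof essentially step for step. For ``only if'' you take $V=\{\psi_2(a_i)\mid 1\le i\le n\}$, choose $f$ so that $\psi_2$ is the induced endomorphism, and set $s_i=\mu(\psi_1(a_i))$. For ``if'' you let $\psi_2$ be induced by $V$ and $f$, and define $\psi_1(a_i)$ by decoding $s_i$ over $\{\mu(a_1),\dots,\mu(a_n)\}$. The paper phrases this last step via some $s_i'\in\tau_Z^g(s_i)$. Your direct decoding is the same construction, and it makes explicit a point the paper leaves implicit: the decoding must be done with respect to $\mu$ itself, i.e.\ $g$ must be chosen so that $\mu$ is the endomorphism induced by $Z$ and $g$.
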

\begin{proof} 
    Let $\Sigma=\{a_1, a_2, \dots, a_n\}$. We first prove the ``only if'' direction, i.\,e., we assume that there exist endomorphisms $\psi_1, \psi_2: \Sigma^+ \to \Sigma^+$ satisfying $\varphi = \psi_2 \circ \mu \circ \psi_1$, and let $V=\{\psi_2(a_1),\psi_2(a_2), \dots, \psi_2(a_n)\}$ be a factor basis of $\varphi$. We show that $V$, together with some surjection $f$, has the desired properties. Firstly, we note that the equality $\varphi=\psi_2\circ \mu \circ \psi_1$, and particularly the fact that $\psi_2$ is a left-factor in the factorisation of $\varphi$, directly implies that, for every $i, 1\leq i \leq n$, $\varphi(a_i)$ must be an element of $V^+$, and so $V$ is a factor basis of $\varphi$. Choose a surjection $f:\{1,2, \dots, n\}\rightarrow\{1, 2, \dots, m\}$ such that the endomorphism $\psi_2$ is induced by $V$ and $f$, and therefore is defined by $\psi_2(a_i)=v_{f(i)}$. As a result, there exists a word $s_i\in\tau_V^f(\varphi(a_i))$ such that $\psi_2(s_i)=\varphi(a_i)$. Let $s_i=\mu(\psi_1(a_i))$. Then $s_i\in\{\mu(a_1), \mu(a_2), \dots, \mu(a_n)\}^+$, since $s_i=\mu(\psi_1(a_i))$ and $\psi_2(s_i)=\psi_2(\mu(\psi_1(a_i)))=\varphi(a_i)$, so $s_i\in \tau_V^f(\varphi(a_i))$. Hence, the ``only if'' direction of our theorem holds true.\par
    We now consider the ``if'' direction of the theorem, i.\,e., we assume that there exists a factor basis $V$ of $\varphi$ that satisfies the theorem, and we show that this implies the existence of endomorphisms $\psi_1$ and $\psi_2$ satisfying $\varphi=\psi_2\circ \mu \circ \psi_1$. We define these endomorphisms as follows:
    \begin{itemize}
        \item To determine $\psi_1:\Sigma^+\rightarrow\Sigma^+$, we first consider the factor basis $Z=\{\mu(a_1), \mu(a_2), \dots, \mu(a_n)\}$. Since we postulate that for $i,1\leq i \leq n$ there exists a word $s_i\in\tau_V^f(\varphi(a_i))$ with $s_i\in Z^+$, we may conclude that there exists a surjection $g:\{1, 2, \dots, n\}\rightarrow\{1, 2, \dots, |Z|\}$ there exists a word $s_i'\in\tau_Z^g(s_i)$ satisfying $\mu(s'_i)=s_i$. We then define $\psi_1(a_i)=s_i'$ for $i, 1\leq i \leq n$.
        \item Let $\psi_2:\Sigma^+\rightarrow\Sigma^+$ be the endomorphism induced by the factor basis $V$ and the surjection $f$.
    \end{itemize}
As $V$ is a factor basis of $\varphi$, we know that $\varphi(a_i)=v_{i_1}v_{i_2}\cdots v_{i_{k_i}}$ for some $k_i\geq 1$ and $v_{i_1},v_{i_2},\dots, v_{i_{k_i}}\in V$. Furthermore, our assumptions on $V$ imply that, for all $i$, the endomorphism $\mu$ maps $s_i'$ to a word $s_i=a_{i_1}a_{i_2}\cdots a_{i_{k_i}}$ that is in $\tau_V^f(\varphi(a_i)).$ \par
    We can therefore observe the following: For every $i$, $1 \leq i \leq n$,
    \begin{itemize}
    \item $\psi_1(a_i) = s'_i$,
    \item $\mu(s'_i) = s_i = a_{i_1} a_{i_2} \cdots a_{i_{k_i}}$, 
    \item $\psi_2(a_{i_1} a_{i_2} \cdots a_{i_{k_i}}) = v_{i_1} v_{i_2} \cdots v_{i_{k_i}}$.
    \end{itemize}
    In summary, we have established that $\varphi(a_i) = \psi_2(\mu(\psi_1(a_i)))$ for every $i$ and, hence, that the existence of such a factor basis $V$ of $\varphi$ implies the existence of endomorphisms $\psi_1, \psi_2$ satisfying $\varphi = \psi_2 \circ \mu \circ \psi_1$.     
    \end{proof}
    We now discuss some examples to illustrate Theorem~\ref{thm:char_factor_new} and its proof. These examples focus on applications of Theorem~\ref{thm:char_factor_new} to rank-preserving endomorphisms.
\begin{example}
    Let $\mu, \varphi: \{a,b\}^+ \to \{a,b\}^+$ be the rank-preserving endomorphisms satisfying $\mu(a) = bb$, $\mu(b)= a$, $\varphi(a) = aaaab$, and $\varphi(b) = baaaa$. If we want to use Theorem~\ref{thm:char_factor_new} to establish if $\mu$ is a factor of $\varphi$, we can consider the following potential factor bases $V$ of $\varphi$: $\{a, b\}$, $\{aa, b\}$, $\{aaaa, b\}$ and $\{aaaab, baaaa\}$. We can also observe that $Z = \{\mu(a), \mu(b)\} = \{ bb, a \}$. We choose $V = \{aa, b\}$ and the surjection $f:\{1, 2\}\rightarrow\{1, 2\}$ as $f(1)=2$ and $f(2)=1$, so we can see that the words $bba$ and $abb$ are both in $Z^+$ and satisfy $bba \in \tau_V^f(\varphi(a)) = \tau_V^f(aaaab)$ and $abb \in \tau_V^f(\varphi(b)) = \tau_V^f(baaaa)$. We can, hence, conclude that $\mu$ is a factor of $\varphi$, and the proof of Theorem~\ref{thm:char_factor_new} illustrates that $\varphi = \psi_2 \circ \mu \circ\ \psi_1$, where $\psi_1(a) = ab$, $\psi_1(b) = ba$, $\psi_2(a) = b$, and $\psi_2(b) = aa$. \hfill$\Diamond$
\end{example}
\begin{example}
    Let $\mu, \varphi: \{a,b\}^+ \to \{a,b\}^+$ be the rank-preserving endomorphisms satisfying $\mu(a) = aab$, $\mu(b)= aba$, $\varphi(a) = ababbababbabb$, and $\varphi(b) = abbabbababb$. We now apply Theorem~\ref{thm:char_factor_new} to show that $\mu$ is not a factor of $\varphi$. To this end, we need to consider the set $Z = \{\mu(a),\mu(b)\} = \{aab, aba \}$, and all possible factor bases of $\varphi$.\par
    We first choose $V = \{ab, abb\}$ and then choose our surjection. There exist two surjections $f,g:\{1, 2\}\rightarrow\{1, 2\}$, satisfying $f(1)=1, f(2)=2$ or $g(1)=2, g(2)=1$. Considering each surjection, we obtain $\tau_V^f(\varphi(a)) = \{ababb\}$ and $\tau_V^f(\varphi(b)) = \{bbab\}$, or $\tau_V^g(\varphi(a)) = \{babaa\}$ and $\tau_V^g(\varphi(b)) = \{aaba\}$. We note that neither surjection yields two words that are in $Z^+$, and therefore this choice of $V$ does not satisfy the theorem.\par
    Continuing with $V = \{ ababb, abb\}$, the surjection $f$ (as above) leads to $\tau_V^f(\varphi(a)) = \{aab\}$ and $\tau_V^f(\varphi(b)) = \{bba\}$. In this case, while $aab$ is indeed in $Z^+$, we see that $bba$ is not. The surjection $g$ (as above) implies that $\tau_V^g(\varphi(a)) = \{bba\}$ and $\tau_V^g(\varphi(b)) = \{aab\}$. Hence, we can again note that $bba$ is not in $Z^+$, and therefore our choice of $V$ does not satisfy the theorem.\par
    If $V = \{ab, b\}$, the surjection $f$ implies that $\tau_V^f(\varphi(a)) = \{aabaabab\}$ and $\tau_V^f(\varphi(b)) = \{ababaab\}$. However, neither of these words is in $Z^+$, and the same is true for the surjection $g$, which implies $\tau_V^g(\varphi(a)) = \{bbabbaba\}$ and $\tau_V^g(\varphi(b)) = \{bababba\}$.\par
    The final two candidates are the trivial factor bases of $\varphi$. For $V = \{ \varphi(a), \varphi(b) \}$ it is obvious that $\tau_V^f(\varphi(a))=\{a\}$ and $\tau_V^g(\varphi(a))=\{b\}$, but neither of these words is in $Z^+$. Lastly, if $V = \{a,b\}$, the surjection $f$ $V$ means that $\tau_V^f(\varphi(a)) = \{\varphi(a)\}$ and $\tau_V^f(\varphi(b)) = \{\varphi(b)\}$. However, since, e.\,g., $\varphi(a) = ababbababbabb$ is not in $Z^+$, theorem is not satisfied, and the surjection $g$ does not change this outcome.\par
    These are all possible choices of $V$ of $\varphi$, and none of them satisfy the theorem. We can therefore conclude that $\mu$ is not a factor of $\varphi$. \hfill$\Diamond$
\end{example}
While the decision procedure implied by the proof of Theorem 2 appears complex, the above examples demonstrate that it has a search space that is potentially far smaller than an exhaustive search over all possible
endomorphisms $\psi_1, \psi_2$.
\section{Uniqueness of Factorisations}\label{sec:uniqueness}
In the present section, we further explore the non-uniqueness of the factorisation of a rank-preserving endomorphism into its irreducible components, as discussed in Example \ref{ex: thm 1 and non-uniquness} and Example \ref{ex:factor tree 2}. We identify two forms of non-uniqueness:  non-uniqueness of the irreducible factors, and uniqueness of the irreducible factors but non-uniqueness of factorisations. We shall focus on the non-uniqueness of factors.\par
For a rank-preserving endomorphism $\varphi:\Sigma^+\rightarrow\Sigma^+$ to have non-unique, irreducible factors, the endomorphism $\varphi$ admits distinct maximally derived factor bases (see Proposition \ref{prop:derivedVset}). This occurs when for all $a_i\in\Sigma$, the image $\varphi(a_i)$ lies in the intersection $V_1^+\cap V_2^+$ of the free semigroups generated by two distinct factor bases. For a rank-preserving endomorphism to have unique irreducible factors but a non-unique factorisation into irreducible components, we investigate when endomorphisms commute under composition. \par
A similar question is investigated by Karhum\"{a}ki \cite{Kar84}, who considers intersections of the form $\{x, y\}^*\cap \{u,v\}^*$ for $x, y, u, v\in\Sigma^*$, notably characterising the words that lie in the intersection of monoids of rank 2. We study a related but distinct question: given a rank-preserving endomorphism $\varphi$ inducing the set of words $W=\{\varphi(a_1), \varphi(a_2), \dots, \varphi(a_n)\}$, when do there exist distinct factor bases $V_1, V_2$ of size $n$ such that for all $a_i\in\Sigma$, the morphic images $\varphi(a_i)$ satisfy $\varphi(a_i)\in V_1^+\cap V_2^+$? We provide some conditions for this phenomenon.\par
We begin with some immediate observations on the number of irreducible endomorphisms.
\begin{proposition}\label{prop:unaryirreducible}
         Every rank-preserving endomorphism over a unary alphabet has a unique factorisation into irreducible elements up to reordering.
\end{proposition}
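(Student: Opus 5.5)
Over a unary alphabet $\Sigma=\{a\}$ every endomorphism is of the form $\varphi_k(a)=a^k$ with $k\in\mathbb{N}$, and every such endomorphism is rank-preserving, since no factor basis can have fewer than $1$ element. The plan is to identify the monoid of these endomorphisms with $(\mathbb{N},\cdot)$ and then invoke the fundamental theorem of arithmetic.

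The first step is the identification itself. Composition satisfies $(\varphi_k\circ\varphi_\ell)(a)=\varphi_k(a^\ell)=a^{k\ell}$, so $\varphi_k\circ\varphi_\ell=\varphi_{k\ell}=\varphi_\ell\circ\varphi_k$. Hence $k\mapsto\varphi_k$ is an isomorphism from the multiplicative monoid $(\mathbb{N},\cdot)$ onto the endomorphism monoid. The only automorphism (renaming) is $\varphi_1=\text{id}$, because a renaming must have $|\varphi(a)|=1$.

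The second step characterises the irreducible elements. I claim $\varphi_k$ is irreducible if and only if $k$ is prime. If $k=pq$ with $p,q\geq 2$, then $\varphi_k=\varphi_p\circ\varphi_q$ is a nontrivial decomposition into rank-preserving non-automorphisms. Conversely, if $\varphi_k=\psi_2\circ\psi_1$ with $\psi_1=\varphi_q$ and $\psi_2=\varphi_p$, then $k=pq$. If $k$ is prime, one of $p,q$ equals $1$, so one factor is the identity. This can be cross-checked against Theorem~\ref{thm:d-p reducible}: a factor basis is $\{a^d\}$ with $d\mid k$, and it is nontrivial exactly when $1<d<k$. The case $k=1$ gives the identity, which I treat as the empty factorisation (a unit), exactly as $1$ is handled among the integers.

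The final step is existence and uniqueness. Any factorisation $\varphi_k=\varphi_{p_1}\circ\cdots\circ\varphi_{p_r}$ into irreducibles corresponds to $k=p_1\cdots p_r$ with each $p_i$ prime. Existence follows from the existence of prime factorisations. Uniqueness up to reordering follows from the fundamental theorem of arithmetic: the multiset $\{p_1,\dots,p_r\}$ is determined by $k$. Because the monoid is commutative, every reordering is again a valid factorisation, so "up to reordering" is the correct notion of uniqueness here.

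I expect no serious obstacle. The only care needed is to state clearly that the identity is the unique automorphism, and that composition with it is exactly the trivial factorisation that is excluded.
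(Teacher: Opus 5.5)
Your proposal is correct and follows essentially the same route as the paper: both identify $\varphi_k(a)=a^k$ with $k\in(\mathbb{N},\cdot)$ via $\varphi_k\circ\varphi_\ell=\varphi_{k\ell}$, match irreducible factors with the prime factors of $k$, and conclude by the fundamental theorem of arithmetic together with commutativity. You also make explicit something the paper leaves implicit: the identity must be treated as a unit, since under the literal definition it is not reducible and would otherwise pad factorisations; note that your cross-check via Theorem~\ref{thm:d-p reducible} is only heuristic, because that theorem assumes $|\Sigma|>1$, but your direct argument does not rely on it.
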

\begin{proof}
     Let $M=\{\varphi \mid \varphi:\{a\}^+\rightarrow\{a\}^+\}$ be the set of rank-preserving endomorphisms over a unary alphabet.
    Every $\varphi\in M$ satisfies $\varphi(a)=a^n$ for some $n\in\mathbb{N}$. The composition of endomorphisms $\psi_1, \psi_2\in M$ that satisfy $\psi_1(a)=a^{n_1}$ and $\psi_2(a)=a^{n_2}$ gives
    $\psi_2\circ \psi_1(a)=a^{n_1n_2}$ for $n_1, n_2\in\mathbb{N}$. It follows that the lengths of irreducible factors of any $\varphi$ correspond to the prime factors of $n$. We can also see that $\psi_1\circ \psi_2=\psi_2\circ \psi_1$, and as multiplication of integers is commutative, the order of the composition of these endomorphisms does not impact the resultant endomorphism. Thus, we can conclude that every endomorphism $\varphi\in M$ has a unique factorisation into irreducible elements up to reordering.
\end{proof}
Thus, we can see that in the unary alphabet, there are infinitely many irreducible endomorphisms.
\begin{corollary}
    The rank-preserving,  irreducible endomorphisms $\varphi:\{a\}^+\rightarrow\{a\}^+$ correspond to those satisfying $\varphi(a)=a^p$, where $p$ is prime.
\end{corollary}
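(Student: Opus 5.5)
We work directly with the unary case: every endomorphism $\varphi:\{a\}^+\rightarrow\{a\}^+$ has the form $\varphi(a)=a^n$ for some $n\in\mathbb{N}$, and is automatically rank-preserving, since $|\Sigma|=1$ and no factor basis can have fewer than one element. The automorphisms are exactly the renamings, and over a unary alphabet the only renaming is $\text{id}$, i.\,e.\ the case $n=1$. As in the proof of Proposition~\ref{prop:unaryirreducible}, composition corresponds to multiplication of exponents: if $\psi_1(a)=a^{n_1}$ and $\psi_2(a)=a^{n_2}$, then $(\psi_2\circ\psi_1)(a)=a^{n_1n_2}$. So the statement reduces to the claim that $\varphi(a)=a^n$ with $n\geq 2$ is irreducible if and only if $n$ is prime. (The case $n=1$ gives the automorphism $\text{id}$, which we exclude from the irreducibles just as $1$ is excluded from the primes.)

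First we show that composite exponents give reducible endomorphisms. Suppose $n=n_1n_2$ with $n_1,n_2\geq 2$. Define $\psi_1(a)=a^{n_1}$ and $\psi_2(a)=a^{n_2}$. Both are rank-preserving, neither is an automorphism because their images have length at least $2$, and $\varphi=\psi_2\circ\psi_1$. Hence $\varphi$ is reducible. Alternatively, one can appeal to Theorem~\ref{thm:d-p reducible}: the set $V=\{a^{n_2}\}$ is a nontrivial factor basis, since $V\neq\{a\}$ and $V\neq\{a^n\}$.

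Next we show that prime exponents give irreducible endomorphisms. Let $n=p$ be prime, and suppose $\varphi=\psi_2\circ\psi_1$ with $\psi_i(a)=a^{n_i}$. Then $n_1n_2=p$, so $n_1=1$ or $n_2=1$. Therefore one of the two factors is $\text{id}$, which is an automorphism, and the decomposition is trivial. Thus $\varphi$ is irreducible.

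There is no real obstacle in this argument. The only point that needs care is the bookkeeping: confirming that over $\{a\}$ every endomorphism is rank-preserving and that the identity is the unique automorphism. Once these are checked, the whole argument follows from the multiplicativity of exponents under composition.
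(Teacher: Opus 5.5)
Your proof is correct and follows the same route as the paper, which reads the corollary off the proof of Proposition~\ref{prop:unaryirreducible} via $(\psi_2\circ\psi_1)(a)=a^{n_1n_2}$. Your explicit exclusion of $n=1$ is more careful than the paper, whose literal definition would count $\text{id}$ as irreducible; the only blemish is your optional appeal to Theorem~\ref{thm:d-p reducible}, which is stated for $|\Sigma|>1$ and so does not formally cover the unary case, though your direct argument does not need it.
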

 As we have shown that there are infinitely many rank-preserving irreducible endomorphisms over a unary alphabet, we can make similar observations for larger alphabets by defining a natural embedding of endomorphisms.
    \begin{proposition}\label{prop:morphism embedding}
    Let $\Sigma_1, \Sigma_2$ be alphabets satisfying $\Sigma_1\subset \Sigma_2$. If $\varphi_1:\Sigma_1^+\rightarrow\Sigma_1^+$ is a rank-preserving, reducible endomorphism, then there exists a rank-preserving, reducible endomorphism $\varphi_2:\Sigma_2^+\rightarrow\Sigma_2^+$ such that $\varphi_2(a_i)=\varphi_1(a_i)$ for all $a_i\in\Sigma_1$ and $\varphi_2(a_j)=a_j$ for all $a_j\in\Sigma_2\setminus\Sigma_1$.
\end{proposition}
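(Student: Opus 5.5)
The plan is to take $\varphi_2$ to be exactly the extension described in the statement. We then need two things: that $\varphi_2$ is rank-preserving, and that it is reducible. Reducibility is the easy part, because the factorisation of $\varphi_1$ can be lifted directly. Since $\varphi_1$ is reducible, there are rank-preserving non-automorphisms $\psi_1,\psi_2:\Sigma_1^+\rightarrow\Sigma_1^+$ with $\varphi_1=\psi_2\circ\psi_1$. For $k=1,2$ we extend each to $\widehat{\psi}_k:\Sigma_2^+\rightarrow\Sigma_2^+$ by setting $\widehat{\psi}_k(a_j)=a_j$ for all $a_j\in\Sigma_2\setminus\Sigma_1$. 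Each $\widehat{\psi}_k$ maps $\Sigma_1^+$ into $\Sigma_1^+$, so $\widehat{\psi}_2\circ\widehat{\psi}_1$ agrees with $\varphi_1$ on $\Sigma_1$ and fixes every new letter; hence it equals $\varphi_2$. Neither $\widehat{\psi}_k$ is an automorphism, since its restriction to $\Sigma_1$ is not a renaming: some image has length at least $2$, or two letters of $\Sigma_1$ collide. Once we know that each $\widehat{\psi}_k$ and $\varphi_2$ are rank-preserving, the definition of reducibility applies directly. Alternatively, we can take the nontrivial factor basis $V$ of $\varphi_1$ given by Theorem~\ref{thm:d-p reducible} and use $V\cup(\Sigma_2\setminus\Sigma_1)$, which is a factor basis of $\varphi_2$ and is neither $\Sigma_2$ nor $W_2$.

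The core is therefore a lemma: if $\theta:\Sigma_1^+\rightarrow\Sigma_1^+$ is rank-preserving, then its extension $\widehat{\theta}$ by the identity on $\Sigma_2\setminus\Sigma_1$ is rank-preserving. Let $U$ be any factor basis of $\widehat{\theta}$, and write $n=|\Sigma_1|$ and $k=|\Sigma_2\setminus\Sigma_1|$. Each new letter $a_j$ is an image of $\widehat{\theta}$, so $a_j\in U^+$. This forces $a_j\in U$, since a length-one word cannot be a concatenation of two or more nonempty words. Next, let $U_1=U\cap\Sigma_1^+$. Every image $\theta(a_i)$ lies in $\Sigma_1^+$ and factorises over $U$. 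Any factor in such a factorisation uses only letters of $\Sigma_1$, and is therefore an element of $U_1$. Hence $U_1$ is a factor basis of $\theta$, and rank preservation of $\theta$ gives $|U_1|\geq n$. Since $U_1$ and $\Sigma_2\setminus\Sigma_1$ are disjoint subsets of $U$, we get $|U|\geq n+k=|\Sigma_2|$. So no factor basis of $\widehat{\theta}$ is smaller than $|\Sigma_2|$, and $\widehat{\theta}$ is rank-preserving. Applying the lemma to $\theta=\varphi_1,\psi_1,\psi_2$ completes the argument.

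The main obstacle is the lemma's counting step: we must ensure that the elements of $U$ used for the $\Sigma_1$-images cannot double as the new letters, and that no element of $U$ mixing old and new letters can help. Both points are settled by the observation that a factor of a word over $\Sigma_1$ contains only letters of $\Sigma_1$. One further detail needs care in the factor basis route: $V\cup(\Sigma_2\setminus\Sigma_1)\neq W_2$ requires $V\neq W_1$, and $V\cup(\Sigma_2\setminus\Sigma_1)\neq\Sigma_2$ requires $V\neq\Sigma_1$. Both hold because $V$ is nontrivial, and disjointness of the alphabets makes these comparisons componentwise.
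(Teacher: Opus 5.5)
Your proposal is correct. Your ``alternative'' factor-basis route is the paper's proof: the paper takes the nontrivial factor basis $V_1$ of $\varphi_1$, says that $V_1\cup(\Sigma_2\setminus\Sigma_1)$ is a nontrivial factor basis of $\varphi_2$ of size $|\Sigma_2|$, and concludes reducibility via Theorem~\ref{thm:d-p reducible}. The paper only declares the extension $\varphi_2$ to be rank-preserving and asserts nontriviality of $V_2$ without argument; you prove both. Your lemma is exactly the argument the paper omits: new letters are forced into $U$, $U\cap\Sigma_1^+$ already generates the old images, and so $|U|\geq|\Sigma_1|+|\Sigma_2\setminus\Sigma_1|$. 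Intersecting with $\Sigma_1^+$ also makes the checks $V_2\neq\Sigma_2$ and $V_2\neq W_2$ rigorous. Your primary route, extending $\psi_1,\psi_2$ by the identity on the new letters, is genuinely different. It works directly from the definition of reducibility and never passes through Theorem~\ref{thm:d-p reducible}. That theorem is stated only for $|\Sigma|>1$, so as written it does not hand you a nontrivial factor basis when $\Sigma_1$ is unary, and your lifting route covers that case without comment. The cost is that you need rank-preservation of three extended maps ($\varphi_2$, $\widehat{\psi}_1$, $\widehat{\psi}_2$) rather than one, which your lemma supplies uniformly.
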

\begin{proof}
    We begin by defining the embedding of a rank-preserving endomorphism over $\Sigma_1$ to a rank-preserving endomorphism over $\Sigma_2$.
Let $\varphi_1:\Sigma_1^+\rightarrow\Sigma^+_1$ be a rank-preserving endomorphism satisfying $\varphi_1(a_i)=w_i$ for all $a_i\in\Sigma_1$. We extend $\varphi_1$ to a larger alphabet by defining a rank-preserving endomorphism $\varphi_2(a_i)=w_i$ for $a_i\in\Sigma_1$ and $\varphi_2(a_j)=a_j$ for $a_j\in\Sigma_2\setminus\Sigma_1$. If $\varphi_1$ is reducible, then $\varphi_1$ admits a nontrivial factor basis $V_1\subset \Sigma_1^+$, which satisfies $|V_1|=|\Sigma_1|$. It follows that $\varphi_2$ admits the nontrivial factor basis $V_2=V_1\cup (\Sigma_2\setminus \Sigma_1)$, which satisfies $|V_2|=|\Sigma_2|$ and therefore is reducible.
\end{proof}
\begin{corollary}
    For any alphabet $\Sigma$, the set of rank-preserving irreducible endomorphisms over $\Sigma$ is infinite.
\end{corollary}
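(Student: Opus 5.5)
The plan is to exhibit, for an arbitrary alphabet $\Sigma=\{a_1,\dots,a_n\}$, an explicit infinite family of rank-preserving irreducible endomorphisms, indexed by the primes. Proposition \ref{prop:morphism embedding} transfers reducibility upward, not irreducibility, so it cannot simply be invoked. Instead I will embed the unary family from the preceding corollary by hand and check irreducibility directly with Theorem \ref{thm:d-p reducible}. For each prime $p$, let $\varphi_p(a_1)=a_1^p$ and $\varphi_p(a_j)=a_j$ for $2\leq j\leq n$. These endomorphisms are pairwise distinct, so it suffices to show that each one is rank-preserving and admits only trivial factor bases.

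For rank preservation, let $V$ be any factor basis of $\varphi_p$. Each $a_j$ with $j\geq 2$ lies in $V^+$, and since it has length $1$ we get $a_j\in V$. The word $a_1^p$ lies in $V^+$, so it factors into elements of $V$, each of which is a power of $a_1$; hence $V$ contains some word $a_1^k$. These are $n$ distinct elements of $V$, so $|V|\geq n$ and $\varphi_p$ is rank-preserving.

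For irreducibility, I use the previous paragraph together with the constraint $|V|\leq n$. This forces
\[
V=\{a_2,\dots,a_n\}\cup\{a_1^k\}
\]
for a single $k$, and the condition $a_1^p\in V^+$ means $k\mid p$. Since $p$ is prime, $k=1$ or $k=p$. If $k=1$, then $V=\Sigma$; if $k=p$, then $V=W$. In both cases $V$ is trivial, so by Theorem \ref{thm:d-p reducible} the endomorphism $\varphi_p$ is irreducible whenever $|\Sigma|>1$. The unary case $|\Sigma|=1$ is already covered by the preceding corollary.

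The only step needing care is the counting argument that pins $V$ down exactly. The point to verify is that an element of $V$ used in factorising $a_1^p$ must be a pure power of $a_1$, and that no element of $V$ can be left over beyond the $n$ already identified. Both follow immediately from $|V|\leq n$ together with the lower bound from the rank-preservation step, so I expect no real obstacle.
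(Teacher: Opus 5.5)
Your proof is correct and follows the paper's argument. The paper uses the same family $\varphi(a_1)=a_1^p$, $\varphi(a_i)=a_i$ for $i\geq 2$, and simply asserts irreducibility via Theorem~\ref{thm:d-p reducible}; your check that every factor basis must equal $\{a_1^k,a_2,\dots,a_n\}$ with $k\mid p$ supplies the detail the paper leaves implicit, and you are right that only the construction of Proposition~\ref{prop:morphism embedding} is usable here, since its conclusion transfers reducibility rather than irreducibility.
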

\begin{proof}
   We may use the embedding defined in Proposition \ref{prop:morphism embedding} for irreducible endomorphisms. Let $\Sigma_1=\{a_1\}$ be an alphabet, and $\varphi_1:\Sigma_1^+\rightarrow \Sigma_1^+$ be an endomorphism $\varphi_1(a_1)=a_1^p$, given for some prime $p$. Let $\Sigma_2$ be an alphabet where $\Sigma_1\subset \Sigma_2$, and define the endomorphism $\varphi_2:\Sigma_2^+\rightarrow\Sigma_2^+$ as $\varphi_2(a_1)=a_1^p$ and $\varphi_2(a_i)=a_i$ for all $a_i\in(\Sigma_2\setminus\{a_1\})$. By Theorem \ref{thm:d-p reducible}, $\varphi_2$ is irreducible. As there exist infinitely many (irreducible) endomorphisms of the form $\varphi_1$, there exist infinitely many irreducible endomorphisms of the form $\varphi_2$ for any alphabet $\Sigma_2$, where $\Sigma_1\subset \Sigma_2$.
\end{proof}
Next, we consider non-uniqueness of the irreducible factors of a binary, rank-preserving endomorphism. We begin by formalising the phenomenon seen in Example \ref{ex: thm 1 and non-uniquness}.
\begin{proposition}\label{prop:aba subword}
        Let  $\varphi:\Sigma^+\rightarrow\Sigma^+$ be a rank-preserving endomorphism and let $\psi_1, \psi_2, \dots, \psi_n:\Sigma^+\rightarrow\Sigma^+$ be irreducible endomorphisms for $\Sigma=\{a,b\}$ that satisfy $\varphi=\psi_n\circ \psi_{n-1} \circ \cdots\circ\psi_1$. The factorisation of $\varphi$ is not unique if there exists an endomorphism $\mu:\Sigma^+\rightarrow\Sigma^+$ satisfying $\mu(a)\in a^+(ba^+)^+$ and $\mu(b)\in a^+(ba^+)^*$, where $\mu$ is a factor of $\varphi$.
    \end{proposition}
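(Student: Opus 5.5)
The key observation is that an endomorphism $\mu$ with $\mu(a)\in a^+(ba^+)^+$ and $\mu(b)\in a^+(ba^+)^*$ can be factored in two mirror-image ways. Both images begin and end with $a$, and every occurrence of $b$ is immediately preceded and immediately followed by an occurrence of $a$. Hence every word in $\{\mu(a),\mu(b)\}^+$ lies in $\{a,ab\}^+$ as well as in $\{a,ba\}^+$. To see this, parse the word left to right, grouping each $b$ with the $a$ immediately before it (respectively, immediately after it). So $\mu$ has the two factor bases $V_1=\{a,ab\}$ and $V_2=\{a,ba\}$. Both are nontrivial: they differ from $\Sigma$ because they contain a word of length $2$. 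They also differ from $\{\mu(a),\mu(b)\}$, since $\mu(a)$ contains a $b$ and has length at least $3$, so $\mu(a)\notin\{a,ab,ba\}$.

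By Proposition \ref{prop:left-factor} (made concrete as in the proof of Theorem \ref{thm:d-p reducible}), this gives $\mu=\lambda_1\circ\mu_1=\lambda_2\circ\mu_2$, where $\lambda_1(a)=a,\lambda_1(b)=ab$ and $\lambda_2(a)=a,\lambda_2(b)=ba$. The preimages satisfy $\mu_2=(\mu_1)^r$ up to the natural correspondence; indeed $\mu^r$ has the same shape as $\mu$, and $\lambda_2=\lambda_1^r$. Both $\lambda_1$ and $\lambda_2$ are irreducible. For $\{a,ab\}$, the only factor bases are itself and $\{a,b\}$, since a basis must contain $a$ and then some word generating $ab$. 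The same holds for $\{a,ba\}$. Both are also rank-preserving, because no powers of $a$ and $ab$ (or of $a$ and $ba$) coincide.

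Now use the hypothesis that $\mu$ is a factor of $\varphi$, which gives $\varphi=\psi_2'\circ\mu\circ\psi_1'$. Then
\[\varphi=\psi_2'\circ\lambda_1\circ\mu_1\circ\psi_1'=\psi_2'\circ\lambda_2\circ\mu_2\circ\psi_1'.\]
Each of $\psi_2',\psi_1',\mu_1,\mu_2$ is rank-preserving, because a rank-reducing factor would make $\varphi$ rank-reducing. Each is either an automorphism or can be refined into irreducible factors. The refinement terminates, since on a binary alphabet a rank-preserving non-automorphism has total image length strictly greater than $2$, and the total image length is multiplicative enough to bound the number of non-automorphic factors by the total length of the images of $\varphi$. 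This yields two complete irreducible factorisations of $\varphi$, one containing the factor $\lambda_1$ at a certain position and the other containing $\lambda_2$ at that position.

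The main obstacle is showing that these two factorisations are genuinely different, that is, not equal up to automorphism insertions and reordering. My plan is to compare the leftmost irreducible factors that differ, and to argue that $\lambda_1$ and $\lambda_2$ are inequivalent. Any automorphism composite of $\lambda_1$ has images $\{a,ab\}$ or $\{b,ba\}$, never $\{a,ba\}$, so $\lambda_1\not\cong\lambda_2$, using Definition \ref{def: equivalent}. If necessary, I will settle for the weaker reading of ``not unique'' as the existence of two distinct chains of irreducible factors. That reading follows directly because $\lambda_1\neq\lambda_2$ are both irreducible, and they occupy the same position after a common left part $\psi_2'$. I expect the choice of interpretation to be the delicate point, and I will state it explicitly, following Example \ref{ex: thm 1 and non-uniquness}.
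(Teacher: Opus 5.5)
Your proposal is correct and follows the paper's proof. The factor bases $\{a,ab\}$ and $\{a,ba\}$ of $\mu$ give $\mu=\lambda_1\circ\mu_1=\lambda_2\circ\mu_2$ via Proposition~\ref{prop:left-factor}, which transports to $\varphi=\psi_2'\circ\mu\circ\psi_1'$, and your checks (nontriviality, irreducibility of $\lambda_1,\lambda_2$, termination, $\lambda_1\not\cong\lambda_2$) supply details the paper omits. Note, however, that this yields non-uniqueness only of \emph{ordered} factorisations, which is all the paper claims. The ``up to reordering'' version you first aim for can fail: for $\varphi(a)=aba$, $\varphi(b)=a$ the only irreducible factorisations, up to inserted automorphisms, are $\lambda_1\circ(\lambda_2\circ\sigma)$ and $\lambda_2\circ(\lambda_1\circ\sigma)$, where $\sigma$ swaps $a$ and $b$. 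Also drop the unused aside $\mu_2=(\mu_1)^r$, which is false in general: for $\mu(a)=aaba$, $\mu(b)=a$ one gets $\mu_1(a)=aba$ but $\mu_2(a)=aab$, and the true identity is $\mu^r=\lambda_2\circ\mu_1^r$.
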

    \begin{proof}
         Let $\mu:\Sigma^+\rightarrow\Sigma^+$ be an endomorphism that satisfies $\mu(a)\in a^+(ba^+)^+$ and $\mu(b)\in a^+(ba^+)^*$ and is a factor of $\varphi$. We can see that $\mu$ admits the factor bases $V_1=\{ab, a\}$ and $V_2=\{a, ba\}$. By Proposition \ref{prop:left-factor}, the rank-preserving endomorphisms $\rho_1, \rho_2:\Sigma^+\rightarrow\Sigma^+$ are two left-factors of $\mu$, where $\rho_1(a)=a, \rho_1(b)=ab, \rho_2(a)=a$ and $\rho_2(b)=ba$. Then, for rank-preserving endomorphisms $\iota_1, \iota_2:\Sigma^+\rightarrow\Sigma^+$, it follows that $\mu$ can be expressed as two distinct factorisations $\mu=\rho_1\circ \iota_1=\rho_2\circ \iota_2$. Thus, the factorisation of $\mu$ and therefore the factorisation of $\varphi$ is not unique, as $\mu$ is a factor of $\varphi$.
    \end{proof}
    After we have given sufficient conditions for reducible rank-preserving endomorphisms with non-unique factorisations, we now establish a characteristic condition for a large class of rank-preserving endomorphisms over a binary alphabet to have a unique factorisation into irreducible factors, motivated by the following examples.
\begin{example}
Consider the rank-preserving endomorphism $\varphi:\{a,b\}^+\rightarrow\{a,b\}^+$ that satisfies $\varphi(a)=a^6b^5a^4b^5$ and $\varphi(b)=a^4b^{10}a^4$. We ask if $\varphi$ has a unique factorisation into irreducible factors by considering factor bases of $\varphi$. For $\{v_1, v_2\}$ to be a factor basis of $\varphi$, without loss of generality $v_1$ must be a non-empty prefix of $\varphi(a)$. We first consider when $v_1=a^k$, for $k, 1\leq k \leq 4$. If $k=1$, then all strings of $a$ may be generated by $v_1$, and so we consider how we may generate the strings of $b$. Note that the strings of $b$ have lengths that are multiples of $5$, so $\varphi(a), \varphi(b)\in\{a, b^5\}^+$ and we have found a nontrivial factor basis of $\varphi$. If $k=2$, then we consider that all strings of $a$ have even length, so $\varphi(a), \varphi(b)\in \{a^2, b\}^+$ and we have found a nontrivial factor basis of $\varphi$. We can see by Proposition \ref{prop:derivedVset} that neither $\{a^2, b\}$ nor $\{a, b^5\}$ are more derived than the other---in fact, these factor bases are both maximally derived, and we have shown that $\varphi$ does not have a unique factorisation into irreducible rank-preserving endomorphisms. \hfill $\Diamond$
\end{example}
\begin{example}\label{ex:unique code}
    Consider the rank-preserving endomorphism $\varphi:\Sigma^+\rightarrow\Sigma^+$ for $\Sigma=\{a,b\}$ which satisfies $\varphi(a)=a^2b^5a^3b^5a^2$ and $\varphi(b)=a^5b^5a^3b^5a^7$. We can see that $\varphi$ admits the factor basis $V=\{a, b^5\}$, and investigate if this maximally derived factor basis is unique. To form a factor basis $\{v_1, v_2\}$, we see, without loss of generality, that $v_1$ must be a non-empty prefix of $\varphi(a)$. Exhaustively considering such prefixes of $\varphi(a)$, we see that factor bases of $\varphi$ must satisfy $V_1=\{w_1a^{\alpha_1}, a^{\alpha_2}w_2\}$, $V_2=\{w_3b^{\beta_1}, b^{\beta_2}w_4\}$ or $V_3=\{w_5a^{\gamma_1}, b^{\gamma_2}w_6\}$ for $w_1, w_2, \dots, w_6\in\Sigma^*$, where $\alpha_1, \alpha_2\notin\{2, 3\}, \beta_1+\beta_2=5, \gamma_1\in\{2,3\}$ and $\gamma_2=5$. If $\varphi$ admits any factor basis of the form $V_1$, then $V_1$ is less derived than $V$. We can see this as whenever a $b$ occurs in either element of the factor basis, it must do so as part of a string of length 5 and so $V_1$ does not correspond to another factorisation. If $\varphi$ admits a factor basis of the form $V_2$, then it follows that we cannot generate $\varphi(b)$ from this factor basis, and the same if $\varphi$ admits the factor basis $V_3$, and so the factorisation of $\varphi$ into irreducible components is unique. \hfill $\Diamond$
\end{example}
From these examples, we can see that considering where the boundaries of the elements of the factor bases lie in the morphic images of our rank-preserving endomorphisms can allow us to assert that the factor bases are non-unique. Thus, these examples can be formalised into the following proposition.
\begin{proposition}\label{prop:uniquely reducible}
    Let $\varphi:\Sigma^+\rightarrow\Sigma^+$ be a rank-preserving endomorphism for $\Sigma=\{a,b\}$. Let $\varphi$ satisfy 
        $\varphi(a)= a^{x_1} b^{y_1} a^{x_2} b^{y_2} \cdots a^{x_n} b^{y_n} a^{x_{n+1}},
        \varphi(b) = a^{t_1} b^{z_1} a^{t_2} b^{z_2} \cdots  a^{t_m} b^{z_m} a^{t_{m+1}}$,
 with $x_i, y_i', t_j, z_j'\in\mathbb{N}_{\geq 2}$ for $1\leq i \leq n+1, 1\leq i'\leq n$, $1\leq j \leq m+1, 1\leq j'\leq m$. Let $\gcd(x_1, \dots, x_{n+1}, t_1, \dots, t_{m+1})=k_1$ for $k_1\neq 1$, and $\gcd(y_1, \dots, y_n, z_1, \dots, z_m)=k_2$ for $k_2\neq 1$. It follows that $\varphi$ is uniquely reducible into a composition of irreducible factors if and only if:
    \begin{enumerate}[i)]
        \item $k_1=p_1^{\ell_1}$ for prime $p_1$ and $\ell_1\in\mathbb{N}$, $k_2=1$ and $\varphi(a), \varphi(b)\notin\{a^{\alpha_1}v_1a^{\alpha_2}, a^{\alpha_3}v_2\}^+ $ for any $v_1,v_2\in\Sigma^+$ and $\alpha_1, \alpha_2, \alpha_3\in\mathbb{N}$, or
        \item $k_1=1$, $k_2=p_2^{\ell_2}$ for prime $p_2$, $\ell_2\in\mathbb{N}$ and $\varphi(a), \varphi(b)\notin\{a^{\alpha_1}v_1b^{\beta_1}, b^{\beta_2}v_2a^{\alpha_2}\}^+$ for any $v_1, v_2\in\Sigma^+$ and $\alpha_1, \alpha_2,  \beta_1, \beta_2\in\mathbb{N}$.
    \end{enumerate}
    \end{proposition}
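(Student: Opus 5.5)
The plan is to reduce unique reducibility to the uniqueness of a maximally derived nontrivial factor basis, via Theorem \ref{thm:d-p reducible} and Proposition \ref{prop:derivedVset}. Every factorisation of $\varphi$ into irreducible components starts with a left-factor $\psi_n$ whose image set is a maximally derived nontrivial factor basis $V$ of $\varphi$ (Proposition \ref{prop:left-factor}). The remaining factors then factorise $\psi_{n-1}\circ\cdots\circ\psi_1$, which is determined by decoding $\varphi$ over the code $V$. So $\varphi$ is uniquely reducible exactly when two things hold: $\varphi$ has a unique maximally derived nontrivial factor basis up to renaming, and the decoded endomorphism is again uniquely reducible. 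I will argue by induction on $|\varphi(a)|+|\varphi(b)|$, using the unary case (Proposition \ref{prop:unaryirreducible}) as the model for the parts $a^{k_1}$ and $b^{k_2}$.

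First I will classify the factor bases $\{v_1,v_2\}$ of $\varphi$ by where element boundaries fall in the block structure, as in Example \ref{ex:unique code}. Because every block has length at least $2$, a basis element either is a pure power $a^{d}$ or $b^{d}$, or it contains both letters. If it is a pure power, $d$ must divide the corresponding block lengths, so $d\mid k_1$ or $d\mid k_2$. The candidate \emph{pure} bases are therefore $\{a^{d},b\}$ and $\{a,b^{e}\}$ with $d\mid k_1$ and $e\mid k_2$, which form chains ordered by divisibility. The \emph{mixed} bases are exactly those with the boundary shapes excluded in conditions i) and ii), namely $\{a^{\alpha_1}v_1a^{\alpha_2},a^{\alpha_3}v_2\}$ and $\{a^{\alpha_1}v_1b^{\beta_1},b^{\beta_2}v_2a^{\alpha_2}\}$ and their analogues.

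For the ``if'' direction, take case i) (case ii) is symmetric via Proposition \ref{prop:reversalreducible} and renaming). Here $k_2=1$ and the mixed shapes are excluded. I will show that every nontrivial factor basis is then pure of the form $\{a^{d},b\}$ with $d\mid k_1$, or is less derived than $\{a^{p_1},b\}$. This makes $\{a^{p_1},b\}$ the unique maximally derived basis. Decoding gives $\varphi'$ with $a$-block lengths divided by $p_1$. This $\varphi'$ again satisfies i) with $k_1/p_1$, or is irreducible once $k_1=p_1$, so the induction closes. The chain of divisors of a prime power is totally ordered, and this is what forces uniqueness.

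For the ``only if'' direction, I argue by contraposition. If $k_1$ has two distinct prime factors $p\neq q$, then $\{a^{p},b\}$ and $\{a^{q},b\}$ are incomparable maximally derived bases. If both $k_1,k_2\neq 1$, then $\{a^{p_1},b\}$ and $\{a,b^{p_2}\}$ are incomparable, as in the example preceding Example \ref{ex:unique code}. If a mixed shape occurs, it gives a nontrivial basis not contained in the pure chain, and so a second maximal basis.

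I expect the main obstacle to be the classification step, and in particular showing that when the mixed shapes are excluded, any nontrivial basis containing both letters is less derived than the pure one. I will try to do this by tracking the positions of boundaries inside long blocks, using the hypothesis that the block lengths are at least $2$. I will also need to check that the decoded $\varphi'$ inherits the exclusion of mixed shapes, which I will attempt by pushing the shapes forward through $a\mapsto a^{p_1}$.
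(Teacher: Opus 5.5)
Your outline follows the paper's proof. You classify factor bases by where their elements end, take $\{a^{p_1},b\}$ as the candidate unique maximally derived basis, and otherwise exhibit incomparable pairs $\{a^{p},b\},\{a^{q},b\}$ or $\{a^{p_1},b\},\{a,b^{p_2}\}$. The induction over decoded endomorphisms is your addition; the paper only looks at the top level.

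First, note that the statement as written is degenerate. Its standing hypotheses $k_1\neq 1\neq k_2$ contradict the requirement $k_2=1$ in i) and $k_1=1$ in ii). Moreover, unless restricted to nontrivial bases, the shape excluded in i) is always realised by $\{\varphi(a),\varphi(b)\}$. Taken literally, the proposition only says that $\varphi$ is never uniquely reducible. Your paragraph on $\{a^{p_1},b\}$ versus $\{a,b^{p_2}\}$ already proves this, once you check that both are maximally derived: a nontrivial basis more derived than $\{a^{p_1},b\}$ must contain $b$ and a root of $a^{p_1}$. You silently drop these hypotheses and aim at the intended characterisation, and there the plan breaks at two concrete points.

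\textbf{The ``only if'' step fails.} You infer a second maximal basis from any basis of an excluded shape, but such a basis can lie below $\{a^{p_1},b\}$. Take $\varphi(a)=a^2b^3a^2$ and $\varphi(b)=a^2b^2a^2b^3a^2$, which are in the hypothesis class with $k_1=2$, $k_2=1$.
\begin{itemize}
\item The nontrivial factor bases of $\varphi$ are exactly $\{a^2,b\}$, $\{a^2b^2,ba^2\}$ and $\{a^2b^3a^2,a^2b^2\}$. Each is contained in the semigroup generated by the previous one, and the last has the shape excluded in i), with $v_1=b^3$, $v_2=b^2$.
\item Decoding through $\{a^2,b\}$ gives $ab^3a$, $ab^2ab^3a$. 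Its nontrivial bases $\{ab^2,ba\}$ and $\{ab^3a,ab^2\}$ again form a chain.
\item Decoding through $\{ab^2,ba\}$ gives $ab$, $aab$, whose only nontrivial basis is $\{a,ab\}$.
\item Decoding through $\{a,ab\}$ gives the irreducible $a\mapsto b$, $b\mapsto ab$.
\end{itemize}
At every stage the maximally derived nontrivial basis is unique, so $\varphi$ is uniquely reducible although neither i) nor ii) holds. This step therefore cannot be repaired; the paper's case~A discussion asserts it without argument as well.

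\textbf{The ``if'' induction does not close.} Your base case, ``$\varphi'$ is irreducible once $k_1=p_1$'', is false. Bases of the second shape are not excluded by i), lie in $\{a^{p_1},b\}^+$, and descend through $a\mapsto a^{p_1}$.
\begin{itemize}
\item Take $\varphi(a)=a^2b^5a^2$ and $\varphi(b)=a^2b^5a^2b^3a^2$. Its only nontrivial bases are $\{a^2,b\}$, $\{a^2b^2,b^3a^2\}$ and $\{a^2b^5a^2,b^3a^2\}$. None has the shape excluded in i), so i) holds.
\item Yet the decoded $\varphi'(a)=ab^5a$, $\varphi'(b)=ab^5ab^3a$ is reducible via $\{ab^2,b^3a\}$.
\item $\varphi'$ has $a$-blocks of length $1$ and $k_1'=k_2'=1$. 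So neither your induction hypothesis nor the proposition applies to it, and i), ii) give no control over the mixed bases governing its further factorisation.
\end{itemize}

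Two smaller problems remain. Your classification of mixed bases is too coarse: $\{a,b^2a^2b^3\}$ is a nontrivial basis of $\varphi(a)=a^2b^2a^2b^3a^2$, $\varphi(b)=a^4b^2a^2b^3a^2$ that has neither excluded shape. Also, ii) is not obtained from i) by reversal and renaming, since swapping $a$ and $b$ leaves the class of endomorphisms whose images begin and end with $a$.
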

    \begin{proof}
        We begin by considering elements of factor bases of $\varphi$. It is clear that one element of a factor basis must be a prefix of $\varphi(a)$. We consider the three cases where this prefix may end:
    \begin{enumerate}[A)]
        \item the prefix $u\in\Sigma^+$ satisfies $\varphi(a)=ua^{\alpha_1}v_1$, and  $u=a^{\alpha_2}v_2a^{\alpha_3}$ for some $\alpha_1, \alpha_2, \alpha_3 \in\mathbb{N}$, $v_1,v_2\in\Sigma^*$ and $\alpha_3+\alpha_1\in\{x_1, x_2, \dots, x_{n+1}\}$,
        \item the prefix $u\in\Sigma^+$ satisfies $\varphi(a)=ub^{\beta_1}v_1$ and $u=a^{\alpha_1}v_2b^{\beta_2}$ for some $\alpha_1, \beta_1, \beta_2\in\mathbb{N}$, $v_1, v_2\in\Sigma^*$ and $\beta_1+\beta_2\in\{y_1, y_2, \dots, y_n\}$, or
        \item the prefix $u\in\Sigma^+$ satisfies $\varphi(a)=ua^{x_i}v_1$ or $\varphi(a)=ub^{y_i'}v_2$, for $i, 2\leq i \leq n$, $i', 1\leq i' \leq n$, or equivalently the prefix is of the form $a^{\alpha_1}v_3a^{\alpha_2}$ or $a^{\alpha_3}v_4b^{\beta_1}$ respectively, where $\alpha_1, \alpha_2, \alpha_3\in\{x_1, x_2, \dots, x_n\}$ and $\beta_1\in\{y_1, y_2, \dots, y_n\}$ and $v_1, v_2, v_3, v_4\in\Sigma^*$.
    \end{enumerate}
    In case A), we see that the factor basis must be of the form $\{a^{\alpha_1}v_1a^{\alpha_2}, a^{\alpha_3}v_2\}$, for $\alpha_1, \alpha_2, \alpha_3\in\mathbb{N}$, and $v_1, v_2\in\Sigma^*$, where the first element of the factor basis is a prefix of $\varphi(a)$. It is also important to note that every distinct $b^{y_i}$ and $b^{z_j}$ is a factor of $v_1$ or $v_2$. In case B), when the prefix ends in a string over $b$, we can see that the factor basis is of the form $\{a^{\alpha_1}v_1b^{\beta_1}, b^{\beta_2}v_2a^{\alpha_2}\}$ for some $\alpha_1, \alpha_2, \beta_1, \beta_2\in\mathbb{N}$ and $v_1, v_2\in\Sigma^*$. It is also important to note that $\alpha_1=x_1=t_1$ and $\alpha_2=x_{n+1}=t_{m+1}$, otherwise the first and last strings of $a$s would not be covered by the factor basis. Therefore, the length of every string of $a^{x_i}$ or $a^{t_j}$ in the factor basis has a divisor $k_1$. In case C), when the prefix ends between a string over $a$ and a string over $b$, the factor basis is of the form $\{a^{\alpha_1}v_1b^{\beta_1}, a^{\alpha_2}v_2a^{\alpha_3}\}$ or $\{a^{\alpha_1}v_1a^{\alpha_2}, b^{\beta_1}v_2\}$. If these factor bases are not maximally derived, then for $\gcd(x_1, \dots, x_{n+1}, t_1, \dots, t_{m+1})=k_1$ and $\gcd(y_1, \dots, y_n, z_1, \dots, z_m)=k_2$ we have $k_1\neq 1$ or $k_2\neq 1$. Then, the endomorphism $\varphi$ admits factor bases $\{a^{k_1}, b\}$ or $\{a, b^{k_2}\}$ respectively.\par
    We aim to exhaustively consider all values of $k_1$ and $k_2$ to characterise when $\varphi$ is uniquely reducible into a composition of irreducible factors.\par
    We first show that no other factor basis satisfying case C) is maximally derived. As the first element of the factor basis ends at a boundary between a string over $a$ and a string over $b$, every length of the strings over $a$ in the factor basis must be in $\{x_1, x_2, \dots, x_{n+1}, t_1, t_2, \dots, t_{m+1}\}$, and every length of the strings over $b$ in the factor basis must be among $\{y_1, y_2, \dots, y_n, z_1, z_2, \dots, z_n\}$. Since $k_1$ divides all elements of $\{x_1, x_2, \dots, x_{n+1}, t_1, t_2, \dots, t_{m+1}\}$ and $k_2$ divides all elements of $\{y_1, y_2, \dots, y_n, z_1, z_2, \dots, z_n\}$, every factor basis of case C) lies in $\{a^{k_1},b\}^+$ when $k_1\neq 1$ and in $\{a,b^{k_2}\}^+$ when $k_2\neq 1$. Hence $\{a^{k_1},b\}$ or $\{a, b^{k_2}\}$ is more derived than any factor basis satisfying case C). It follows that case C) contributes no new distinct factorisations of $\varphi$.\par
    We next show that no factor basis arising from case B) is maximally derived when $k_1=p_1^{\ell_1}$ and $k_2=1$. We can see that $\varphi$ admits the factor basis $V_1=\{a^{p_1},b\}$, which is maximally derived. We must show that this is the only maximally derived, nontrivial factor basis that $\varphi$ admits. Considering any admissible factor bases of the form B) and C), we note that they must contain all distinct $a^{x_i}$ and $a^{t_j}$. It then must follow that $V_1$ is more derived than any of these factor bases, and so we have not found a new factorisation of $\varphi$. Considering factor bases that satisfy case $A)$, we can see that it is only true that $\varphi$ admits no more factorisations if there does not exist a factor basis that satisfies case $A)$, or equivalently if $\varphi(a),\varphi(b)\notin\{a^{\alpha_1}v_1a^{\alpha_2}, a^{\alpha_3}v_2\}$.\par 
    A similar argument can be made if $k_1=1$ and $k_2=p_2^{\ell_2}$. We can see that $\varphi$ admits the factor basis $V_2=\{a, b^{p_2}\}$, and consider when this is the only maximally derived factor basis that $\varphi$ admits. Considering factor bases satisfying cases A) and C), we note that they contain every distinct $b^{y_i}$ and $b^{z_j}$. It then follows that $V_2$ is more derived than any factor basis of this form. Considering factor bases that satisfy case B) we can see that it is only true that $\varphi$ admits no more factorisations if there does not exist a factor basis that satisfies case B), or equivalently, the morphic images satisfy $\varphi(a), \varphi(b)\notin\{a^{\alpha_1}v_1b^{\beta_1}, b^{\beta_2}v_2a^{\alpha_2}\}$.\par
    If $k_1=p_1^{\ell_1}$ and $k_2=p_2^{\ell_2}$, we can see that $\varphi$ admits the factor bases $V_3=\{a^{p_1}, b\}$ and $V_4=\{a, b^{p_2}\}$. Both $V_3$ and $V_4$ are maximally derived, and so we can see that $\varphi$ does not have a unique factorisation.\par
    If $k_1\neq p_1^{\ell_1}$ and $k_1\neq 1$, then this is equivalent to $k_1$ having more than one distinct prime factor. If we choose two of these prime factors, denoting them $p'_{1}$ and $p''_{1}$, then we see that $\varphi$ admits the factor bases $V_5=\{a^{p'_{1}}, b\}$ and $V_6=\{a^{p''_{1}},b\}$. From this, we can see that $\varphi$ does not have a unique factorisation. The same argument can be made when $k_2\neq p_2^{\ell_2}$ and $k_2\neq 1$.\par
    As we have considered all possible values of $k_1$ and $k_2$, we have shown that $\varphi$ is uniquely reducible into a composition of irreducible factors if and only if it satisfies conditions i) and ii) given in the statement of the proposition.
    \end{proof}
    So far, we have been investigating non-unique factorisations when the factors of the rank-preserving endomorphism are not unique---that is, when an endomorphism admits distinct, maximally derived factor bases. To conclude this section, we now briefly highlight a distinct source of non-uniqueness: when the irreducible factors of a rank-preserving endomorphism are unique, but their order is not. Specifically, we investigate when two endomorphisms commute under composition. \par
Commutativity of morphisms is not a new question---Honkala \cite{Hon19} characterises the commutativity of morphisms which have upper triangular incidence matrices, a concept investigated further in Section \ref{sec:Incidence matrices}. The following proposition gives a simple condition for the commutativity of endomorphisms.
\begin{proposition}\label{prop: commutative}
    Let $\varphi_1, \varphi_2:\Sigma^+\rightarrow\Sigma^+$ be  endomorphisms for an alphabet $\Sigma=\{a_1, a_2, \dots, a_n\}$. Then $\varphi_1 $ and $ \varphi_2$ commute under composition if there exists a set $S\subseteq \Sigma$ such that:
    \begin{enumerate}[i)]
        \item  for all elements $s\in S,\varphi_1(s)\in S^+$ and $\varphi_2(s)=s$, and
        \item for all elements $r\notin S, \varphi_1(r)=r$ and $\varphi_2(r)\in (\Sigma\setminus S)^+$.
    \end{enumerate}
\end{proposition}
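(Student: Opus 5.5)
Since $\varphi_1\circ\varphi_2$ and $\varphi_2\circ\varphi_1$ are both endomorphisms, and an endomorphism is determined by its values on the symbols of $\Sigma$, I will only check that $(\varphi_1\circ\varphi_2)(c)=(\varphi_2\circ\varphi_1)(c)$ for every $c\in\Sigma$. Every symbol lies either in $S$ or in $\Sigma\setminus S$, so the proof splits into these two cases, and the conditions i) and ii) handle them symmetrically.

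First I record one consequence of the hypotheses. Condition i) says that $\varphi_2$ fixes every symbol of $S$. By the morphism property, $\varphi_2(w)=w$ for every $w\in S^+$: write $w=s_1s_2\cdots s_k$ with $s_j\in S$, so $\varphi_2(w)=\varphi_2(s_1)\cdots\varphi_2(s_k)=s_1\cdots s_k$. In the same way, condition ii) says that $\varphi_1$ fixes every symbol outside $S$, so $\varphi_1(w)=w$ for every $w\in(\Sigma\setminus S)^+$.

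Now take $s\in S$. On one side, $(\varphi_2\circ\varphi_1)(s)=\varphi_2(\varphi_1(s))$. Since $\varphi_1(s)\in S^+$, the observation above gives $\varphi_2(\varphi_1(s))=\varphi_1(s)$. On the other side, $(\varphi_1\circ\varphi_2)(s)=\varphi_1(\varphi_2(s))=\varphi_1(s)$, because $\varphi_2(s)=s$. So both compositions agree on $s$.

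Next take $r\notin S$. Since $\varphi_2(r)\in(\Sigma\setminus S)^+$, the second observation gives $(\varphi_1\circ\varphi_2)(r)=\varphi_1(\varphi_2(r))=\varphi_2(r)$. Also $(\varphi_2\circ\varphi_1)(r)=\varphi_2(\varphi_1(r))=\varphi_2(r)$, because $\varphi_1(r)=r$. So the compositions agree on $r$ as well, and therefore $\varphi_1\circ\varphi_2=\varphi_2\circ\varphi_1$.

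There is no genuine obstacle here. The only step that needs care is extending "fixes every symbol of $S$" to "fixes every word of $S^+$", and this is exactly where the hypotheses $\varphi_1(s)\in S^+$ and $\varphi_2(r)\in(\Sigma\setminus S)^+$ are used. The edge cases $S=\emptyset$ and $S=\Sigma$ are covered by the same argument, since in each of them one of the two cases is simply empty.
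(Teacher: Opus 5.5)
Your proof is correct and follows the paper's own route: you check that $\varphi_1\circ\varphi_2$ and $\varphi_2\circ\varphi_1$ agree on each generator, splitting into $s\in S$ and $r\notin S$, and you make explicit that fixing symbols pointwise means fixing all of $S^+$ (respectively $(\Sigma\setminus S)^+$). Your $r\notin S$ case is also stated more carefully than the paper's: both compositions equal $\varphi_2(r)$, whereas the paper's text writes $\varphi_1(r)$ there.
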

\begin{proof}
    Let $\varphi_1, \varphi_2$ satisfy conditions i) and ii). Considering all elements $s\in S$, we have $\varphi_2(s)=s$, so $\varphi_1(\varphi_2(s))=\varphi_1(s)$. As $\varphi_1(s)\in S^+$ and $\varphi_2(s)=s$, we again see that $\varphi_2(\varphi_1(s))=\varphi_1(s)$, so $\varphi_1$ and $\varphi_2$ are commutative for elements $s\in S$. Similarly, for $r\notin S$, we have $\varphi_1(r)=r$, so $\varphi_2(\varphi_1(r))=\varphi_1(r)$, and $\varphi_1(\varphi_2(r))=\varphi_1(r)$, so $\varphi_1$ and $\varphi_2$ are commutative for elements $r\in \Sigma\setminus S$. As $\Sigma=S\cup (\Sigma\setminus S)$, the endomorphisms $\varphi_1$ and $\varphi_2$ commute under composition for all $a_i\in \Sigma$. Note that if $S=\emptyset$ or $S=\Sigma$, then $\varphi_1$ or $\varphi_2$ are the identity morphism respectively.
\end{proof}
We highlight that, given the technical complexity of finding strong results for a specific class of morphisms, as shown in \cite{Hon19}, we expect that obtaining strong results for the commutativity of all endomorphisms will be challenging. However, we consider this a worthwhile open problem.
\section{Incidence Matrices}\label{sec:Incidence matrices}
We now use incidence matrices as a tool to investigate the reducibility of endomorphisms. Incidence matrices are widely applied to the study of morphisms, as seen in \cite{Cau03}, where they are utilised to solve the decreasing length conjecture. We show that endomorphism composition corresponds to matrix multiplication, and so the reducibility of an endomorphism implies the factorisability of its incidence matrix into a product of nontrivial matrices. However, the converse is not true: given a rank-preserving endomorphism $\varphi$ with a factorisable incidence matrix, it can be shown that $\varphi$ is not necessarily reducible. This raises the question of how the structure of an incidence matrix constrains the reducibility of its corresponding rank-preserving endomorphisms. \par
We begin by identifying the relationship between matrix multiplication and the reducibility of endomorphisms. We then identify elementary transformations on matrices that are robust with respect to (ir-)reducibility, and give sufficient conditions for matrices such that every endomorphism they represent is (ir-)reducible. Finally, we highlight how work investigating the factorisation of non-negative integer matrices in their own right can correspond to the reducibility of endomorphisms.
\begin{definition}
    Let $\varphi:\Sigma^+\rightarrow\Sigma^+$ be an endomorphism for an alphabet $\Sigma=\{a_1, a_2, \dots, a_n\}$. We define $P(\varphi)=(m_{i,j})_{1\leq i,j\leq n}$  where $m_{i,j}=|\varphi(a_i)|_{a_j}$, and we call $P(\varphi)$ the \emph{incidence matrix} of $\varphi$. Furthermore, if the endomorphism is not specified, we call any square matrix $P$ with non-negative integer values an incidence matrix.
\end{definition}
To illustrate this definition, we give the following example:
\begin{example}
    Let $\varphi:\{a,b,c\}^+\rightarrow\{a,b,c\}^+$ be the endomorphism that satisfies $\varphi(a)=abcaab, \varphi(b)=babbbba, \varphi(c)=caabccca$. Then $\varphi$ has the incidence matrix $P(\varphi)=\left(\begin{smallmatrix}
        3 & 2 & 1\\
            2 & 5 & 0\\
            3 & 1 & 4
    \end{smallmatrix}\right)$.
    \hfill $\Diamond$
\end{example}
Through our restrictions to rank-preserving endomorphisms, we can see that the incidence matrices we consider are square matrices with no zero rows or columns. The relation between incidence matrices and elementary and simplifiable morphisms has been investigated in~\cite{Bea23}. In this paper, the authors give a sufficient condition for a morphism to be elementary (in our terminology, rank-preserving).
\begin{proposition}[Proposition 4.3 \cite{Bea23}]
    Let $\Sigma_1$ and $\Sigma_2$ be alphabets and $\varphi:\Sigma_1^*\rightarrow\Sigma_2^*$ be an endomorphism. If the rank of $P(\varphi)$ is equal to $|\Sigma_1|$, then $\varphi$ is rank-preserving.
\end{proposition}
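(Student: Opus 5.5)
We argue by contraposition: if $\varphi$ is rank-reducing, then $\rank P(\varphi)<|\Sigma_1|$. Write $n=|\Sigma_1|$. By Definition~\ref{def:rank-preserving}, there is a set $V=\{v_1,\dots,v_m\}\subset\Sigma_2^+$ with $m<n$ such that $\varphi(a_i)\in V^+$ for every $a_i\in\Sigma_1$.

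First, fix for each $i$ a decomposition $\varphi(a_i)=v_{i_1}v_{i_2}\cdots v_{i_{k_i}}$ into elements of $V$; uniqueness is not needed. This yields a factorisation $\varphi=\rho\circ\sigma$, where $\Gamma=\{c_1,\dots,c_m\}$ is a fresh alphabet of size $m$, the morphism $\rho:\Gamma^+\rightarrow\Sigma_2^+$ is given by $\rho(c_j)=v_j$, and $\sigma:\Sigma_1^+\rightarrow\Gamma^+$ is given by $\sigma(a_i)=c_{i_1}\cdots c_{i_{k_i}}$. This is essentially the construction of Proposition~\ref{prop:left-factor}, carried out with a smaller intermediate alphabet instead of a surjection onto $\Sigma$.

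Second, we verify that incidence matrices of (not necessarily square) morphisms multiply under composition: $P(\rho\circ\sigma)=P(\sigma)P(\rho)$, with the paper's row convention $m_{i,j}=|\varphi(a_i)|_{a_j}$. The proof is a one-line count,
\[
|\rho(\sigma(a_i))|_{b}=\sum_{j=1}^m|\sigma(a_i)|_{c_j}\,|\rho(c_j)|_{b},
\]
since each occurrence of $c_j$ in $\sigma(a_i)$ contributes $|v_j|_b$ occurrences of the letter $b$.

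Third, $P(\sigma)$ is an $n\times m$ matrix, so
\[
\rank P(\varphi)\le\rank P(\sigma)\le m<n,
\]
which contradicts $\rank P(\varphi)=n$. Hence $\varphi$ is rank-preserving.

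The only point needing care is that $P(\varphi)$ for a non-endomorphism is $|\Sigma_1|\times|\Sigma_2|$, so the notion of rank has to be taken for rectangular matrices. Likewise, the intermediate alphabet $\Gamma$ must be allowed to differ from $\Sigma_1$ and $\Sigma_2$; this goes slightly beyond the paper's endomorphism-only setup but is harmless. The matrix multiplication identity is the main step, and it is routine.
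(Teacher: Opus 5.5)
Your proof is correct: factoring $\varphi=\rho\circ\sigma$ through an alphabet $\Gamma$ with $|\Gamma|=m<|\Sigma_1|$ and using $P(\varphi)=P(\sigma)P(\rho)$ gives $\rank P(\varphi)\le m<|\Sigma_1|$, which is exactly the contrapositive required. The paper gives no proof of its own and only quotes the result from \cite{Bea23}; your argument is the standard one, namely the rectangular version of Corollary~\ref{cor: multiplication composition} applied to the left-factor construction of Proposition~\ref{prop:left-factor}, and the fresh alphabet $\Gamma$ is what lets it cover the stated case $\Sigma_1\neq\Sigma_2$ (for endomorphisms alone one could instead note that the $\psi_2$ induced by $V$ and a surjection has an incidence matrix with at most $m$ distinct rows).
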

Thus, rank-reducing endomorphisms have incidence matrices with a rank strictly smaller than $|\Sigma|$. Note that the converse of this property does not hold; the matrix $P=\left(\begin{smallmatrix} 1 & 1 \\ 1 & 1
\end{smallmatrix}\right)$ with $\rank(P)=1$ satisfies $P(\varphi)=P$ for the rank-preserving endomorphism $\varphi:\{a,b\}^+\rightarrow\{a,b\}^+$ defined as $\varphi(a)=ab, \varphi(b)=ba$. We can also see that this property does not hold for irreducible endomorphisms: a rank-preserving endomorphism that admits an incidence matrix with rank equal to $|\Sigma|$ is not necessarily irreducible, as shown in Example \ref{ex:constant times matrix}.\par
As previously seen in \cite{Cau03}, incidence matrices can be used to calculate the Parikh vector of the image of a word under a given endomorphism. As a corollary of Proposition 4.2 \cite{Cau03}, we can see that multiplication of incidence matrices corresponds to composition of endomorphisms.
\begin{proposition}[Proposition 4.2 \cite{Cau03}]
    Let $\varphi:\Sigma^+\rightarrow\Sigma^+$ be an endomorphism for an alphabet $\Sigma=\{a_1, a_2, \dots, a_n\}$ and let $w\in\Sigma^+$ be a word. It then follows that:
\begin{align*}
       (|\varphi(w)|_{a_1}, |\varphi(w)|_{a_2}, \dots, |\varphi(w)|_{a_n})^T=(|w|_{a_1}, |w|_{a_2}, \dots, |w|_{a_n})^TP(\varphi).
\end{align*}
\end{proposition}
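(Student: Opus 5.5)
The plan is to count occurrences letter by letter, using only that $\varphi$ is a morphism and that the factor $|\cdot|_{a_j}$ is additive under concatenation. The target identity is the row-vector equation $p(\varphi(w)) = p(w)\,P(\varphi)$. In coordinates, for every $j$, this reads $|\varphi(w)|_{a_j}=\sum_{i=1}^n |w|_{a_i}\, m_{i,j}$ with $m_{i,j}=|\varphi(a_i)|_{a_j}$. The displayed transposes in the statement should be read as saying exactly this: the Parikh row vector of $\varphi(w)$ equals the Parikh row vector of $w$ multiplied on the right by $P(\varphi)$.

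First, I will note that for any words $u,v\in\Sigma^*$ and any single symbol $a_j$ we have $|uv|_{a_j}=|u|_{a_j}+|v|_{a_j}$. Occurrences of a single letter cannot straddle the boundary between $u$ and $v$, so there is no overlap issue here, unlike for general factors $v$ in the definition of $|w|_v$.

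Second, I will write $w=u_1u_2\cdots u_k$ with each $u_\ell\in\Sigma$. Since $\varphi$ is a morphism, $\varphi(w)=\varphi(u_1)\varphi(u_2)\cdots\varphi(u_k)$. Iterating the additivity from the first step then gives $|\varphi(w)|_{a_j}=\sum_{\ell=1}^k |\varphi(u_\ell)|_{a_j}$.

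Third, I will group the positions $\ell$ according to which letter $u_\ell$ is. Exactly $|w|_{a_i}$ of the $u_\ell$ equal $a_i$, and each such $u_\ell$ contributes $m_{i,j}$. Hence $|\varphi(w)|_{a_j}=\sum_{i=1}^n |w|_{a_i} m_{i,j}$, which is the $j$-th entry of $p(w)P(\varphi)$. Since this holds for all $j$, the vector identity follows. Alternatively, I could run an induction on $|w|$ with base case $w=a_i$, where the right-hand side is the $i$-th row of $P(\varphi)$ by definition. The inductive step $w=w'a_i$ uses the same additivity.

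There is no real obstacle here; the only point needing care is the orientation convention. With $m_{i,j}=|\varphi(a_i)|_{a_j}$, rows are indexed by source letters, so the Parikh vector must multiply on the left as a row vector. I will state this explicitly so the transpose notation in the statement is unambiguous. This is also the orientation under which the subsequent corollary $P(\psi_2\circ\psi_1)=P(\psi_1)P(\psi_2)$ comes out. To obtain that corollary, I would apply the identity to $w=\psi_1(a_i)$ for each $i$.
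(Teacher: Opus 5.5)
Your proof is correct: additivity of single-letter counts under concatenation, the morphism property, and grouping the positions of $w$ by letter give $|\varphi(w)|_{a_j}=\sum_{i}|w|_{a_i}m_{i,j}$ for every $j$, which is exactly the claimed identity, and your reading of the transposes as Parikh row vectors multiplied on the right by $P(\varphi)$ is the intended one. The paper does not prove this statement itself but imports it from Cautis et al.; your argument is the standard one, and it is the same coordinate computation the paper uses (with $w=\psi_1(a_i)$) in the proof of the corollary $P(\psi_2\circ\psi_1)=P(\psi_1)P(\psi_2)$.
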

\begin{corollary}\label{cor: multiplication composition}
    Let $\psi_1, \psi_2:\Sigma^+\rightarrow\Sigma^+$ be rank-preserving endomorphisms for $\Sigma=\{a_1, a_2, \dots, a_n\}$ and define $\varphi=\psi_2\circ \psi_1$. Then $P(\varphi)=P(\psi_1)P(\psi_2)$.
\end{corollary}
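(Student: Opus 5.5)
The plan is to derive $P(\varphi)=P(\psi_1)P(\psi_2)$ from Proposition 4.2 \cite{Cau03}, applying it once for each $\psi_i$ and comparing the result row by row. Fix a symbol $a_i\in\Sigma$ and write $e_i$ for the $i$-th standard basis (row) vector, which is the Parikh vector of the one-letter word $a_i$. By the definition of the incidence matrix, the $i$-th row of $P(\varphi)$ is exactly the Parikh vector of $\varphi(a_i)=\psi_2(\psi_1(a_i))$. So it suffices to show that, for every $i$, this Parikh vector equals the $i$-th row of $P(\psi_1)P(\psi_2)$, namely $e_iP(\psi_1)P(\psi_2)$.

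First, apply Proposition 4.2 \cite{Cau03} to $\psi_1$ with $w=a_i$. This gives that the Parikh vector of $\psi_1(a_i)$ is $e_iP(\psi_1)$, i.e.\ the $i$-th row of $P(\psi_1)$; this also follows directly from the definition of $P(\psi_1)$. Next, apply the same proposition to $\psi_2$ with the word $w=\psi_1(a_i)$. This is legitimate because $\psi_1(a_i)\in\Sigma^+$ is non-empty. It yields that the Parikh vector of $\psi_2(\psi_1(a_i))$ equals the Parikh vector of $\psi_1(a_i)$ multiplied by $P(\psi_2)$. Substituting the first step gives $e_iP(\psi_1)P(\psi_2)$, using associativity of matrix multiplication. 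Since this holds for all $i$, $1\leq i\leq n$, the two matrices agree row by row, which proves the equality.

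Rank-preservation plays no role in the argument beyond ensuring that all images are non-empty words, so the identity holds for arbitrary endomorphisms. The only point needing care, and the main potential obstacle, is the orientation convention. The proposition writes Parikh vectors with a transpose but multiplies them on the left of the matrix, so they must be read as row vectors. Under that reading rows of $P$ index source symbols, which is consistent with $m_{i,j}=|\varphi(a_i)|_{a_j}$. This is exactly why the order of the factors is reversed ($P(\psi_1)P(\psi_2)$ rather than $P(\psi_2)P(\psi_1)$), and I would state this convention explicitly before the computation. Alternatively, the entrywise identity $|\psi_2(\psi_1(a_i))|_{a_j}=\sum_k |\psi_1(a_i)|_{a_k}\,|\psi_2(a_k)|_{a_j}$ can be checked directly by counting: each occurrence of $a_k$ in $\psi_1(a_i)$ is replaced by $\psi_2(a_k)$, which contributes $|\psi_2(a_k)|_{a_j}$ occurrences of $a_j$. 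This gives the same conclusion without relying on the cited proposition.
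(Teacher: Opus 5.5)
Your proposal is correct and is essentially the paper's proof. The paper argues by the direct entrywise count $|\psi_2(\psi_1(a_i))|_{a_j}=\sum_{k}|\psi_1(a_i)|_{a_k}\,|\psi_2(a_k)|_{a_j}$, which is your closing alternative, and your main route through Proposition 4.2 of \cite{Cau03} with $w=\psi_1(a_i)$ is that same count written with row vectors; your remarks on the row-vector convention and on rank-preservation being unnecessary are both accurate.
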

\begin{proof}
    We can see that $P(\varphi)_{i,j}=|\varphi(a_i)|_{a_j}=\sum_{1\leq k \leq n}|\psi_1(a_i)|_{a_k}|\psi_2(a_k)|_{a_j}=\sum_{1\leq k \leq n}P(\psi_1)_{i,k}P(\psi_2)_{k_j}=(P(\psi_1)P(\psi_2))_{i,j}$.
\end{proof}
This proposition shows that if a rank-preserving endomorphism $\varphi_1$ is reducible, then $P(\varphi_1)$ can be expressed as a product of incidence matrices. The following proposition and example show that the converse is not true---given a rank-preserving endomorphism $\varphi_2$ and its incidence matrix $P(\varphi_2)$, being able to factorise $P(\varphi_2)$ into a product of nontrivial incidence matrices does not necessarily mean that $\varphi_2$ is reducible, merely that there exists a reducible endomorphism $\varphi_3$ such that $P(\varphi_2)=P(\varphi_3)$. This allows us to consider sets of endomorphisms represented by a single incidence matrix, and we ask which matrices force (ir-)reducibility of their respective rank-preserving endomorphisms.\par 
We say an incidence matrix $P$ \emph{represents exclusively (ir-)reducible rank-preserving endomorphisms if every rank-preserving endomorphism $\varphi:\Sigma^+\rightarrow\Sigma^+$ satisfying $P(\varphi)=P$ is (ir-)reducible.}
\begin{proposition}\label{prop:constant times matrix}
    Let $P\in\mathbb{N}_0^{n\times n}$ be an incidence matrix. For a $k\in\mathbb{N}$ with $k>1$, let $P'=kP$. Then $P$ does not represent exclusively irreducible rank-preserving endomorphisms.
\end{proposition}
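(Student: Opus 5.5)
We read the statement as claiming that $P'=kP$ (for $P$ the incidence matrix of some rank-preserving endomorphism and $k>1$) does not represent exclusively irreducible rank-preserving endomorphisms. The plan is to exhibit one rank-preserving \emph{reducible} endomorphism $\varphi$ with $P(\varphi)=kP$.

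\textbf{Construction.} Fix a rank-preserving $\psi_1:\Sigma^+\rightarrow\Sigma^+$ with $P(\psi_1)=P$. If no such $\psi_1$ exists, nothing needs to be shown about rank-preserving endomorphisms with matrix $P$. Let $\psi_2$ be the endomorphism with $\psi_2(a_i)=a_i^k$ for all $i$, so that $P(\psi_2)=kI$. Set $\varphi=\psi_2\circ\psi_1$. By Corollary~\ref{cor: multiplication composition}, $P(\varphi)=P(\psi_1)P(\psi_2)=P\cdot kI=kP=P'$. The composition $\psi_1\circ\psi_2$ would also give $P(\varphi)=kP$, and we keep it as an alternative.

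\textbf{Step 1: $\psi_2$ is rank-preserving and not an automorphism.} It is not an automorphism because $k>1$. For rank preservation, let $V$ be a factor basis of $\psi_2$. Each $a_i^k$ factors over $V$ only through elements that are powers of $a_i$. Hence $V$ contains, for each $i$, a nonempty word in $a_i^+$, and these words are pairwise distinct. So $|V|\geq n$, and by the restated Definition~\ref{def:rank-preserving}, $\psi_2$ is rank-preserving. Since $\psi_1$ is rank-preserving by choice, once $\varphi$ is known to be rank-preserving the factorisation $\varphi=\psi_2\circ\psi_1$ shows that $\varphi$ is reducible. Here we also need $\psi_1$ not to be an automorphism. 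If $\psi_1$ is an automorphism, then $P$ is a permutation matrix and $\varphi(a_i)=a_{\sigma(i)}^k$. In that case we instead split $k=k_1k_2$ when $k$ is composite. When $k$ is prime and $n\ge 2$, we replace $\psi_1$ by a suitable rearrangement. When $n=1$, this degenerate case is handled directly using Proposition~\ref{prop:unaryirreducible}.

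\textbf{Step 2 (main obstacle): $\varphi$ is rank-preserving.} If $\operatorname{rank}(P)=n$, then $\operatorname{rank}(kP)=n$, and Proposition~4.3 of \cite{Bea23} gives rank preservation immediately; this settles the full-rank case.

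In general we argue directly. Suppose $\varphi(\Sigma)\subseteq V^+$ with $|V|<n$, and take $V$ to be a code of minimal size, which we may do by the Defect Theorem. Every word $\varphi(a_i)=\psi_2(\psi_1(a_i))$ is a concatenation of blocks $a_j^k$ of a single letter. We would like to show that the boundaries of the $V$-factorisation can be shifted to lie at block boundaries of length divisible by $k$. That would produce $V'$ with $V'\subseteq\psi_2(\Sigma^+)$, $|V'|\le|V|$, and $\varphi(\Sigma)\subseteq V'^+$. Then $\psi_2^{-1}(V')$ would be a factor basis of $\psi_1$ of size less than $n$, contradicting rank preservation of $\psi_1$. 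Making this boundary-shifting rigorous is the step I expect to be hardest. The first approach I would try is a synchronisation argument on the free hull of $\varphi(\Sigma)$.

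\textbf{Fallback.} If the direct argument fails, use the freedom in choosing $\psi_1$ among endomorphisms with matrix $P$, for example by choosing images that form a prefix code with distinct first letters, so that rank preservation of $\psi_2\circ\psi_1$ becomes easy to verify.
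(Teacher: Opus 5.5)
Your construction is the paper's: compose an endomorphism with incidence matrix $P$ with $\psi(a_i)=a_i^k$ and apply Corollary~\ref{cor: multiplication composition}. The paper simply asserts that the composite is a reducible rank-preserving endomorphism. You correctly isolate what that assertion needs, but you leave the decisive step open and mishandle one case.

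\textbf{Rank preservation of $\psi_2\circ\psi_1$.} Reducibility is defined only for rank-preserving endomorphisms, so outside the full-rank case your Step~2 leaves the proof incomplete. This step cannot be waved through, because rank preservation is not closed under composition. For example, $a\mapsto ac,\ b\mapsto abc,\ c\mapsto abbc$ and $a\mapsto a,\ b\mapsto ba,\ c\mapsto c$ are both rank-preserving, but their composite $a\mapsto ac,\ b\mapsto abac,\ c\mapsto ababac$ has the factor basis $\{ab,ac\}$.

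Your fallback does close the gap, but it needs two facts you did not supply. First, if some endomorphism with matrix $P$ is rank-preserving, the row supports of $P$ satisfy Hall's condition. Otherwise $m$ rows would be supported on a set $L$ of fewer than $m$ letters, and $L$ together with the remaining $n-m$ images would be a factor basis of size less than $n$. Hence there is a bijection $\sigma$ with $P_{i,\sigma(i)}\geq 1$, and you can choose $\psi_1(a_i)$ with Parikh vector equal to the $i$-th row of $P$ and first letter $a_{\sigma(i)}$. Second, $n$ words with pairwise distinct first letters admit no factor basis $V$ with $|V|<n$, since their first $V$-factors are pairwise distinct. Because $\psi_2$ preserves first letters, both $\psi_1$ and $\psi_2\circ\psi_1$ are then rank-preserving.

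Your boundary-shifting idea also works for arbitrary $\psi_1$, without free hulls. Discard unused elements of $V$. Round the first run of every element containing two distinct letters up to a multiple of $k$, and its last run down to a multiple of $k$. Replace each $a^s\in V$ by $a^k$. Every run of every image is then reassembled with its original length, inserting copies of $a^k$ where needed. So $\psi_2^{-1}$ of the new set is a factor basis of $\psi_1$ of size at most $|V|$.

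\textbf{Permutation matrices.} If $P$ is a permutation matrix, every endomorphism with matrix $P$ is an automorphism, so $\psi_1$ cannot be ``rearranged''. For $n\geq 2$, split $\psi_2$ instead. Write $\psi_2=\alpha\circ\beta$, where $\alpha$ raises only $a_1$ to the $k$-th power and $\beta$ raises the other letters; then use $\alpha\circ(\beta\circ\psi_1)$. Alternatively, apply Proposition~\ref{prop: monomial matrix} to $kP$ directly.

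For $n=1$, your appeal to Proposition~\ref{prop:unaryirreducible} points the wrong way. For $P=(1)$ and $k$ prime, $kP=(k)$ represents only $a\mapsto a^k$, which is irreducible by the corollary following that proposition. So the statement is false in that case, and the case must be excluded.

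Likewise, the assumption that $P$ is the incidence matrix of some rank-preserving endomorphism is genuinely needed; it is not a case where ``nothing needs to be shown''. If it fails, then by the Hall argument $kP$, which has the same supports, represents no rank-preserving endomorphism. It would then vacuously represent exclusively irreducible ones. The paper's proof overlooks both points.
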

\begin{proof}
     Multiplying a matrix $M$ by a constant $k$ is equivalent to pre- or post-multiplying $M$ by the matrix $D\in\mathbb{N}^{n\times n}$, where $D=kI_{n}$. The incidence matrix $D$ represents only one endomorphism---the endomorphism $\psi:\Sigma^+\rightarrow\Sigma^+$ such that $\psi(a_i)=a_i^k$ for all $a_i\in \Sigma$. It follows that there exists a rank-preserving endomorphism $\varphi:\Sigma^+\rightarrow\Sigma^+$ with incidence matrix $P$ that can be transformed into one represented by $P'$ by some composition with the endomorphism $\psi$. Thus, the resultant rank-preserving endomorphism is always reducible. However, this would require the rank-preserving endomorphism to satisfy $\varphi(a_j)\in\{a_1^k, a_2^k, \dots, a_n^k\}^+$ for all $a_j\in\Sigma,$ which does not describe all rank-preserving endomorphisms with incidence matrix $P'$---these endomorphisms are not necessarily reducible or irreducible. If $k=1$, then $M$ is multiplied by the identity matrix, which represents the identity morphism, and so we can no longer find a factorisation of $\varphi$.
\end{proof}
The following example illustrates Proposition \ref{prop:constant times matrix}.
\begin{example}\label{ex:constant times matrix}
    Let $M\in\mathbb{N}^{2\times 2}$ be the incidence matrix that satisfies $M=\big(\begin{smallmatrix}  8 & 4 \\ 10 & 6\end{smallmatrix}\big)$. Note that $M=2\left(\begin{smallmatrix}
        4 & 2 \\ 5 & 3
    \end{smallmatrix}\right)$, and let $\varphi_1, \varphi_2:\Sigma^+\rightarrow\Sigma^+$ be the rank-preserving endomorphisms satisfying $\varphi_1(a)=a^2b^2a^4b^2a^2,  \varphi_1(b)=a^6b^2a^2b^4a^2, \varphi_2(a)=aba^3b^2a^2ba^2$ and $\varphi_2(b)=a^5b^3a^2ba^3b^2$. Note that $P(\varphi_1)=P(\varphi_2)=M$. It can be shown that $\varphi_1$ admits the factor basis $V=\{bb, aa\}$, and $\varphi_2$ does not admit a nontrivial factor basis. Thus, the endomorphism $\varphi_1$ is reducible and $\varphi_2$ is irreducible. \hfill $\Diamond$
\end{example}
 We next show how the property of representing exclusively (ir-)reducible rank-preserving endomorphisms when the endomorphisms are rank-preserving can be preserved under elementary transformations of matrices.
\begin{proposition}\label{prop:matrix transformations}
    Let $P\in\mathbb{N}_0^{n\times n}$ be an incidence matrix that represents exclusively (ir-)reducible rank-preserving endomorphisms. Then the following operations are robust with respect to (ir-)reducibility:
    \begin{enumerate}[i)]
        \item swapping the $i^{\text{th}}$ and $j^{\text{th}}$ rows of the matrix and
        \item swapping the $i^{\text{th}}$ and $j^{\text{th}}$ columns of the matrix.
    \end{enumerate}
\end{proposition}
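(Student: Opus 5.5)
The plan is to realise each row or column swap as a composition of the endomorphisms represented by $P$ with the transposition automorphism $\sigma$ of $\Sigma$, given by $\sigma(a_i)=a_j$, $\sigma(a_j)=a_i$ and $\sigma(a_k)=a_k$ otherwise. We then transfer (ir-)reducibility using Proposition~\ref{prop:equivalencereducible}. The incidence matrix $P(\sigma)$ is the permutation matrix $E_{ij}$, and $E_{ij}^{-1}=E_{ij}$. By Corollary~\ref{cor: multiplication composition}, for any endomorphism $\varphi$:
\begin{itemize}
\item $P(\varphi\circ\sigma)=P(\sigma)P(\varphi)=E_{ij}P(\varphi)$, which swaps rows $i$ and $j$;
\item $P(\sigma\circ\varphi)=P(\varphi)P(\sigma)=P(\varphi)E_{ij}$, which swaps columns $i$ and $j$.
\end{itemize}
This identity only uses the Parikh-vector calculation, so it holds for all endomorphisms and not just rank-preserving ones.

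For the row swap, let $P'=E_{ij}P$ and let $\varphi'$ be any rank-preserving endomorphism with $P(\varphi')=P'$. Define $\varphi=\varphi'\circ\sigma$, using $\sigma^{-1}=\sigma$. Then $P(\varphi)=E_{ij}P'=P$. We also need $\varphi$ to be rank-preserving. Since $\varphi(w)=\varphi'(\sigma(w))$ and $\sigma$ is a bijection on $\Sigma^+$, the images $\{\varphi(a_k)\}$ and $\{\varphi'(a_k)\}$ are the same set of words. Hence $\varphi$ and $\varphi'$ have exactly the same factor bases, and by the restated Definition~\ref{def:rank-preserving}, $\varphi$ is rank-preserving. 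Because $\varphi\cong\varphi'$ (Definition~\ref{def: equivalent}, with one automorphism equal to $\text{id}$), Proposition~\ref{prop:equivalencereducible} gives that $\varphi'$ is reducible exactly when $\varphi$ is. By hypothesis $\varphi$ is (ir-)reducible, so $\varphi'$ is too. Since $\varphi'$ was arbitrary, $P'$ represents exclusively (ir-)reducible rank-preserving endomorphisms.

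The column swap works the same way with $\varphi=\sigma\circ\varphi'$, so that $P(\varphi)=P'E_{ij}=P$. The one step that needs an argument is again rank-preservation of $\varphi$. If $V$ is a factor basis of $\varphi'$, then $\sigma(V)$ is a factor basis of $\sigma\circ\varphi'$ of the same cardinality, and applying $\sigma$ again reverses this correspondence. So the minimal factor-basis sizes of $\varphi$ and $\varphi'$ agree, and $\varphi$ is rank-preserving exactly when $\varphi'$ is.

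The main obstacle is this check that composing with an automorphism preserves rank-preservation, because Proposition~\ref{prop:equivalencereducible} assumes both endomorphisms are rank-preserving. The factor-basis correspondences above resolve it: a permutation of the argument leaves the image set unchanged, and applying $\sigma$ letterwise bijects the factor bases. A row and column swap performed together, as $E_{ij}PE_{ij}$, follows by applying both steps in turn.
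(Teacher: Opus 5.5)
Your proposal is correct and follows the same route as the paper: realise each swap as multiplication by the permutation matrix of the transposition automorphism, then transfer (ir-)reducibility via Proposition~\ref{prop:equivalencereducible}. Your version is more careful than the paper's in two places the paper leaves implicit: you start from an arbitrary rank-preserving endomorphism with the swapped matrix and pull it back to one with matrix $P$, and you check explicitly that composing with the transposition preserves rank-preservation.
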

\begin{proof}
    These operations correspond to pre-(or respectively post-) multiplying by a permutation matrix $M$. To find $M$, for $1\leq i,j\leq n$, we swap the $i^{\text{th}}$ and $j^{\text{th}}$ columns of the identity matrix $I^{n\times n}$. In terms of endomorphisms, let $\varphi:\Sigma^+\rightarrow\Sigma^+ $ for $\Sigma=\{a_1, a_2, \dots, a_n\}$ be an irreducible endomorphism. This corresponds to composing $\varphi$ with an automorphism $\psi: \Sigma^+\rightarrow\Sigma^+$ satisfying $\psi(a_i)=a_j, \psi(a_j)=a_i$ and $\psi(a_k)=a_k$ for $i, j, k, \ 1 \leq i, j, k \leq n$ which satisfy $i\neq j$ and $k\neq i,j$.
    By Proposition \ref{prop:equivalencereducible}, these endomorphisms are equivalent as composition with an automorphism does not impact the property of reducibility, and so if $P(\varphi)$ represents exclusively (ir-)reducible rank-preserving endomorphisms, the matrices $MP(\varphi)$ and $P(\varphi)M$ will also hold this property.
\end{proof}
We now begin our investigation into matrices that force (ir-)reducibility of their corresponding rank-preserving endomorphisms. We consider incidence matrices that represent only one endomorphism, which are exactly the monomial matrices. A monomial matrix is a square matrix such that every row and column has exactly one non-zero entry.
\begin{proposition}\label{prop: monomial matrix}
    Let the monomial matrix $P\in\mathbb{N}_0^{n\times n}$ be an incidence matrix. If two of its elements are greater than 1, then $P$ represents exclusively reducible rank-preserving endomorphisms.
\end{proposition}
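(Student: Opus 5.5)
A monomial incidence matrix $P$ with nonzero entries $P_{i,\sigma(i)}=k_i$ for a permutation $\sigma$ represents exactly one endomorphism, namely $\varphi(a_i)=a_{\sigma(i)}^{k_i}$. I will first check that this $\varphi$ is rank-preserving, and then exhibit a nontrivial factor basis, so that Theorem \ref{thm:d-p reducible} applies. By Proposition \ref{prop:equivalencereducible} (equivalently Proposition \ref{prop:matrix transformations}) I may compose with the automorphism renaming $a_{\sigma(i)}\mapsto a_i$. This lets me assume $\sigma$ is the identity, so $\varphi(a_i)=a_i^{k_i}$, and by hypothesis there are distinct indices $p\neq q$ with $k_p,k_q\geq 2$.

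Rank-preservation comes from the incidence matrix. Since $P$ is monomial with positive entries, it has full rank $n$, and so Proposition 4.3 of \cite{Bea23} gives that $\varphi$ is rank-preserving. One can also argue directly: any factor basis $V$ must contain, for each $i$, a word over $\{a_i\}$ alone. The images use pairwise disjoint single letters, so each $a_i^{k_i}$ factorises only into powers of $a_i$, which forces $|V|\geq n$.

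Next I define $V=\{a_1^{k_1},\dots,a_{p-1}^{k_{p-1}},\,a_p,\,a_{p+1}^{k_{p+1}},\dots,a_n^{k_n}\}$. That is, I take the images of $\varphi$ but replace $a_p^{k_p}$ by the single letter $a_p$. Then $|V|=n$ and every $\varphi(a_i)\in V^+$, since $a_p^{k_p}=(a_p)^{k_p}$. The basis $V$ is not $\Sigma$, because it contains $a_q^{k_q}$ with $k_q\geq 2$. It is not $W=\{\varphi(a_i)\}$ either, because $a_p\in V$ while $a_p\notin W$, as the only image involving $a_p$ is $a_p^{k_p}$ with $k_p\geq2$. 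Hence $V$ is a nontrivial factor basis, and Theorem \ref{thm:d-p reducible} gives that $\varphi$ is reducible. Concretely, this means $\varphi=\psi_2\circ\psi_1$ with $\psi_1(a_p)=a_p^{k_p}$, $\psi_1(a_i)=a_i$ for $i\neq p$, and $\psi_2(a_p)=a_p$, $\psi_2(a_i)=a_i^{k_i}$ for $i\neq p$. Neither factor is an automorphism.

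Finally, since $\varphi$ is the unique endomorphism represented by $P$, the matrix $P$ represents exclusively reducible rank-preserving endomorphisms. The only delicate point is why two entries exceeding $1$ are needed. With a single such entry, $V$ would collapse to $\Sigma$ and the construction would fail. The second large entry $k_q$ is exactly what keeps $V\neq\Sigma$, while the first, $k_p$, keeps $V\neq W$. I expect no real obstacle beyond carefully justifying the reduction to $\sigma=\mathrm{id}$ via equivalence.
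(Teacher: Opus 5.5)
Your proof is correct and takes essentially the paper's route: rename so that $\varphi(a_i)=a_i^{k_i}$, then exhibit a nontrivial factor basis and apply Theorem~\ref{thm:d-p reducible}. The only difference is cosmetic. The paper uses $\Sigma$ with $a_j$ replaced by $a_j^{m_j}$, whereas you use $W$ with $a_p^{k_p}$ replaced by $a_p$, so you get the same two commuting diagonal factors composed in the opposite order; you also justify rank-preservation explicitly, which the paper takes for granted.
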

\begin{proof}
    A monomial matrix represents exactly one rank-preserving endomorphism, which we denote $\varphi:\Sigma^+\rightarrow\Sigma^+$ for an arbitrary alphabet $\Sigma=\{a_1, a_2, \dots, a_n\}$. The endomorphism is some renaming of the rank-preserving endomorphism $\psi:\Sigma^+\rightarrow\Sigma^+$ satisfying $\psi(a_i)=a_i^{m_i}$  for $a_i\in\Sigma$ and $m_i\in\mathbb{N}$. If there exist $m_j, m_k\geq 2$, $\psi$ admits the factor basis $V=\{a_1, a_2, \dots, a_{j-1}, a_j^{m_j}, a_{j+1}, \dots, a_n\}$ for some $a_j\in\Sigma$. As $V\neq \Sigma$ and $\psi(a_k)=a_k^{m_k}$, the factor basis $V$ is nontrivial. By Theorem \ref{thm:d-p reducible}, the rank-preserving endomorphism $\psi$ is reducible, therefore $\varphi$ is reducible.
\end{proof}
As shown in \cite{Hon24}, upper triangular incidence matrices have a more manageable structure in the consideration of endomorphisms. Therefore, upper triangular matrices form a more structured subclass where we can give a sufficient condition for reducibility of rank-preserving endomorphisms using only matrix-level reasoning.
We call a matrix $P\in\mathbb{N}_0^{n\times n}$ \emph{positive upper triangular} if it is upper triangular and $P_{ij}\geq 1$ for all $i\leq j$.
\begin{proposition}\label{prop:upper triangular reducible}
    Let $P\in\mathbb{N}^{n\times n}_0$ be an incidence matrix for $n\geq 3$. If $P$ is positive upper triangular then $P$ represents exclusively reducible rank-preserving endomorphisms.
\end{proposition}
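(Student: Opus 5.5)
The plan is to exhibit, for every rank-preserving endomorphism $\varphi$ with $P(\varphi)=P$, a nontrivial factor basis. Reducibility then follows from Theorem~\ref{thm:d-p reducible}, which applies since $n\geq 3>1$.

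First I would translate the matrix condition into conditions on the images. Write $\Sigma=\{a_1,\dots,a_n\}$. Since $P$ is upper triangular, $|\varphi(a_i)|_{a_j}=0$ for $j<i$, so $\varphi(a_i)\in\{a_i,a_{i+1},\dots,a_n\}^+$. Since $P_{ij}\geq 1$ for all $i\leq j$, the image $\varphi(a_i)$ contains every symbol $a_i,\dots,a_n$ at least once.

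The candidate factor basis is
\[
V=\{\varphi(a_1),a_2,a_3,\dots,a_n\}.
\]
\begin{enumerate}[i)]
\item $V$ is a factor basis. Trivially $\varphi(a_1)\in V^+$. For $i\geq 2$ we have $\varphi(a_i)\in\{a_2,\dots,a_n\}^+\subseteq V^+$, because $a_1$ never occurs in these images. Also $|V|\leq n$.
\item $V$ has exactly $n$ elements. This holds because $\varphi(a_1)$ contains $a_1,\dots,a_n$ and so is not a single letter. (Rank-preservation also forces $|V|=n$.)
\item $V\neq\Sigma$. This holds because $|\varphi(a_1)|\geq n\geq 3>1$.
\item $V\neq W=\{\varphi(a_1),\dots,\varphi(a_n)\}$. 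Equality would force $\{a_2,\dots,a_n\}=\{\varphi(a_2),\dots,\varphi(a_n)\}$. But $\varphi(a_2)$ contains both $a_2$ and $a_3$, using $n\geq 3$, so it has length at least $2$ and is not a letter.
\end{enumerate}
Hence $V$ is nontrivial, and Theorem~\ref{thm:d-p reducible} gives that $\varphi$ is reducible. Since $\varphi$ was an arbitrary rank-preserving endomorphism with $P(\varphi)=P$, the statement follows.

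There is no real obstacle in this argument. The only point needing care is the role of $n\geq 3$, which is used in step iv). For $n=2$ we could have $\varphi(a_2)=a_2$, which would make $V=W$ trivial, and the argument would fail there.
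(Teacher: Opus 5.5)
Your proposal is correct and matches the paper's proof essentially verbatim: you use the same factor basis $V=\{\varphi(a_1),a_2,\dots,a_n\}$, get $V\neq\Sigma$ from $|\varphi(a_1)|\geq 2$ and $V\neq W$ from $|\varphi(a_2)|\geq 2$ (which is exactly where $n\geq 3$ is needed), and then apply Theorem~\ref{thm:d-p reducible}. Your step iv) spells out slightly more carefully than the paper why $V=W$ is impossible: no element other than $\varphi(a_1)$ contains $a_1$, so $\varphi(a_2)$ would have to be a single letter.
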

\begin{proof}
     Let $\Sigma=\{a_1, a_2, \dots, a_n\}$ be an alphabet and $\varphi:\Sigma^+\rightarrow\Sigma^+$ be a rank-preserving endomorphism represented by a positive upper triangular matrix. Therefore, we can see that $a_1\in\symb(\varphi(a_1))$ and $a_1\notin\bigcup_{a_j\in\Sigma\setminus\{a_1\}}\symb(\varphi(a_j))$. We can define a factor basis $V=\{ \varphi(a_1), a_2, a_3, \dots, a_{n}\}$, which, by definition satisfies $|V|=n$. Since $|\varphi(a_1)|\geq 2$, $V\neq \Sigma$, and since $|\varphi(a_2)|\geq 2$, we have $V\neq \{\varphi(a_i) \mid 1\leq i \leq n\}$, so $V$ is a nontrivial factor basis and by Theorem \ref{thm:d-p reducible}, all rank-preserving endomorphisms $\varphi$ satisfying $P(\varphi)=P$ are reducible.
\end{proof}
The following proposition illustrates the additional considerations needed for the binary case.
\begin{proposition}\label{prop:upper triangular reducible 2}
        Let $P\in\mathbb{N}^{n\times n}_0$ be an incidence matrix for $n=2$. If $P=\left(\begin{smallmatrix}
            \alpha & \beta \\ 0 & \gamma
        \end{smallmatrix}\right)$ is positive upper triangular and satisfies one of:
        \begin{enumerate}[i)]
            \item $\gamma>1$, 
            \item $\beta=1, \alpha\geq 2$, or
            \item $\alpha=\gamma=1$ and $\beta> 1$,
        \end{enumerate} then $P$ represents exclusively reducible rank-preserving endomorphisms.
\end{proposition}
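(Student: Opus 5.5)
The plan is to apply Theorem~\ref{thm:d-p reducible} directly. Let $\varphi:\{a,b\}^+\rightarrow\{a,b\}^+$ be any rank-preserving endomorphism with $P(\varphi)=P$. Since $P$ is upper triangular, the second row forces $\varphi(b)=b^{\gamma}$. The first row means that $\varphi(a)$ is some arrangement of $\alpha\geq 1$ occurrences of $a$ and $\beta\geq 1$ occurrences of $b$, with the arrangement otherwise arbitrary. In each of the three cases I will exhibit a nontrivial factor basis $V$ of $\varphi$ that works for every such arrangement. Because $\varphi$ is rank-preserving, any factor basis automatically has two elements, so it only remains to check that $V\neq\Sigma$ and $V\neq W=\{\varphi(a),\varphi(b)\}$. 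Theorem~\ref{thm:d-p reducible} then gives reducibility.

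\emph{Case i)}, $\gamma>1$. I take $V=\{\varphi(a),b\}$. Clearly $\varphi(a)\in V^+$, and $\varphi(b)=b^{\gamma}\in V^+$. We have $V\neq\Sigma$ because $|\varphi(a)|=\alpha+\beta\geq 2$. We have $V\neq W$ because $b\neq b^{\gamma}$ and $b\neq\varphi(a)$.

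\emph{Case ii)}, $\beta=1$ and $\alpha\geq 2$. If $\gamma>1$ we are already in case i), so I assume $\gamma=1$, i.e.\ $\varphi(b)=b$. Then $\varphi(a)=a^iba^j$ with $i+j=\alpha\geq 2$.
\begin{itemize}
\item If $i\geq 1$, I take $V=\{a,ab\}$, since $\varphi(a)=a^{i-1}(ab)a^j$.
\item If $i=0$, then $j\geq 2$ and I take $V=\{a,ba\}$, since $\varphi(a)=(ba)a^{j-1}$.
\end{itemize}
In both subcases $\varphi(b)=b$ must also lie in $V^+$. This fails as written, since $b\notin\{a,ab\}^+$. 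I have to repair this by enlarging the basis so that it still has two elements and still generates $b$. The natural repair is $V=\{b,a^i\ldots\}$-type bases. Concretely, I would use $V=\{a^{i}b,\,a\}$ or $V=\{ba^{j},\,b\}$, chosen so that $b$ itself is one of the generators.
\begin{itemize}
\item Take $V=\{b,a^iba^j\}$? This equals $W$, so it is not allowed.
\item Instead take $V=\{b,\,a\}$-free alternatives. For $i\geq 1$, the set $\{a,b\}\setminus\ldots$ — here I expect to look for a basis of the form $\{b,u\}$ with $\varphi(a)\in\{b,u\}^+$.
\end{itemize}
This forces $\varphi(a)$ to be a product of copies of $u$ and $b$. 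Since $\varphi(a)$ contains a single $b$ and $|\varphi(a)|\geq 3$, the options are $u=\varphi(a)$, which is trivial, or $\varphi(a)=ub$ or $\varphi(a)=bu$ with $u=a^{\alpha}$. So this route only succeeds when $j=0$ or $i=0$. For the case $i,j\geq 1$ a basis not containing $b$ must generate $b$ from a word of length one, so the only candidate I would then check is the basis $\{a,b\}$ itself. This case analysis, determining for which placements of the single $b$ a nontrivial basis exists, is the main obstacle. I would first settle it by examining the example $\varphi(a)=aba$, $\varphi(b)=b$ through the exhaustive prefix method used for the irreducible ternary example above. If that example turns out to be irreducible, the hypotheses of the proposition must implicitly exclude it, and I would revisit the statement.

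\emph{Case iii)}, $\alpha=\gamma=1$ and $\beta>1$. Here $\varphi(b)=b$ and $\varphi(a)=b^iab^j$ with $i+j=\beta\geq 2$. Now $b$ can itself be a generator.
\begin{itemize}
\item If $i\geq 1$, I take $V=\{b,ba\}$, since $\varphi(a)=b^{i-1}(ba)b^j$.
\item If $i=0$, I take $V=\{b,ab\}$, since $\varphi(a)=(ab)b^{j-1}$.
\end{itemize}
In both subcases $\varphi(b)=b\in V$. Also $V\neq\Sigma$, because $a\notin V$. Finally $V\neq W$, because the second element of $V$ has length $2$ while $|\varphi(a)|=\beta+1\geq 3$. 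Theorem~\ref{thm:d-p reducible} then finishes this case.
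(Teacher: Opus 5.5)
Your cases i) and iii) are correct and coincide with the paper's argument. You use $V=\{\varphi(a),b\}$ when $\gamma>1$. When $\alpha=\gamma=1<\beta$ you use $\{b,ab\}$ or $\{b,ba\}$, according to whether $\varphi(a)$ begins with $a$. Each case then concludes with Theorem~\ref{thm:d-p reducible}.

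Case ii) is left unproved, but no proof is possible: the statement is false there, and the example you flagged is a genuine counterexample. You should complete the check rather than leave it open.
Let $\varphi(a)=aba$ and $\varphi(b)=b$, so $\alpha=2$ and $\beta=\gamma=1$. Then $\varphi$ is rank-preserving, because no power of $aba$ is a power of $b$.
Since $\varphi(b)=b$ has length one, every factor basis contains $b$, and hence has the form $\{b,u\}$. As $aba$ contains a single $b$, there are two possibilities. Either $u$ contains that $b$, which forces $u=aba$. Or $u\in a^+$, which forces $u=a$ because both $a$'s of $aba$ are isolated.
So the only factor bases are $\Sigma$ and $W$, and $\varphi$ is irreducible by Theorem~\ref{thm:d-p reducible}. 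When $\gamma>1$, ii) is already covered by i), so condition ii) contributes only the case $\gamma=1$, and there it fails.
The paper's own justification of ii) is also wrong. It asserts that $\{ab,b\}$ or $\{ba,b\}$ is a factor basis. But every word in $\{ab,b\}^+\cup\{ba,b\}^+$ has at least as many $b$'s as $a$'s, whereas $|\varphi(a)|_a=\alpha\geq 2>1=\beta=|\varphi(a)|_b$. This is the argument of iii) applied to the wrong case.

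The same reasoning settles the situation $\beta=\gamma=1$, $\alpha\geq 2$ completely. Write $\varphi(a)=a^iba^j$. The factor bases are $W$ and the sets $\{b,a^k\}$ with $k$ dividing both $i$ and $j$. So $\varphi$ is reducible if and only if $\gcd(i,j)\geq 2$, reading $\gcd(i,0)=i$.
This corrects your intermediate claim that the $\{b,u\}$ route works only when $i=0$ or $j=0$; for instance, $a^2ba^2\in\{a^2,b\}^+$.
It also shows that your tentative bases are dead ends:
\begin{itemize}
\item $\{a^ib,a\}$ equals $\Sigma$ when $i=0$ and does not generate $b$ when $i\geq 1$;
\item $\{ba^j,b\}$ equals $W$ when $i=0$ and fails when $i\geq 1$, since $\varphi(a)$ then begins with $a$.
\end{itemize}
This criterion depends on where the letters sit, not only on $P$. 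So the right outcome is to prove i) and iii) and to discard ii).
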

\begin{proof}
    We treat our three cases separately: \par
For case i), we can see that in the proof of Proposition \ref{prop:upper triangular reducible}, an endomorphism $\varphi:\{a,b\}^+\rightarrow\{a,b\}^+$ with a $2\times 2$ positive upper triangular matrix can admit the factor basis $V=\{\varphi(a),b\}$. This is clearly trivial if $\gamma=1$; thus for $P$ to represent exclusively reducible rank-preserving endomorphisms, we specify that $\gamma> 1$.\par
For case ii), if $\beta=1$ and $\alpha\geq 2$, then the rank-preserving endomorphism $\varphi:\{a,b\}^+\rightarrow\{a,b\}^+$ can admit at least one of the nontrivial factor bases $V_1=\{ab, b\}$ and $V_2=\{ba, b\}$. \par
If an incidence matrix satisfies case iii), then we observe that the matrix $P$ corresponds to the rank-preserving endomorphism $\varphi:\{a,b\}^+\rightarrow\{a,b\}^+$ which satisfies $\varphi(a)=b^kab^m, \varphi(b)=b$, where $k,m\in\mathbb{N}_0$ satisfy $k+m=\beta$. If $k=0$, it follows that the nontrivial factor basis $V_3=\{ab, b\}$ satisfies $\varphi$. If $m=0$, it follows that the nontrivial factor basis $V_4=\{ba, b\}$ satisfies $\varphi$. If $k\neq 0$ and $m\neq 0$ then we have a choice of $V_3$ or $V_4$ as nontrivial factor bases. If $\beta=1$ then the factor bases $V_3$ and $V_4$ would be trivial as they represent $\varphi$ itself. 
\end{proof}
We do not provide a full characterisation of the reducibility of upper triangular, rank-preserving endomorphisms over a binary alphabet as it requires consideration of the structure of the endomorphism which is lost in the construction of the incidence matrices. \par
We now consider incidence matrices that are neither monomial nor upper triangular.
\begin{proposition}\label{prop:exclusively reducible morphisms}
    Let the incidence matrix $P\in\mathbb{N}_0^{2\times 2}$ satisfy one of:
    \begin{enumerate}[i)]
        \item $P=\big(\begin{smallmatrix}
        2 & 2 \\ 2 & 2
    \end{smallmatrix}\big)$, or
    \item $P=\big(\begin{smallmatrix}
         1 & \alpha \\ 1 & \beta
     \end{smallmatrix} \big)$, where $\alpha, \beta >1$ are not coprime.
    \end{enumerate}
    Then $P$ represents exclusively reducible rank-preserving endomorphisms.
\end{proposition}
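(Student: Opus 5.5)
By Theorem~\ref{thm:d-p reducible}, it suffices to show that every rank-preserving $\varphi:\{a,b\}^+\rightarrow\{a,b\}^+$ with $P(\varphi)=P$ admits a nontrivial factor basis $V$, that is, one with $|V|\le 2$, $\varphi(a),\varphi(b)\in V^+$, $V\neq\Sigma$ and $V\neq W=\{\varphi(a),\varphi(b)\}$. I treat the two cases separately and give explicit bases.

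\textbf{Case i).} Both images are words of length $4$ with Parikh vector $(2,2)$, so each lies in $\{aabb, abab, abba, baab, baba, bbaa\}$. Rank preservation forces $\varphi(a)\neq\varphi(b)$, leaving $15$ unordered pairs. By Proposition~\ref{prop:equivalencereducible} (renaming $a\leftrightarrow b$, swapping the two images) and Proposition~\ref{prop:reversalreducible} (reversal), this number shrinks further. I would then exhibit a basis for each remaining class.
\begin{itemize}
\item If one image is a square $u^2$ with $u\in\{ab,ba\}$ and the other image is $w$, take $V=\{u,w\}$. This is not $W$ since $u\notin W$, and not $\Sigma$ since $|u|=2$. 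For example, $\{aabb,abab\}$ gives $V=\{ab,aabb\}$, and $\{abab,baba\}$ gives $\{ab,ba\}$.
\item Otherwise both images lie in $\{aabb,abba,baab,bbaa\}$, and I take:
  \begin{itemize}
  \item $\{a,bb\}$ for the pairs $\{aabb,abba\}$ and $\{abba,bbaa\}$;
  \item $\{aa,b\}$ for the pairs $\{aabb,baab\}$ and $\{baab,bbaa\}$;
  \item $\{aa,bb\}$ for the pair $\{aabb,bbaa\}$;
  \item $\{ab,ba\}$ for the pair $\{abba,baab\}$.
  \end{itemize}
\end{itemize}
Each of these is visibly different from $W$ and from $\Sigma$. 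The only real work in this case is making sure the case list is exhaustive.

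\textbf{Case ii).} Each image contains exactly one $a$, so write $\varphi(a)=b^kab^l$ with $k+l=\alpha$ and $\varphi(b)=b^rab^s$ with $r+s=\beta$. Set $i=\min(k,r)$ and $j=\min(l,s)$, and take
\[ V=\{b,\ b^iab^j\}. \]
Both images lie in $V^+$, because $\varphi(a)=b^{k-i}(b^iab^j)b^{l-j}$ and similarly for $\varphi(b)$. We have $V\neq W$, since $b\notin W$: every image contains an $a$. We have $V\neq\Sigma$ unless $i=j=0$.

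The main obstacle is this degenerate situation $i=j=0$, and it is exactly where the gcd hypothesis enters. Since $\alpha,\beta>1$, neither image is the single letter $a$. So $i=j=0$ can only happen in the crossed configurations $\varphi(a)=ab^\alpha,\ \varphi(b)=b^\beta a$ or $\varphi(a)=b^\alpha a,\ \varphi(b)=ab^\beta$. Here $k=0$ forces $l=\alpha>0$, so $s=0$; then $r=\beta$, and symmetrically for the other. In both configurations every maximal block of $b$'s has length $\alpha$ or $\beta$.

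So choose a prime $p$ dividing $\gcd(\alpha,\beta)$ and use $V=\{a,b^p\}$ instead. This is a factor basis, it differs from $\Sigma$ because $p\ge 2$, and it differs from $W$ because $a\notin W$. Hence it is nontrivial, and Theorem~\ref{thm:d-p reducible} gives reducibility in every case.
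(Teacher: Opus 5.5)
Your proof is correct and takes essentially the same route as the paper. Both reduce to Theorem~\ref{thm:d-p reducible} and exhibit an explicit nontrivial factor basis case by case: squares and the bases $\{a,bb\}$, $\{aa,b\}$, $\{ab,ba\}$ in case i), and in case ii) the gcd hypothesis enters only in the crossed configuration $\varphi(a)=ab^{\alpha}$, $\varphi(b)=b^{\beta}a$ (or its mirror) via $\{a,b^{p}\}$; your single basis $\{b,\,b^{i}ab^{j}\}$ with $i=\min(k,r)$, $j=\min(l,s)$ simply merges the paper's separate $\{ab,b\}$ and $\{ba,b\}$ subcases.
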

\begin{proof}
    Beginning with case i), we note that a word $w$ with Parikh-vector $p(w)=(2, 2)$ satisfies $w\in\{aabb, abab, baab, abba, baba, bbaa\}$. Thus, any endomorphism $\varphi:\{a,b\}^+\rightarrow\{a,b\}^+$ with incidence matrix $P$ satisfies one of the following three conditions:
    \begin{enumerate}[A)]
        \item $|\varphi(a)|_{aa}=|\varphi(b)|_{aa}=1$ or $|\varphi(a)|_{bb}=|\varphi(b)|_{bb}=1$,
        \item $\varphi(a)$ or $\varphi(b)$ is periodic, or
        \item $\varphi(a)=abba$ and $\varphi(b)=baab$, or equivalent endomorphisms.
    \end{enumerate}
    If $\varphi$ satisfies condition A) and $\varphi(a)\neq \varphi(b)$, it follows that if $\varphi(a)\neq \varphi(b)$, then $\varphi$ is rank-preserving. We can then see that $\varphi$ admits the nontrivial factor bases $V_1=\{aa, b\}$ or $V_2=\{a, bb\}$ respectively, thus by Theorem \ref{thm:d-p reducible}, is reducible.\par
    If $\varphi$ satisfies condition B), then $\varphi(a)=u_1^2$ or $\varphi(b)=u_2^2$ for $u_1, u_2\in\{a,b\}^2$. If $u_1=u_2$ then $\varphi$ is rank-reducing, which we do not consider. Otherwise, rank-preserving $\varphi$ admits the nontrivial factor bases $V=\{u_1, \varphi(b)\}$ or $V_4=\{\varphi(a), u_2\}$, and by Theorem \ref{thm:d-p reducible} is reducible\par
    If $\varphi$ satisfies condition C), then $\varphi$ is rank-preserving, admits the factor basis $V_3=\{ab, ba\}$, and by Theorem $\ref{thm:d-p reducible}$ is reducible. Therefore, every rank-preserving endomorphism satisfied by $P$ in case i) is reducible.\par
    For case ii), the rank-preserving endomorphisms $\varphi:\{a,b\}^+\rightarrow\{a,b\}^+$ represented by $P$ are of the form $\varphi(a)=b^{k_1}ab^{m_1}$ and $\varphi(b)=b^{k_2}ab^{m_2}$
    for $k_1, k_2, m_1, m_2\in\mathbb{N}_0$ such that $k_1+m_1=\alpha$ and $k_2+m_2=\beta$. If $k_1, k_2, m_1, m_2\neq 0$, then $\varphi$ admits the nontrivial factor bases $V_4=\{ab, b\}$ and $V_5=\{ba, b\}$, and so is reducible. If $k_1=0$ or $k_2=0$ and $m_1, m_2\neq 0$, a nontrivial factor basis of $\varphi$ is $V_4=\{ab, b\}$. If $m_1=0$ or $m_2=0$ and $k_1, k_2\neq 0$, a nontrivial factor basis of $\varphi$ is $V_5=\{ba, b\}$. If $k_1=m_1=0$ or $k_2=m_2=0$, then $P(\varphi)\neq P$. If $k_1=m_2=0$ or $k_2=m_1=0$, then as $\beta$ and $\alpha$ are not coprime, there exists an $\ell\in\mathbb{N}$ such that $\ell \mid \alpha$ and $\ell \mid \beta$ and $\varphi$ admits the nontrivial factor basis $V_6=\{a, b^{\ell}\}$. As we have exhaustively considered all possible values for $k_1, k_2, m_1$ and $m_2$ and have shown that every rank-preserving endomorphism $\varphi$ satisfying $P(\varphi)=P$ has a nontrivial factor basis, and so by Theorem \ref{thm:d-p reducible} is reducible.
\end{proof}
In contrast to the above proposition, we now explore incidence matrices that represent exclusively irreducible rank-preserving endomorphisms.
\begin{proposition}\label{prop:1diag}
    Let the incidence matrix $P\in\mathbb{N}_0^{2\times 2}$ satisfy $P=\big(\begin{smallmatrix}
        1 & y \\ z & 1
    \end{smallmatrix}\big)$ for $y,z\in\mathbb{N}_0$ where $y=1$ and $z=0$ (or vice versa) or $y>1$ and $z>1$. Then $P$ represents exclusively irreducible rank-preserving endomorphisms.
\end{proposition}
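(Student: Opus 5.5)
The plan is to use Theorem~\ref{thm:d-p reducible}: we show that every endomorphism $\varphi:\{a,b\}^+\rightarrow\{a,b\}^+$ with $P(\varphi)=P$ is rank-preserving and admits only trivial factor bases. Rank-preservation comes from the matrix: $\det P = 1-yz$, which is $1$ when $y=1,z=0$ (or vice versa) and negative when $y,z>1$. So $\rank(P)=2$, and Proposition 4.3 of \cite{Bea23} shows that $\varphi$ is rank-preserving. The ``vice versa'' case follows from the case $y=1,z=0$ by swapping $a$ and $b$. Swapping both rows and columns corresponds to conjugating by the renaming, so we may use Proposition~\ref{prop:equivalencereducible} (equivalently, Proposition~\ref{prop:matrix transformations}).

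For $y=1$, $z=0$ we have $\varphi(b)=b$ and $\varphi(a)\in\{ab,ba\}$. In any factor basis $V=\{v_1,v_2\}$, the word $\varphi(b)=b$ can only be produced by the element $b$, so $b\in V$. The other element must then be a factor of $\varphi(a)$ containing the single $a$. The only possibilities are $a$ or $\varphi(a)$ itself, giving $V=\Sigma$ or $V=W$. Both are trivial.

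For $y,z>1$ we write $\varphi(a)=b^{k}ab^{m}$ with $k+m=y$, and $\varphi(b)=a^{p}ba^{q}$ with $p+q=z$. Let $V=\{u,w\}$ be a factor basis, where $u$ covers the unique $a$ in $\varphi(a)$ and $w$ covers the unique $b$ in $\varphi(b)$.
\begin{itemize}
\item Suppose $u=w$. Then $u$ is a factor of both images and contains both letters, so it has exactly one $a$ and one $b$, i.e.\ $u\in\{ab,ba\}$. Since $y\geq 2$, $\varphi(a)$ still has a $b$ outside $u$, which forces the second element to be a power of $b$. Since $z\geq 2$, $\varphi(b)$ still has an $a$ outside $u$, which forces a power of $a$ in $V$. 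That requires three elements, contradicting $|V|=2$.
\item Suppose $u\neq w$ and $\varphi(a)\neq u$. The remaining $b$'s of $\varphi(a)$ must be built from $w$ alone, because $u$ contains an $a$. Hence $w$ is a power of $b$. Being a factor of $\varphi(b)$, which contains only one $b$, gives $w=b$. Then $\varphi(b)\neq w$, so the $a$'s of $\varphi(b)$ are built from $u$ alone. By the symmetric argument $u=a$, and so $V=\Sigma$.
\item Suppose $u\neq w$ and $\varphi(a)=u$. Then $u$ contains $y\geq2$ letters $b$, so it cannot occur in $\varphi(b)$. Hence $\varphi(b)$ is a power of $w$ containing a single $b$, which forces $\varphi(b)=w$ and $V=W$.
\end{itemize}
In every case $V$ is trivial, so Theorem~\ref{thm:d-p reducible} shows that $\varphi$ is irreducible.

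The main obstacle is making the covering argument rigorous. We must justify that the letter-counts of the images (exactly one $a$ in $\varphi(a)$, exactly one $b$ in $\varphi(b)$) force every element of $V$ occurring in a decomposition to be a factor containing at most one such letter. We must also ensure the case split on which elements cover these distinguished letters is exhaustive. This includes checking that $u$ and $w$ exhaust $V$: if they coincide, then the second element of $V$ is what the first bullet handles.
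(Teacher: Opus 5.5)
Your proof is correct and follows the paper's overall strategy. You reduce to Theorem~\ref{thm:d-p reducible} and use $|\varphi(a)|_a=|\varphi(b)|_b=1$ to force every factor basis to be $\Sigma$ or $W$.

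You differ in how the case analysis is organised. The paper splits on the letter content of a single basis element $v_1$: $v_1=a^{n_1}$ with $n_1>1$; $|v_1|_a,|v_1|_b>1$; $|v_1|_a=|v_1|_b=1$; $|v_1|_a=1<|v_1|_b$. It treats all admissible $(y,z)$ at once and simply asserts rank-preservation. You instead split on whether the element covering the unique $a$ of $\varphi(a)$ equals the element covering the unique $b$ of $\varphi(b)$, which is exhaustive by construction. In the paper, by contrast, the cases $v_1\in\{a,b\}$ are left implicit. Its bound $|v_2|_b>|\varphi(b)|_b$ also requires $y>1$, so the case $y=1$, $z=0$ is only covered informally there.

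Three features of your version close these gaps: the separate treatment of $y=1$, $z=0$; the reduction of the ``vice versa'' case via Proposition~\ref{prop:equivalencereducible}; and the argument $\det P=1-yz\neq 0$ combined with \cite{Bea23}. The last one is optional, since only rank-preserving $\varphi$ are quantified over, but it does justify the paper's unproved remark that every such $\varphi$ is rank-preserving.

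The point you flag as the main obstacle is routine. Since $u$ contains the only $a$ of $\varphi(a)$, it occurs exactly once in any factorisation of $\varphi(a)$ over $V$, and every other factor lies in $b^+$; the same holds for $w$ and $\varphi(b)$ with the letters swapped. Stating this once makes all three of your bullets rigorous as written.
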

\begin{proof}
    The endomorphisms $\varphi:\{a,b\}^+\rightarrow\{a,b\}^+$ represented by the incidence matrix $P$ must be of the form $\varphi(a)=b^{k_1}ab^{k_2}, \varphi(b)=a^{m_1}ba^{m_2}$
    for $k_1, k_2, m_1, m_2\in\mathbb{N}_0$ such that $k_1+k_2=y$ and $m_1+m_2=z$. Note that any endomorphism of this form is rank-preserving. We now consider cases for the elements of the factor basis $V=\{v_1, v_2\}$ of $\varphi$, showing that $V=\{a,b\}$ or $V=\{\varphi(a), \varphi(b)\}$, and thus it must hold true that $V$ is a trivial factor basis.\par
    We first consider the case where $v_1=a^{n_1}$, for $n_1>1$. It follows that $v_1$ is not a factor of $\varphi(a)$ as $|v_1|_a>|\varphi(a)|_a$. For the factor basis to be nontrivial, it must follow that $v_2=\varphi(a)$, but $\varphi(b)\notin \{v_1, v_2\}^+$ as $|v_2|_b>|\varphi(b)|_b$, and $|v_1^k|_b=0$. Hence we obtain a contradiction for a nontrivial factor basis where $v_1=a^{n_1}$ for $n_1>1$, or by the same argument on $\varphi(b)$, where $v_2=b^{n_2}$ for $n_2>1$.\par
    Next, we consider the cases where $v_1$ satisfies $|v_1|_a\geq 1$ and $|v_1|_b\geq 1$. If both $|v_1|_a>1$ and $|v_1|_b>1$, then $v_1$ will not be a factor of either $\varphi(a)$ or $\varphi(b)$ and as neither word is periodic, the factor basis cannot be generated by $v_2$. Without loss of generality, we assume that $|v_1|_a=1$. If $|v_1|_b=1$, then $v_1$ is a factor of both $\varphi(a)$ and $\varphi(b)$. However, it would not be possible that there exists a $v_2$ to create a factor basis of $\varphi$ ---the only symbols in $\varphi(a)$ that are not in the factor $v_1$ are $b$'s and the only symbols in $\varphi(b)$ that are not in the factor $v_1$ are $a$'s.\par
    If $|v_1|_a=1$ and $|v_2|_b>1$, then $v_1$ cannot be a factor of $\varphi(b)$. Therefore, to obtain a factor basis of $\varphi$, we have $v_2=\varphi(b)$ and thus $v_1=\varphi(a)$. The equivalent arguments can be made for $v_2$ if $y=0$. As there do not exist any nontrivial factor bases, it follows that $\varphi$ must be an irreducible endomorphism.
\end{proof}
We have seen that factorisation of an incidence matrix is necessary for reducibility of its associated endomorphisms. Thus, we investigate when a matrix with non-negative integer entries can be expressed as the product of two matrices of the same form. This problem has been investigated in its own right, without the connection to word homomorphisms, in a series of papers by Baeth \cite{Bae22,Bae19,Bae11,Bae15} and Heilbrunn \cite{Hei23}. In these papers, the authors provide necessary conditions for non-negative integer matrices to be reducible. We convert one of these results into a proposition for endomorphisms.\par
Let $\mathbf{a}=(a_1, a_2, \dots, a_n)$ and $\mathbf{b}=(b_1, b_2, \dots, b_n)$ be rows (resp. columns) of a matrix $P$. We say that $\mathbf{a}$ \emph{dominates} $\mathbf{b}$ if $a_i\geq b_i$ for all $a_i\in\mathbf{a}$ and $b_i\in\mathbf{b}$.
\begin{proposition} \label{prop: gcd or minmax}
    Let the incidence matrix $P\in\mathbb{N}_0^{2\times 2}$ satisfy $P=\big(\begin{smallmatrix} \alpha & \beta\\
            \gamma & \delta
    \end{smallmatrix} \big)$,
    such that 
    \begin{enumerate}[i)]
        \item there exists a pair $(i,j)\in \{(\alpha, \beta), (\alpha, \gamma), (\delta, \beta), (\delta, \gamma)\}$ where $\gcd(i,j)\neq 1$, or
        \item a row (resp. column) of $P$ dominates the other row (resp. column) of $P$.
    \end{enumerate}
    Then $P$ does not represent exclusively irreducible rank-preserving endomorphisms.
\end{proposition}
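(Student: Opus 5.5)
The plan is to prove the statement by construction. For each hypothesis we exhibit one rank-preserving endomorphism $\varphi:\{a,b\}^+\rightarrow\{a,b\}^+$ with $P(\varphi)=P$ and $\varphi=\psi_2\circ\psi_1$, where $\psi_1,\psi_2$ are rank-preserving non-automorphisms. That $\varphi$ is then reducible by definition. Equivalently, by Theorem \ref{thm:d-p reducible}, it has the nontrivial factor basis $V=\{\psi_2(a),\psi_2(b)\}$. The matrix bookkeeping is handled by Corollary \ref{cor: multiplication composition}: we factor $P=P(\psi_1)P(\psi_2)$ so that one factor is a small elementary incidence matrix, and then choose words realising the other factor.

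\emph{Case i).} The four admissible pairs are either two entries of a row ($(\alpha,\beta)$ or $(\gamma,\delta)$) or two entries of a column ($(\alpha,\gamma)$ or $(\beta,\delta)$).
\begin{itemize}
\item For a column, say $d=\gcd(\alpha,\gamma)>1$, take $\psi_2(a)=a^d$ and $\psi_2(b)=b$. This post-multiplies by $\mathrm{diag}(d,1)$, which scales the first column. Then choose $\psi_1$ with $P(\psi_1)=\big(\begin{smallmatrix}\alpha/d&\beta\\ \gamma/d&\delta\end{smallmatrix}\big)$, i.e. $\varphi$ is obtained by replacing every $a$ in the images of $\psi_1$ by a block $a^d$. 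The case $(\beta,\delta)$ is symmetric, using $\psi_2(b)=b^d$.
\item For a row, say $d=\gcd(\alpha,\beta)>1$, take $\psi_1(a)=a^d$ and $\psi_1(b)=b$. This pre-multiplies by $\mathrm{diag}(d,1)$, scaling the first row. Then choose $\psi_2(a)=a^{\alpha/d}b^{\beta/d}$ and $\psi_2(b)$ any word with Parikh vector $(\gamma,\delta)$. This makes $\varphi(a)=\psi_2(a)^d$ periodic.
\end{itemize}
In each case we will pick the free word, e.g. $\psi_2(b)=b^\delta a^\gamma$, so that its primitive root differs from that of the other image. Then $\psi_2$ and $\varphi$ are rank-preserving by the binary criterion $\varphi(a)^n\neq\varphi(b)^m$ stated after Definition \ref{def:rank-preserving}. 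Neither factor is a renaming, since $d>1$ and $\psi_2$ has an image of length at least $2$.

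\emph{Case ii).} If the first row dominates the second, write $P=EQ$ with $E=\big(\begin{smallmatrix}1&1\\0&1\end{smallmatrix}\big)=P(\psi_1)$ for $\psi_1(a)=ab$, $\psi_1(b)=b$. Then $Q$ must have rows $(\alpha-\gamma,\beta-\delta)$ and $(\gamma,\delta)$, both with nonnegative entries by domination. We choose $\psi_2$ realising $Q$, for instance $\psi_2(a)=a^{\alpha-\gamma}b^{\beta-\delta}$ and $\psi_2(b)=b^{\delta}a^{\gamma}$, adjusting the order of letters to keep the two images non-commuting. Then $\varphi=\psi_2\circ\psi_1$ has $\varphi(a)=\psi_2(a)\psi_2(b)$. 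Column domination is handled in the same way by post-multiplying with $E$, i.e. taking $\psi_2(a)=a$, $\psi_2(b)=ab$ (or $ba$), and realising the remaining factor by $\psi_1$. Row swaps are covered by Proposition \ref{prop:matrix transformations} together with Proposition \ref{prop:equivalencereducible}.

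The main obstacle is the degenerate configurations, where the chosen factor may be forced to have an empty image or a single-letter image. Examples are zero entries in case i) (e.g. $\alpha=0$, so $\gcd(0,\beta)=\beta$) and equal rows in case ii), where $Q$ would need a zero row. Here the construction must be redone or excluded. For instance, with $P=\big(\begin{smallmatrix}1&1\\1&1\end{smallmatrix}\big)$ the only rank-preserving representatives are, up to equivalence, $ab,ba$, which look irreducible. So I expect domination to have to be read as strict, or equal rows to be treated via the gcd condition instead. I would first check these edge cases and, if necessary, fall back on case i) or state the needed nondegeneracy explicitly. The other delicate point is verifying rank-preservation of $\psi_1,\psi_2,\varphi$ in each construction. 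For that I would rely on a primitive-root argument, or on the determinant criterion (Proposition 4.3 of \cite{Bea23}) whenever $\det P\neq0$.
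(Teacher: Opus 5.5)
Your strategy is the paper's own: factor $P$ through $\mathrm{diag}(d,1)$ or an elementary matrix and invoke Corollary \ref{cor: multiplication composition}. The paper stops at the matrix identities $P=\mathrm{diag}(k,1)Q$ and $P=Q\bigl(\begin{smallmatrix}1&0\\1&1\end{smallmatrix}\bigr)$. You go further and realise the factors by words, which is what reducibility actually requires.

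Rank preservation is less delicate than you fear. Over $\{a,b\}$, rank-preserving is the same as injective, and injectivity is closed under composition. Your elementary factors ($a\mapsto a^d$ or $a\mapsto ab$ with $b\mapsto b$, or $a\mapsto a$, $b\mapsto ab$) are injective. So you only need the cofactor $Q$ to be realised by two non-commuting words, which is always possible when $Q$ has no zero row or column.

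The genuine gap is in the degenerate cases you postpone, and it cannot be closed: the statement is false there, and the paper's proof has the same hole. Concretely, your step ``neither factor is a renaming'' fails whenever $Q$ is a permutation matrix, a degenerate case you do not list. Your two proposed repairs (reading domination strictly, or moving equal rows to condition i)) do not work either.

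Each of the following is checked with Theorem \ref{thm:d-p reducible}.
\begin{itemize}
\item $P=\bigl(\begin{smallmatrix}1&1\\0&1\end{smallmatrix}\bigr)$. The first row strictly dominates the second, and all four gcds equal $1$. Its only rank-preserving representatives are $a\mapsto ab$ (or $ba$), $b\mapsto b$. Their only factor bases are $\{a,b\}$ and $\{ab,b\}$ (resp.\ $\{ba,b\}$), so they are irreducible; here $Q=I$. The paper's own Proposition \ref{prop:1diag} with $y=1$, $z=0$ asserts exactly this, so strict domination does not rescue ii).
\item Your $\bigl(\begin{smallmatrix}1&1\\1&1\end{smallmatrix}\bigr)$ is a genuine counterexample: $\{ab,ba\}$ admits only trivial factor bases. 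Since every gcd equals $1$, it cannot be shifted to condition i).
\item $\bigl(\begin{smallmatrix}p&0\\0&1\end{smallmatrix}\bigr)$ with $p$ prime, or its column swap $\bigl(\begin{smallmatrix}0&p\\1&0\end{smallmatrix}\bigr)$ (your $\alpha=0$ case), satisfies i) because $\gcd(p,0)=p$. Its unique representative admits only the two trivial factor bases, so it is irreducible (compare the corollary after Proposition \ref{prop:morphism embedding}). In your row construction, $\psi_2$ is then a renaming.
\item A matrix with a zero column, such as $\bigl(\begin{smallmatrix}2&0\\4&0\end{smallmatrix}\bigr)$, satisfies i) but has no rank-preserving representative, so it is vacuously exclusively irreducible.
\end{itemize}

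What your construction does prove is a corrected statement, which you should state explicitly instead of the proposition as given. Assume $P$ has no zero row or column. Assume also that the cofactor in the chosen factorisation has no zero row and is not a permutation matrix. In i), take $d$ to be a prime divisor of the gcd rather than the gcd itself; otherwise you needlessly lose, e.g., $\mathrm{diag}(4,1)$. In particular, when all four entries of $P$ are positive, one of your routes always applies, except for $\bigl(\begin{smallmatrix}1&1\\1&1\end{smallmatrix}\bigr)$.
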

\begin{proof}
    Without loss of generality, we assume that $\gcd(\alpha, \beta)=k$, where $k\neq 1$---symmetric arguments can be made for other cases. We can then note that
       $P=
       \big(\begin{smallmatrix}
           k\alpha' & k\beta'\\
           \gamma & \delta
       \end{smallmatrix}\big)=
       \big(\begin{smallmatrix}
           k & 0\\
           0 & 1
       \end{smallmatrix}\big)
       \big(\begin{smallmatrix}
           \alpha' & \beta'\\
           \gamma & \delta
       \end{smallmatrix}\big).$
   Therefore, there exists a nontrivial decomposition of $P$ and there exists a rank-preserving endomorphism represented by $P$ that is reducible.\par
   If $P=(v_1, v_2)$ for $v_1, v_2\in\mathbb{N}^2$ and $v_1$ dominates $v_2$, then it is easily verifies that $P$ can be factorised as
       $P=\big(\begin{smallmatrix}
           \alpha & \beta\\
           \gamma & \delta
       \end{smallmatrix}\big)=\big(\begin{smallmatrix}
           \alpha-\beta & \beta\\
           \gamma-\delta & \delta
       \end{smallmatrix}\big)\big(\begin{smallmatrix}
           1 & 0\\
           1 & 1
       \end{smallmatrix}\big)$.
      Therefore, there exists a nontrivial decomposition of $P$ and there exists a rank-preserving endomorphism represented by $P$ that is reducible.
\end{proof}
The following example illustrates Proposition \ref{prop: gcd or minmax}.
\begin{example}
    Let $P=\big( \begin{smallmatrix}
        2 & 1 \\ 4 & 7\end{smallmatrix}\big)=\left(\begin{smallmatrix}
            1 & 0 \\ 1 & 1
        \end{smallmatrix}\right)\left(\begin{smallmatrix}
            2 & 1 \\ 2 & 6
        \end{smallmatrix}\right)$ be an incidence matrix. Note that $P$ satisfies the second condition of Proposition \ref{prop: gcd or minmax}. Hence, there exists a reducible rank-preserving endomorphism $\varphi:\{a,b\}^+\rightarrow\{a,b\}^+$ such that $P(\varphi)=P$. Indeed, for  $\varphi(a)=a^2b, \varphi(b)=a^2ba^2b^6$ and, for rank-preserving endomorphisms $\psi_1, \psi_2:\{a,b\}^+\rightarrow\{a,b\}^+$ defined as $\psi_1(a)=a, \psi_1(b)=ab, \psi_2(a)=a^2b, \psi_2(b)=aab^6$, we can see that $\varphi=\psi_2\circ \psi_1$.
\end{example}
Hence, Proposition \ref{prop: gcd or minmax} implies that matrices need to be ``non-coprime'' and ``non-dominated'', i.e. not satisfy both conditions i) and ii) in Proposition \ref{prop: gcd or minmax}, if they represent exclusively irreducible rank-preserving endomorphisms.\par
Together, these results therefore establish a link between the structure of an incidence matrix and the \mbox{(ir-)reducibility} of its corresponding rank-preserving endomorphisms, as well as the benefits and limitations of this approach.
\section{Conclusion}
In this paper, after noting that rank-reducing endomorphisms can always be nontrivially factorised, we have investigated the (ir-)reducibility of rank-preserving endomorphisms in the endomorphism monoid of the free semigroup.  In Theorem \ref{thm:d-p reducible}, we have characterised \emph{reducibility} using factor bases (given in Definition \ref{def:factor basis}) as a key tool. In Theorem \ref{thm:char_factor_new} we have continued to characterise when, given endomorphisms $\mu$ and $\varphi$, the endomorphism $\mu$ is a \emph{factor} of $\varphi$. Furthermore, we have shown that the factorisation of a rank-preserving endomorphism into irreducible components is not unique, and how we can identify which factor bases correspond to the most derived factorisation. In Section \ref{sec:uniqueness}, we have given some necessary and sufficient conditions for a large class of rank-preserving binary endomorphisms to have a unique factorisation. In Section \ref{sec:Incidence matrices}, we have used incidence matrices to investigate the factorisation of rank-preserving endomorphisms, extending results already given for non-negative integer matrices. The most immediate open problems are:
\begin{enumerate}[i)]
    \item determining characteristic conditions for rank-preserving endomorphisms to have a unique factorisation into irreducible components,
    \item extending the investigations of uniqueness of factorisations and incidence matrices to larger alphabet sizes,
    \item establishing the computational complexity of the decision procedure arising from the proof of Theorem \ref{thm:char_factor_new}.
\end{enumerate}
\bibliographystyle{eptcs}
\bibliography{generic}

@article{Ehr78,
    author = {A.~Ehrenfeucht and G.~Rozenberg},
    title ={Simplifications of homomorphisms} ,
    journal = {Information and Control},
    volume={38},
    number={3},
    pages={298--309},
    doi={10.1016/S0019-9958(78)90095-5},
    year = {1978}
}

@book{Reu09,
    author = {C.~Reutenauer and D.~Perrin and J.~Berstel} ,
    title = {Codes and Automata},
    publisher = {Cambridge University Press} ,
    collection={Encyclopedia of Mathematics and its Applications},
    doi={10.1017/CBO9781139195768},
    year = {2009}
}

@article{Kar84,
    author = {J.~Karhum{\"{a}}ki},
    title = {A note on intersections of free submonoids of a free monoid} ,
    journal = {Semigroup Forum} ,
    volume = {29},
    issue={1},
    pages= {183--205},
    doi={10.1007/BF02573324},
    year = {1984}
}

@mastersthesis{Hei23,
    author = {G.~Heilbrunn},
    title = {Decomposition of Nonnegative Integer-entry Matrices},
    school = {Franklin and Marshall College},
    year = {2023},
    url={https://digital.fandm.edu/islandora/decomposition-nonnegative-integer-entry-matrices}
}

@article{Cau03,
    author = {S.~Cautis and F.~Mignosi and J.~Shallit and M.~Wang and S.~Yazdani},
    title = {Periodicity, morphisms and matrices},
    journal = {Theoretical Computer Science},
    volume ={295},
    pages={107--121},
    number={1},
    doi={10.1016/S0304-3975(02)00398-5},
    year = {2003}
}

@article{Hon19,
    author = {J.~Honkala},
    title = {A characterization of free pairs of upper triangular free monoid morphisms},
    journal = {Information and Computation},
    volume={267},
    pages={110--115},
    doi={10.1016/j.ic.2019.03.007},
    year = {2019}
}

@article{Hon24,
    author = {J.~Honkala},
    title = {Commuting upper triangular binary morphisms} ,
    journal = {Fundamenta Informaticae} ,
    url        = {https://fi.episciences.org/10954},
    doi        = {10.46298/fi.10954},
    volume={191, Issues 3--4: Iiro Honkala's 60 Birthday},
    issue={3--4},
    pages={285--298},
    year = {2024}
}

@article{Bea23,
    author = {M.-P.~ B\'{e}al and D. Perrin and A. Restivo},
    title = {Recognizability of morphisms} ,
    journal = {Ergodic Theory and Dynamical Systems} ,
    volume={43},
    number={11},
    Doi={10.1017/etds.2022.109},
    pages={3578--3602},
    year = {2023}
}

@article{Bae22,
    author = {N.~R.~Baeth and H.~Chen and G.~Heilbrunn and R.~Liu and M.~Young},
    title = {Semigroups of non-negative integer-valued matrices},
    journal ={Communications in Algebra} ,
    volume={50},
    number={3},
    pages={1199--1219},
    doi={10.1080/00927872.2021.1979569},
    year = {2022}
}

@article{Bae11,
    author = {N.~R. ~Baeth and V.~Ponomarenko and D.~ Adams and R.~Ardila and D.~Hannasch and A.~Kosh and H.~McCarthy and R.~Rosenbaum},
    title = {Number theory of matrix semigroups},
    journal = {Linear Algebra and its Applications},
    volume={434},
    number={3},
    pages={694--711},
    doi={10.1016/j.laa.2010.09.028},
    year = {2011}
}

@article{Bae15,
    author = {N.~R.~Baeth and D.~Smertnig},
    title = {Factorization theory: From commutative
to noncommutative settings},
    journal = {Journal of Algebra},
    volume={441},
    pages={475--551},
    doi={10.1016/j.jalgebra.2015.06.007},
    year = {2015}
}

@article{Bae19,
    author = {N.~R.~Baeth and M.~Enlow},
    title = {Multiplicative factorization in numerical semigroups},
    journal = {International Journal of Algebra and Computation},
    volume={30},
    number={2},
    doi={10.1142/S0218196720500058},
    year ={2020},
    pages={419--430}
}

@article{Har04,
    author = {T.~ Harju and J. Karhum{\"{a}}ki},
    title = {Many aspects of defect theorems},
    journal = {Theoretical Computer Science},
    volume={324},
    number={1},
    pages={35--54},
    doi={10.1016/j.tcs.2004.03.051},
    year = {2004}
}

@article{Ner90,
    author = {J.~N\'{e}raud},
    title = {Elementariness of a finite set of words is co-NP-complete},
    journal = {RAIRO-Theor. Inf. Appl.},
    year = {1990},
    volume = {24},
    number = {5},
    doi={10.1051/ita/1990240504591},
    pages = {459--470}
}
\newpage
\appendix
\section{ Exhaustive Search} \label{Appendix: sieve}
\begin{table}[h]
  \centering
  \tiny
  \begin{adjustbox}{center,max width=\linewidth}
    \begin{tabular}{||c||c|c|c|c|c|c||}
      \hline
      \hline
     & $\psi_1(a)=a$ &  $\psi_1(a)=a$ & $\psi_1(a)=a$ 
        & $\psi_1(a)=a$ & $\psi_1(a)=a$ & $\psi_1(a)=a$ \\
      & $\psi_1(b)=b$ & $\psi_1(b)=ba$ & $\psi_1(b)=ab$ & $\psi_1(b)=bb$
    & $\psi_1(b)=bbb$ & $\psi_1(b)=bba$ \\
      \hline
      \hline
   $\psi_2(a)=a$ & $\varphi(a)=a$ & $\varphi(a)=a$ & $\varphi(a)=a$ & $\varphi(a)=a$ & $\varphi(a)=a$ & $\varphi(a)=a$ \\
  $\psi_2(b)=b$  & $\varphi(b)=b$ & $\varphi(b)=ba$ & $\varphi(b)=ab$ & $\varphi(b)=bb$ & $\varphi(b)=bbb$ & $\varphi(b)=bba$ \\
  \hline
  $\psi_2(a)=a$ & $\varphi(a)=a$ & $\varphi(a)=a$ & $\varphi(a)=a$ & $\varphi(a)=a$ & $\varphi(a)=a$ & $\varphi(a)=a$ \\
  $\psi_2(b)=ba$  & $\varphi(b)=ba$ & $\varphi(b)=baa$ & $\varphi(b)=aba$ & $\varphi(b)=baba$ & $\varphi(b)=bababa$ & $\varphi(b)=babaa$ \\
  \hline
    $\psi_2(a)=a$ & $\varphi(a)=a$ & $\varphi(a)=a$ & $\varphi(a)=a$ & $\varphi(a)=a$ & $\varphi(a)=a$ & $\varphi(a)=a$ \\
  $\psi_2(b)=ab$  & $\varphi(b)=ab$  & $\varphi(b)=aba$ & $\varphi(b)=aab$ & $\varphi(b)=abab$ & $\varphi(b)=ababab$ & $\varphi(b)=ababa$ \\
  \hline
    $\psi_2(a)=a$ & $\varphi(a)=a$ & $\varphi(a)=a$ & $\varphi(a)=a$ & $\varphi(a)=a$ & $\varphi(a)=a$ & $\varphi(a)=a$ \\
  $\psi_2(b)=bb$  & $\varphi(b)=bb$ & $\varphi(b)=bba$ & $\varphi(b)=abb$ & $\varphi(b)=bbbb$ & $\varphi(b)=bbbbbb$ & $\varphi(b)=bbbba$ \\
  \hline
    $\psi_2(a)=a$ & $\varphi(a)=a$ & $\varphi(a)=a$ & $\varphi(a)=a$ & $\varphi(a)=a$ & $\varphi(a)=a$ & $\varphi(a)=a$ \\
  $\psi_2(b)=aa$  & $\varphi(b)=aa$ & $\varphi(b)=aaa$ & $\varphi(b)=aaa$ & $\varphi(b)=aaaa$ & $\varphi(b)=aaaaaa$ & $\varphi(b)=aaaaa$ \\
  \hline
    $\psi_2(a)=a$ & $\varphi(a)=a$ & $\varphi(a)=a$ & $\varphi(a)=a$ & $\varphi(a)=a$ & $\varphi(a)=a$ & $\varphi(a)=a$ \\
  $\psi_2(b)=bbb$  & $\varphi(b)=bbb$ & $\varphi(b)=bbba$ & $\varphi(b)=abbb$ & $\varphi(b)=bbbbbb$ & $\varphi(b)=b^9$ & $\varphi(b)=bbbbbba$ \\
  \hline
    $\psi_2(a)=a$ & $\varphi(a)=a$ & $\varphi(a)=a$ & $\varphi(a)=a$ & $\varphi(a)=a$ & $\varphi(a)=a$ & $\varphi(a)=a$ \\
  $\psi_2(b)=bba$  & $\varphi(b)=bba$ & $\varphi(b)=bbaa$ & $\varphi(b)=abba$ & $\varphi(b)=bbabba$ & $\varphi(b)=bbabbabba$ & $\varphi(b)=bbabbaa$ \\
  \hline
    $\psi_2(a)=a$ & $\varphi(a)=a$ & $\varphi(a)=a$ & $\varphi(a)=a$ & $\varphi(a)=a$ & $\varphi(a)=a$ & $\varphi(a)=a$ \\
  $\psi_2(b)=bab$  & $\varphi(b)=bab$ & $\varphi(b)=baba$ & $\varphi(b)=abab$ & $\varphi(b)=babbab$ & $\varphi(b)=babbabbab$ & $\varphi(b)=babbaba$ \\
  \hline
    $\psi_2(a)=a$ & $\varphi(a)=a$ & $\varphi(a)=a$ & $\varphi(a)=a$ & $\varphi(a)=a$ & $\varphi(a)=a$ & $\varphi(a)=a$ \\
  $\psi_2(b)=abb$  & $\varphi(b)=abb$ & $\varphi(b)=abba$ & $\varphi(b)=aabb$ & $\varphi(b)=abbabb$ & $\varphi(b)=abbabbabb$ & $\varphi(b)=abbabba$ \\
  \hline
    $\psi_2(a)=a$ & $\varphi(a)=a$ & $\varphi(a)=a$ & $\varphi(a)=a$ & $\varphi(a)=a$ & $\varphi(a)=a$ & $\varphi(a)=a$ \\
  $\psi_2(b)=aab$  & $\varphi(b)=aab$ & $\varphi(b)=aaba$ & $\varphi(b)=aaab$ & $\varphi(b)=aabaab$ & $\varphi(b)=aabaabaab$ & $\varphi(b)=aabaaba$ \\
  \hline
    $\psi_2(a)=a$ & $\varphi(a)=a$ & $\varphi(a)=a$ & $\varphi(a)=a$ & $\varphi(a)=a$ & $\varphi(a)=a$ & $\varphi(a)=a$ \\
  $\psi_2(b)=aba$  & $\varphi(b)=aba$ & $\varphi(b)=abaa$ & $\varphi(b)=aaba$ & $\varphi(b)=abaaba$ & $\varphi(b)=abaabaaba$ & $\varphi(b)=abaabaa$ \\
  \hline
    $\psi_2(a)=a$ & $\varphi(a)=a$ & $\varphi(a)=a$ & $\varphi(a)=a$ & $\varphi(a)=a$ & $\varphi(a)=a$ & $\varphi(a)=a$ \\
  $\psi_2(b)=baa$  & $\varphi(b)=baa$ & $\varphi(b)=baaa$ & $\varphi(b)=abaa$ & $\varphi(b)=baabaa$ & $\varphi(b)=baabaabaa$ & $\varphi(b)=baabaaa$ \\
  \hline
  \hline
    \end{tabular}
  \end{adjustbox}
  \caption{Exhaustive Search Part 1 : Shows the composition $\varphi=\psi_2\circ \psi_1$. In Example \ref{ex:first irreducible}, we show that $\varphi_2$ is irreducible as we exhaustively consider all compositions of rank-preserving endomorphisms $\psi_1, \psi_2$ such that $|(\psi_2\circ\psi_1)(a)|=1$ and $|(\psi_2\circ\psi_1)(b)|\leq3$.}
  \label{tab:placeholder1}
\end{table}
\begin{table}[h!]
  \centering
  \tiny
  \begin{adjustbox}{center,max width=\linewidth}
    \begin{tabular}{||c||c|c|c|c|c||}
      \hline
      \hline
      & $\psi_1(a)=a$ & $\psi_1(a)=a$ & $\psi_1(a)=a$ & $\psi_1(a)=a$
        & $\psi_1(a)=a$   \\
      & $\psi_1(b)=bab$ & $\psi_1(b)=abb$ & $\psi_1(b)=aab$ & $\psi_1(b)=aba$
        & $\psi_1(b)=baa$  \\
      \hline
      \hline
   $\psi_2(a)=a$ & $\varphi(a)=a$ &$\varphi(a)=a$ & $\varphi(a)=a$ & $\varphi(a)=a$ & $\varphi(a)=a$ \\
  $\psi_2(b)=b$  & $\varphi(b)=bab$ & $\varphi(b)=abb$ & $\varphi(b)=aab$ & $\varphi(b)=aba$ & $\varphi(b)=baa$  \\
  \hline
  $\psi_2(a)=a$ & $\varphi(a)=a$ &$\varphi(a)=a$ & $\varphi(a)=a$ & $\varphi(a)=a$ & $\varphi(a)=a$  \\
  $\psi_2(b)=ba$  & $\varphi(b)=baaba$ & $\varphi(b)=ababa$ & $\varphi(b)=aaba$ & $\varphi(b)=abaa$ & $\varphi(b)=baaa$  \\
  \hline
    $\psi_2(a)=a$ & $\varphi(a)=a$ &$\varphi(a)=a$ & $\varphi(a)=a$ & $\varphi(a)=a$ & $\varphi(a)=a$  \\
  $\psi_2(b)=ab$  & $\varphi(b)=abaab$ & $\varphi(b)=aabab$ & $\varphi(b)=aaab$ & $\varphi(b)=aaba$ & $\varphi(b)=abaa$ \\
  \hline
    $\psi_2(a)=a$ & $\varphi(a)=a$ &$\varphi(a)=a$ & $\varphi(a)=a$ & $\varphi(a)=a$ & $\varphi(a)=a$ \\
  $\psi_2(b)=bb$  & $\varphi(b)=bbabb$ & $\varphi(b)=abbbb$ & $\varphi(b)=aabb$ & $\varphi(b)=abba$ & $\varphi(b)=bbaa$  \\
  \hline
    $\psi_2(a)=a$ & $\varphi(a)=a$ &$\varphi(a)=a$ & $\varphi(a)=a$ & $\varphi(a)=a$ & $\varphi(a)=a$  \\
  $\psi_2(b)=aa$  & $\varphi(b)=aaaaa$ & $\varphi(b)=aaaaa$ & $\varphi(b)=aaaa$ & $\varphi(b)=aaaa$ & $\varphi(b)=aaaa$  \\
  \hline
    $\psi_2(a)=a$ & $\varphi(a)=a$ &$\varphi(a)=a$ & $\varphi(a)=a$ & $\varphi(a)=a$ & $\varphi(a)=a$  \\
  $\psi_2(b)=bbb$  & $\varphi(b)=bbbabbb$ & $\varphi(b)=abbbbbb$ & $\varphi(b)=aabbb$ & $\varphi(b)=abbba$ & $\varphi(b)=bbbaa$  \\
  \hline
    $\psi_2(a)=a$ & $\varphi(a)=a$ &$\varphi(a)=a$ & $\varphi(a)=a$ & $\varphi(a)=a$ & $\varphi(a)=a$  \\
  $\psi_2(b)=bba$  & $\varphi(b)=bbaabba$ & $\varphi(b)=abbabba$ & $\varphi(b)=aabba$ & $\varphi(b)=abbaa$ & $\varphi(b)=bbaaa$ \\
  \hline
    $\psi_2(a)=a$ & $\varphi(a)=a$ &$\varphi(a)=a$ & $\varphi(a)=a$ & $\varphi(a)=a$ & $\varphi(a)=a$  \\
  $\psi_2(b)=bab$  & $\varphi(b)=bababab$ & $\varphi(b)=ababbab$ & $\varphi(b)=aabab$ & $\varphi(b)=ababa$ & $\varphi(b)=babaa$  \\
  \hline
    $\psi_2(a)=a$ & $\varphi(a)=a$ &$\varphi(a)=a$ & $\varphi(a)=a$ & $\varphi(a)=a$ & $\varphi(a)=a$  \\
  $\psi_2(b)=abb$  & $\varphi(b)=abbaabb$ & $\varphi(b)=aabbabb$ & $\varphi(b)=aaabb$ & $\varphi(b)=aabba$ & $\varphi(b)=abbaa$  \\
  \hline
    $\psi_2(a)=a$ & $\varphi(a)=a$ &$\varphi(a)=a$ & $\varphi(a)=a$ & $\varphi(a)=a$ & $\varphi(a)=a$  \\
  $\psi_2(b)=aab$  &$\varphi(b)=aabaaab$ & $\varphi(b)=aaabaab$ & $\varphi(b)=aaaab$ & $\varphi(b)=aaaba$ & $\varphi(b)=aabaa$ \\
  \hline
    $\psi_2(a)=a$ & $\varphi(a)=a$ &$\varphi(a)=a$ & $\varphi(a)=a$ & $\varphi(a)=a$ & $\varphi(a)=a$  \\
  $\psi_2(b)=aba$  & $\varphi(b)=abaaaba$ & $\varphi(b)=aabaaba$ & $\varphi(b)=aaaba$ & $\varphi(b)=aabaa$ & $\varphi(b)=abaaa$  \\
  \hline
    $\psi_2(a)=a$ & $\varphi(a)=a$ &$\varphi(a)=a$ & $\varphi(a)=a$ & $\varphi(a)=a$ & $\varphi(a)=a$  \\
  $\psi_2(b)=baa$  & $\varphi(b)=baaabaa$ & $\varphi(b)=abaabaa$ & $\varphi(b)=aabaa$ & $\varphi(b)=abaaa$ & $\varphi(b)=baaaa$ \\

  \hline
  \hline
    \end{tabular}
  \end{adjustbox}
  \caption{Exhaustive Search Part 2 : Shows the composition $\varphi=\psi_2\circ \psi_1$. In Example \ref{ex:first irreducible}, we show that $\varphi_2$ is irreducible as we exhaustively consider all compositions of rank-preserving endomorphisms $\psi_1, \psi_2$ such that $|(\psi_2\circ\psi_1)(a)|=1$ and $|(\psi_2\circ\psi_1)(b)|\leq3$.}
  \label{tab:placeholder2}
\end{table}
\end{document}